        \newtheorem{theorem}{Theorem}[section]
        \newtheorem{proposition}[theorem]{Proposition}
        \newtheorem{lemma}[theorem]{Lemma}
        \newtheorem{definition}[theorem]{Definition}
        \newtheorem{remark}[theorem]{Remark}
\numberwithin{equation}{section}
\journal{}
\numberwithin{equation}{section}
\newcommand \auth {\textsc} 
\newcommand \jou  {\textit}
\newcommand \tr {\mathrm{tr }}
\newcommand \gthree {{}^{(3)}g}
\newcommand \uh {\widehat u}
\newcommand \mut {\widetilde \mu}
\newcommand \lbrac \llbracket 
\newcommand \rbrac \rrbracket 
\newcommand \R {\mathbb{R}}
\newcommand \RR {\mathbb{R}}
\newcommand \del  {\partial}  
\newcommand \loc {\text{loc}}
\newcommand \be {\begin{equation}}
\newcommand \ee {\end{equation}}
\newcommand \Rcal {\mathcal{R}}
\newcommand \Mcal {\mathcal{M}} 
\newcommand \Scal {\mathcal{S}}
\newcommand \Wcal {\mathcal{W}}
\newcommand \lam \lambda
\newcommand \sig \sigma
\newcommand \gam \gamma
\newcommand \Hcal {\mathcal{H}}
\newcommand \eps \epsilon
\newcommand \Lam \Lambda
\newcommand \la \langle
\newcommand \ra \rangle
\begin{document}
\title{Weakly regular Einstein--Euler spacetimes with Gowdy symmetry. 
The global areal foliation} 
\author{Nastasia Grubic\footnote{Laboratoire Jacques-Louis Lions
\& Centre National de la Recherche Scientifique, 
Universit\'e Pierre et Marie Curie (Paris 6),
4 Place Jussieu, 75252 Paris, France. 
{\it Email:} grubic@ann.jussieu.fr, contact@philippelefloch.org. 
{\it Blog: } philippelefloch.org. \hfill
\\
Accepted for publication in {\sl Archive for Rational Mechanics and Analysis.}} and Philippe G. LeFloch$^1$ 
}                   

\date{December 2012}

\begin{abstract}
 We consider weakly regular Gowdy--symmetric spacetimes on $T^3$ satisfying the Einstein--Euler equations of general relativity, and we solve the initial value problem when the initial data set has bounded variation, only, so that the corresponding spacetime may contain impulsive gravitational waves as well as shock waves. By analyzing, both, future expanding and future contracting spacetimes, we establish the existence of a global foliation by spacelike  hypersurfaces so that the time function coincides with the area of the surfaces of symmetry and asymptotically approaches infinity in the expanding case and zero in the contracting case. More precisely, the latter property in the contracting case holds provided the mass density does not exceed a certain threshold, which is a natural assumption since certain exceptional data with sufficiently large mass density are known to give rise to a Cauchy horizon, on which the area function attains a positive value. An earlier result by LeFloch and Rendall assumed a different class of weak regularity and did not determine the range of the area function in the contracting case. Our method of proof is based on a version of the random choice scheme adapted to the Einstein equations for the symmetry and regularity class under consideration. We also analyze the Einstein constraint equations under weak regularity. 
\end{abstract} 

\maketitle

\section{Introduction} 
  
We study here the class of Gowdy--symmetric spacetimes on $T^3$ when the matter model is chosen to be a compressible perfect fluid. We formulate the initial value problem when an initial data set is prescribed on a spacelike hypersurface and we search for a corresponding solution to the Einstein--Euler equations. In comparison with earlier works on vacuum spacetimes  \cite{EardleyMoncrief,Gowdy,IsenbergMoncrief,IsenbergWeaver,Ringstrom1,Ringstrom2,LeFlochSmulevici}
or on the coupling with the Vlasov equation of kinetic theory \cite{Andreasson,DafermosRendall,Rendall1,Rendall-crush,Smulevici}, dealing with compressible fluids is comparatively more challenging, since shock waves generically form in the fluid~\cite{RendallStahl} and one must work with weak solutions to the Einstein--Euler equations. 

In the present paper, building upon work by LeFloch and co-authors in \cite{LeFloch-prepa}--\cite{LeFlochStewart2}, 
we deal with the initial value problem associated with the Einstein--Euler equations under  weak regularity conditions, only: the first--order derivatives of the metric coefficients and the fluid variables are assumed to have bounded variation, which permits impulsive gravitational waves in the geometry and shock waves in the fluid. Analyzing, both, future expanding and future contracting spacetimes, we establish the existence of a global foliation by spacelike hypersurfaces, so that the time function coincides with the area of the surfaces of symmetry and asymptotically approaches $+\infty$ in the expanding case and $0$ in the contracting case. More precisely, the latter holds provided certain exceptional initial data are excluded which, instead, give rise to a Cauchy horizon \cite{Rendall-book,WainwrightEllis}. 
 
Recall that LeFloch and Rendall \cite{LeFlochRendall} established the existence of weak solutions to the initial-value problem, but 
assumed a different class of regularity and did not describe the interval of variation of the area function in the contracting case. 
The method of proof introduced in the present paper is based on a generalized version of the random choice scheme originally introduced by Glimm~\cite{Glimm} (cf.~the textbooks~\cite{Dafermos-book,LeFloch-book}) which we need to adapt to the Einstein equations and to the (Gowdy) symmetry class under consideration. In order to show the convergence of the proposed scheme, we derive a uniform bound on the sup-norm and the total variation of, both, first-order derivatives of the essential metric coefficients and the matter variables.  Metrics with low regularity were considered earlier in \cite{Christodoulou,GroahTemple} for the class of radially symmetric spacetimes, which, however, do not permit gravitational waves which are of main interest in the present work. 

We thus consider spacetimes $(\Mcal, g)$ satisfying the Einstein--Euler equations 
\be
\label{EE1}
{G_\alpha}^\beta = {T_\alpha}^\beta 
\ee
and study the initial value problem when an initial data set, describing the initial geometry and fluid variables, is prescribed on a spacelike hypersurface with $3$--torus topology $T^3$; moreover, we assume that the initial data set is Gowdy--symmetric, that is, are invariant under the Lie group $T^2$ with vanishing twist constants. (See Section~2, below, for the precise definitions.)
The left-hand side of \eqref{EE1} is the Einstein tensor $G_{\alpha \beta} := R_{\alpha\beta} - (R/2) g_{\alpha\beta}$ 
and describes the geometry of the spacetime, while $R_{\alpha\beta}$ denotes the Ricci curvature tensor associated with $g$. By convention, all Greek indices take here the values $0, \ldots, 3$.

The stress-energy tensor of a perfect compressible fluid reads 
\be
\label{stren}
{T_\alpha}^\beta := (\mu + p) \, u_\alpha u^\beta + p \, {g_\alpha}^\beta,
\ee
where $\mu$ represents the mass--energy density of the fluid and $u$ its future-oriented, unit time-like, velocity field. 
We also assume that the pressure $p$ depends linearly upon $\mu$, i.e.
\be 
\label{eq:pressure} 
p:=k^2 \mu,
\ee
in which the constant $k\in (0,1)$ represents the speed of sound in the fluid and does not exceed the speed of
light,  normalized to be unit.
  
Two main results are established in the present paper, for expanding spacetimes in Section~\ref{sec:6} (cf.~Theorem~\ref{maintheo})
and for contracting spacetimes in Section~\ref{sec:7}  (cf.~Theorem~\ref{maintheo2}). The main elements of proof are as follows: 

\begin{itemize}
  
\item Our first task is to formulate the initial value problem for the Einstein-Euler equations by solely assuming weak regularity conditions on the initial data set. Specifically, in areal coordinates $(t,\theta)$, the fluid variables as well as the first--order derivatives of the metric coefficients are assumed to have bounded variation in space, in the sense that their (first--order) distributional derivatives are bounded measures. This is so except for one metric coefficient which is denoted by $a$ (see Section~2, below) and describes the conformal metric of the quotient spacetime under the group action of the  Gowdy symmetry. The time-derivative $a_t$ has bounded variation in time, but no such regularity is imposed (nor available) for the spatial derivative $a_\theta$. Cf.~Sections \ref{sec:2} and \ref{sec:3}, below, for further details. 

\item Next, in Section~\ref{sec:4}, we study special solutions to the Einstein--Euler system. On one hand, we analyze the Riemann problem
(first studied in \cite{Thompson,SmollerTemple}, as far as the fluid variables in the flat Minkowski geometry are concerned) associated with a homogeneous version of the Einstein--Euler system. On the other hand, we study here an ordinary differential system which takes into account the source terms, only, after neglecting the propagation--related terms of the Einstein--Euler system. Here, a particular definition of the ``main geometric variables'' is essential in order for certain linear interaction estimates to hold and in order to avoid any resonance phenomena, due to the possibility of vanishing wave speeds. 

\item A generalization of the random choice scheme for the Einstein--Euler equations is introduced in Section~\ref{sec:5}, which allows us to establish a continuation criterion. In short, solutions to the Einstein--Euler equations are proven to exist as long as the mass-energy density variable $\mu$ (or, more precisely, the geometric coefficient $a$) does not blow-up. 

\item Based on the above arguments, we construct solutions to the initial value problem and investigate the global geometry associated with initial data sets generating future expanding (in Section~\ref{sec:6}) and future contracting (in Section~\ref{sec:7}) spacetimes. We prove that, in the expanding case, the areal foliation is defined for all values of the area function, while, in the contracting case, we establish the same global result for initial data with ``sufficiently small'' mass density. The latter is a natural assumption since certain exceptional data (with sufficiently large mass density) are known to give rise to a Cauchy horizon.  

\end{itemize} 

\section{Einstein--Euler spacetimes with weak regularity}  
\label{sec:2}

\subsection{Euler equations} 

Our first task is to introduce a definition of weak solutions to the Einstein--Euler equations. We focus here on the formulation 
of the Euler equations, and we refer to \cite{LeFlochSmulevici} for a detailed treatment of the (vacuum) Einstein equations.
We are given a smooth manifold $\Mcal$ (described by sufficiently regular transition maps) together 
with a locally Lipschitz continuous Lorentzian metric $g$. We suppose that this spacetime is foliated 
by compact spacelike hypersurfaces $\Hcal_t$ (diffeomorphic to a fixed three--manifold $\Hcal$), i.e.
$$
\Mcal = \bigcup_{t \in I} \Hcal_t,
$$
determined as level sets of  a locally Lipschitz continuous time function $t: \Mcal \to I \subset \RR$ ($I$ being an interval) so that $g^{\alpha\beta} \del_\alpha t$ is a future-oriented timelike vector. The lapse function $N$ and the future-directed, unit normal $n=(n^\alpha)$ are then defined as
$$
N := \big( -g^{\alpha\beta} \del_\alpha t \, \del_\beta t \big)^{-1/2}, 
\qquad
n^\alpha := -Ng^{\alpha\beta} \del_\beta t.
$$
These fields are measurable and bounded, only, that is, belong to $L^\infty_\loc(\Mcal)$. We assume, in addition,  
that all fields in $L^\infty_\loc(\Mcal)$ actually admit well-defined values on each hypersurface $\Hcal_t$, which also belong to 
$L^\infty(\Hcal_t)$. We write, in short, $N, n^\alpha \in L^\infty_\loc(L^\infty)$, by specifying first the regularity in time and, then, the regularity in space. In practice, these objects will also enjoy certain continuity properties in time, as we will state it below. 
Furthermore, we impose that the lapse function is also locally Lipschitz continuous ---which implies that the volume forms of, both, the Lorentzian metric and the induced Riemannian metric are also locally Lipschitz continuous.   

Under the above assumptions, we decompose the Lorentzian metric in the form 
$$
g = -N^2 dt^2 + \gthree, 
$$
in which $\gthree$ is the induced metric on the slices of the foliation. In local coordinates
 $(t,x^a)$ adapted to the foliation, we write $\gthree = (g_{ab})$ with $a,b = 1,2,3$. 
We introduce the Levi--Cevita connection $\nabla_\alpha$ associated with $g$,
 as well as the Levi--Cevita connection $ \nabla_a$ associated with $\gthree$, 
and we define the second fundamental form $k=k_t$ of a slice $\Hcal_t$ by  
$$
k(X,Y):=g(\nabla_X Y,n) = -g(\nabla_X n, Y)
$$ 
for any tangent vectors $X,Y$. 
Observe that, at this juncture, $\nabla_\alpha$, $\nabla_a$, and $k_{ab}$ belong to $L^\infty_\loc(L^\infty)$, only.

We now turn our attention to the Euler equations by first assuming that the fluid variables are locally Lipschitz continuous functions
(which is too strong an assumption to allow for shock waves and will be weakened in Definition~\ref{102}, below). Using the projection operator
$h^{\alpha\beta} := g^{\alpha\beta} + n^\alpha n^\beta$, we decompose the stress--energy tensor \eqref{stren}
by setting 
\be
\rho:=T^{\alpha\beta}n_\alpha n_\beta, 
\qquad 
j^\alpha := -T^{\gamma\beta}h_\gamma^\alpha n_\beta, 
\qquad 
S^{\alpha\beta}:=T^{\gamma\delta}h_\gamma^\alpha h_\delta^\beta, 
\ee
so that
$$
T^{\alpha\beta} = \rho n^\alpha n^\beta + j^\alpha n^\beta + j^\beta n^\alpha + S^{\alpha\beta}.
$$
In coordinates $x=(x^\alpha) = (t, x^a)$ adapted to the foliation, we have $n^0 = 1/N$, $n^a = 0$, $j^0 = 0$, and $S^{0\alpha}=0$ and, in order to simplify the notation, we define $\omega:=(\det(\gthree))^{1/2}$,  
so that $| \det(g) |^{1/2} = N \, \omega$. 

The Bianchi identities imply the Euler equations  $\nabla_\alpha T_\beta^\alpha=0$, which yield  
\be
\label{eq:22}
\int_\Mcal  T_\beta^\alpha \nabla_\alpha\xi^\beta \, dV_g = 0  
\ee
for every smooth and compactly supported vector field $\xi$.
Here, the volume form $dV_g = |\det (g)|^{1/2} \, dx = N \omega \, dx$ is locally Lipschitz continuous, 
while the derivative operator $\nabla$ is $L_\loc^\infty$, only.

On one hand, by taking the vector field $\xi$ to be normal, that is, $\xi = \varphi\del_t$ for some smooth and  
compactly supported function $\varphi$, we deduce from \eqref{eq:22} that  
$$
\aligned 
0=& \int_\Mcal T^\alpha_\beta \del_t^\beta \del_\alpha\varphi \,  dV_g + \int_\Mcal T^\alpha_\beta\nabla_\alpha \del_t^\beta \,\varphi\, dV_g
\\
=&\int_\Mcal N^2 \big( \rho n^\alpha + j^\alpha \big) \del_\alpha\varphi\, \omega dx 
- \int_\Mcal N \rho n_\beta n^\alpha\nabla_\alpha \del_t^\beta \,\varphi \, \omega dx 
 - \int_\Mcal N\, S^\alpha_\beta\nabla_\alpha \del_t^\beta \,\varphi \, \omega dx,
\endaligned
$$
thus  
\be
\label{50} 
\int_\Mcal \Bigg( 
N \, \rho \, \del_t\varphi  +  N^2 j^a\del_a\varphi + \big( \rho \, \del_t N+  N^2 \, S^{ab}k_{ab} \big) \,\varphi \Bigg) \, \omega dx = 0.  
\ee
On the other hand, by taking $\xi$ to be tangential to the slices, that is, having $\xi^0 = dt(\xi) = 0$, we find 
$$
\aligned 
0=&\int_\Mcal T^\alpha_\beta \,\nabla_\alpha\xi^\beta \,  dV_g 
\\
=&\int_\Mcal N \,\Big( \rho n^\alpha n_\beta + j^\alpha n_\beta + j_\beta n^\alpha + S^\alpha_\beta \Big) \, \nabla_\alpha\xi^\beta \, \omega dx  
\\
=&-\int_\Mcal  \rho \, \del_b N \,\xi^b dx 
+ \int_\Mcal  N \, k_{ab} j^a \xi^b\, \omega^2 dx 
+ \int_\Mcal N \, j_\beta n^\alpha\nabla_\alpha \xi^\beta\, \omega dx 
\\
& 
+ \int_\Mcal N \, S^a_b \,\nabla_a\xi^b \, \omega dx.
\endaligned 
$$  
The third term on the right--hand side reads
$$
\aligned
\int_\Mcal N\omega j_\beta n^\alpha\nabla_\alpha \xi^\beta\, dx 
= & \int_\Mcal  j_\beta \del_t\xi^\beta \, \omega dx 
         + \int_\Mcal N \ j_\beta \xi^\alpha\nabla_\alpha n^\beta \, \omega dx
\\
= &\int_\Mcal j_b \del_t\xi^b \, \omega dx - \int_\Mcal N \, k_{ab} j^a \xi^b \, \omega dx 
\endaligned
$$
and, therefore, 
\be
\label{51}
\int_\Mcal \Bigg( j_b \del_t\xi^b + N \, S^a_b \, \nabla_a\xi^b - \rho \, \del_b N \,\xi^b \Bigg) \, \omega dx = 0. 
\ee

At this juncture, we observe that \eqref{50}--\eqref{51} make sense under a weak regularity assumption on the fluid variables, which motivates us to introduce the following definition. 

\begin{definition} 
\label{102}
Consider a spacetime $(\Mcal, g)$ with locally Lipschitz continuous metric endowed with a foliation whose lapse function $N$ is also locally Lipschitz continuous. Let $\mu > 0$ be a scalar field and $u = (u^\alpha)$ be a future-oriented, unit timelike vector field, both of them having solely $L^\infty_\loc$ regularity. 
Then, $(\mu, u)$ is called a {\rm weak solution to the Euler equations} \eqref{stren} if the two identities 
\be
\label{Euler} 
\aligned
& \int_{t \in I} \int_{\Hcal}
\Big( N \, \rho(\mu, u) \, \del_t\varphi  +  N^2 \, j^a(\mu, u)\del_a\varphi + \big( \rho(\mu, u) \, \del_t N
  + N^2 \, S^{ab}(\mu, u)k_{ab} \big) \, \varphi \Big) \, dt dV_{\gthree} = 0, 
\\
& \int_{t \in I} \int_{\Hcal} 
\Big( j_b(\mu, u) \, \del_t\xi^b + N \, S^a_b(\mu, u) \, \nabla_a\xi^b - \rho(\mu, u) \, \del_b N \, \xi^b \Big) \, dt dV_{\gthree}  = 0,
\endaligned
\ee
hold for all smooth and compactly supported fields $\varphi$ and $\xi^a$. 
\end{definition} 

In other words, observing that $dtdV_{\gthree} = \omega \, dx$, the conditions in the definition mean that 
\be
\label{Euler1} 
\aligned
& \del_t \big( N \omega \, \rho \big) + \del_a \big( N^2 \omega \, j^a \big) 
   = \omega \, \Big( \rho \, \del_t N + N^2 \, k_{ab}\, \, S^{ab} \Big),
\\
& \del_t \big( \omega j_b \big) + \omega \nabla_a \big( N \, S^a_b \big) 
= - \omega \, \rho \, \del_b N 
\endaligned
\ee
hold in the distribution sense in the coordinates $(t, x^a)$. 
This latter statement does make sense since the product $\omega\nabla_a(N \, S^a_b)$ 
is understood geometrically in a weak sense as precisely stated in the above definition. 
Our assumption that the lapse function $N$ is locally Lipschitz continuous implies that the terms $\del_t N$ and $\del_b N$ are bounded on each compact time interval so that the right--hand sides of \eqref{Euler1} are well-defined as locally bounded functions. 


From now on, following \cite{LeFlochRendall}, we restrict attention to spacetimes with Gowdy symmetry on $T^3$.  
Introducing coordinates $(t,x^1,x^B)$ (with $B=2,3$) that are compatible with this symmetry so that 
$(x^2,x^3)\in T^2$ correspond to the $2$--surfaces generated by the Killing fields, we can now rewrite the Euler equations. 
As observed in \cite{LeFlochRendall}, the Einstein's constraint equations (stated below) imply the second and third components of the fluid velocity vanish, i.e. 
$$
u^2 = u^3 = 0. 
$$ 
Consequently, the remaining non-vanishing components of the stress--energy tensor read 
$$
\aligned
&\rho = N^2 (\mu + p) \, (u^0)^2 - p, 
\qquad 
&& j^1 = N\, (\mu + p) \, u^0u^1,
\\
&S^{11} = (\mu + p)(u^1)^2 + p \, g^{11}, 
\qquad 
&& S^{BC} = p \, g^{BC}. 
\endaligned
$$
The term $\omega \nabla_a( N \, S^a_b)$ arising in \eqref{Euler1} can be simplified, as follows: 
$$
\aligned
\omega \nabla_a( N \, S^a_1) & = \del_1( N \omega S^1_1) - N \omega \, \Gamma^b_{1a} \, S^a_b 
\\
&= \del_1( N \omega S^1_1) - N \omega \, \Big( \Gamma^1_{11} \, S^1_1 + \Gamma^C_{1B} \, S^B_C \Big), 
\endaligned
$$
where the Christoffel symbols $\Gamma^1_{11},\Gamma^C_{1B} \in L^\infty_\loc(L^\infty)$ are given by
$$
\Gamma^C_{1B} = \frac{1}{2} g^{CD} \, \del_1 g_{DB}, 
\qquad 
\Gamma^1_{11} = \frac{1}{2} g^{11} \, \del_1 g_{11}.
$$
We can finally write the Euler equations in the form
\be
\label{55}
\aligned
& \del_t \big( N \omega \, \rho \big) + \del_1 \big( N^2\omega j^1 \big) 
= N \omega \, \Sigma_0,
\\
& \del_t \big( \omega j_1 \big) + \del_1 \big( N \omega S^1_1 \big) 
= \omega \, \Sigma_1,
\endaligned
\ee
where 
\be
\label{rhse}
\aligned
\Sigma_0 := \, & N \, \big(k_{11} (\mu + p) (u^1)^2 + p \, \tr k \big) + \big( N (\mu + p) (u^0)^2 - p/N \big) \, \del_t N,
\\
\Sigma_1 := & - \big( N^2 (\mu + p) (u^0)^2 - p \big) \, \del_1 N  +  \frac{1}{2}N \, p \, g^{BC} \del_1g_{BC}              
\\
&  \hskip4.cm + \frac{1}{2} N \, \del_1 g_{11} \big( (\mu + p) (u^1)^2 + p \, g^{11} \big). 
\endaligned
\ee
According to Definition~\ref{102}, the equations \eqref{55}--\eqref{rhse} are required to hold in the distributional sense in the coordinates $(t,x^1)$. 


\subsection{Weak formulation of the Einstein equations in areal coordinates}

We express the spacetime metric in areal coordinates, that is, 
\be
\label{areal}
g = e^{2(\eta-U)}(-a^2d t^2 + d\theta^2)+e^{2U}(dx + Ady)^2+e^{-2U}t^2dy^2, 
\ee
which depends upon four scalar functions $U, A, \eta, a$ depending on the variables $t, \theta$, with $a>0$. We assume that
$\theta \in S^1 = [0, 1]$ (with a periodic boundary condition), and we distinguish between two classes of initial data imposed on a 
slice $\Hcal_{t_0}$ and we solve in the future direction $t \geq t_0$:  
\be
\aligned
& \text{Future expanding spacetimes: } t_0 >0, 
\\
& \text{Future contracting spacetimes: } t_0 <0.   
\endaligned
\ee
The global geometry of the spacetimes is markedly different in each case, but yet many fondamental estimates are similar (up to a change of sign). 
At this juncture, we may expect that the time variable should describe the interval $[t_0, +\infty)$ in the expanding case and the interval $[t_0, 0)$ in the contracting case, although we will actually see that dealing with the contracting case is more involved.

After a tedious but straightforward computation from the Einstein equations, one obtains the evolution equations 
\be
\label{evolu1} 
\aligned
\big( t \, a^{-1}U_t \big)_t - \big( t \, a U_\theta \big)_\theta 
& = \frac{e^{4U}}{2ta} \big( A^2_t - a^2A^2_\theta \big) + ta\Pi^U, 
\\
\big( t^{-1} \, a^{-1}A_t \big)_t - \big( t^{-1} \, aA_\theta \big)_\theta 
&= -4t^{-1}a^{-1} \big( U_tA_t - a^2U_\theta A_\theta \big) + a\Pi^A,
\\
\big( t \, a^{-1}\eta_t \big)_t - \big( t \, a(\eta + \log a)_\theta \big)_\theta 
& = 2t \, a U^2_\theta +\frac{e^{4U}}{2ta} A^2_t + t \, a \Pi^\eta(t),
\endaligned
\ee
understood in the sense of distributions, where the lower--order matter terms are given by  
$$
\aligned
\Pi^U:=&\frac{1}{2}e^{2(\eta-U)}(\rho - S_1 + S_2 - S_3), 
\\
 \Pi^A :=& 2e^{2(\eta-2U)} S_{23}, 
\qquad \Pi^\eta := e^{2(\eta-U)} \big( \rho - S_3 \big), 
\endaligned
$$  
in terms of the spatial part $S_{\alpha\beta}$ of the energy-momentum tensor $T_{\alpha\beta}$. Here, relevant components of $S_{\alpha\beta}$ are given  
with respect to the orthonormal frame 
$$
\aligned
& e^0 := a^{-1}e^{-(\eta-U)}\del_t,	
\quad 
&& e^1 = e^{-(\eta-U)}\del_\theta,	
\\ 
& e^2 := e^{-U}\del_x,	
\quad 
&& e^3 := \frac{e^{U}}{t}(-A\del_x + \del_y), 
\endaligned
$$ 
and read
\be 
\label{projected}
\aligned
& S_1 := S(e^1,e^1) =  e^{2(\eta-U)} \, (\mu + p)(u^1)^2+ p, \quad &&S_ {23}  := S(e^2,e^3) = 0,
\\
& S_2 := S(e^2,e^2) = p, \quad &&S_3  := S(e^3,e^3) = p. 
\endaligned
\ee
By introducing the {\rm scalar velocity} 
$v :=\frac{u^1}{au^0}$
and using $(au^0)^2-(u^1)^2=e^{-2(\eta-U)}$, we can express $u^0,u^1$ in terms of $v$: 
$$
e^{2(\eta-U)}(u^0)^2=\frac{1}{a^2(1-v^2)}, \quad
 \quad e^{2(\eta-U)}(u^1)^2=\frac{v^2}{1-v^2}.
$$
Therefore, the matter part of the evolution equations simplifies and we have 
$$
\Pi^U=\frac{1}{2}e^{2(\eta-U)}(\mu - p),
\qquad \Pi^A = 0,
\qquad \Pi^\eta = e^{2(\eta-U)}\frac{\mu + p}{1-v^2}.
$$
Finally, we can state Einstein's constraint equations 
\be
\label{constraint}
\aligned
&\frac{\eta_t}{t}= (U^2_t + a^2U^2_\theta) + \frac{e^{4U}}{4t^2}(A^2_t + a^2A^2_\theta)+ e^{2(\eta - U)}a^2\rho, 
\\
&\frac{\eta_\theta}{t}=-\frac{a_\theta}{ta} + 2U_tU_\theta + \frac{e^{4U}}{2t^2}A_tA_\theta - e^{3(\eta - U)}aj^1,
\\
&\frac{a_t}{t} = -a^3e^{2(\eta-U)}(\rho - S_1).
\endaligned
\ee
    
\begin{remark} The ``classical'' formulation of the Einstein equations in areal coordinates involves the operator 
$\del_{tt} - a^2 \del_{\theta\theta}$ applied to $U,A,$ and $\eta$. 
However, as pointed out in \cite{LeFlochRendall}, this formulation of the matter equations can not be used to handle 
{\sl weak solutions,} since the metric coefficients do not have enough regularity in the present context
and, for instance, the product $ a^2U_{\theta\theta}$ does not make sense.  
\end{remark} 


\subsection{Weak formulation of the Euler equations in areal coordinates}
 
For the derivation of the Euler equations in areal coordinates, we need the following expressions of the metric coefficients: 
$$
\aligned
& N=ae^{(\eta-U)}, 
\qquad 
&& g_{00}=-N^{2} = -a^2e^{2(\eta-U)},
\\
& g_{11}=e^{2(\eta-U)},
\qquad 
&&g^{11}=e^{-2(\eta-U)},
\\
& \det(^{(2)}g)=t^2,
\qquad 
&& \omega^2 = \det(^{(3)}g)=e^{2(\eta-U)}t^2,
\\
& \mathrm{tr } k=-{a^{-1}e^{-(\eta-U)}} \Big( \frac{1}{t} + \eta_t - U_t \Big), 
\quad  
&& k_{11}= -a^{-1}e^{\eta-U}(\eta_t - U_t). 
\endaligned
$$
We can then rewrite the right-hand sides \eqref{rhse} of the Euler equations and obtain
\begin{align*}
\Sigma_0'
:= \, & N\omega\,\Sigma_0 
=  \omega N^2(k_{11}(\mu+ p)(u^1)^2 + (\mathrm{tr}k)p) + \omega(N^2(\mu + p)(u^0)^2 - p)\del_t N
\\
	 = \, & -ate^{4(\eta-U)}(\eta_t - U_t)(\mu + p)(u^1)^2 - ate^{2(\eta-U)}(\frac{1}{t} + \eta_t - U_t)p
\\
	& +ate^{2(\eta-U)}(\frac{a_t}{a} + \eta_t - U_t)(a^2e^{2(\eta-U)}\big(\mu + p)(u^0)^2 - p\big)			
\\																										
	 = \, & a\,te^{2(\eta-U)}\Big(-\frac{1}{t}p + (\eta_t-U_t)(\mu - p) + \frac{a_t}{a}\big((\mu + p)\frac{1}{1-v^2} - p\big) \Big), 
\end{align*}
and
\begin{align*}
\Sigma_1'
:= \, & \omega \Sigma_1 = te^{\eta - U}\big( -(N^2(\mu + p)(u^0)^2 - p)\del_\theta N  +  \frac{1}{2}N p\,g^{BC}\del_\theta g_{BC}  
\\
& \hskip4.cm 
      + \frac{1}{2} N\del_\theta g_{11} \big((\mu + p)(u^1)^2 + p g^{11}\big) \big)
\\
	= \, & a\,te^{2(\eta - U)}\Big(-(\eta_\theta - U_\theta)(\mu - p) - \frac{a_\theta}{a}\big((\mu + p)\frac{1}{1-v^2} - p\big)\Big).
\end{align*}
Finally, we obtain
$$
\aligned
& \del_t\Big(a\,te^{2(\eta-U)}(\mu + (\mu + p)\frac{v^2}{1-v^2})\Big)+\del_\theta\Big(a^2\,te^{2(\eta-U)}(\mu + p)\frac{v}{1-v^2}\Big) = \Sigma_0',
\\
&
\del_t\Big(te^{2(\eta-U)}(\mu + p)\frac{v}{1-v^2}\Big) + \del_\theta\Big(a\,te^{2(\eta-U)}(p+(\mu + p)\frac{v^2}{1-v^2})\Big) = \Sigma_1'.
\endaligned
$$
Recalling the linear equation of state assumed for $p$, that is, $p=k^2\mu$,
 the weak form of Euler equations in areal coordinates takes the following final form 
\be
\label{60} 
 \del_t\Bigg(a\,te^{2(\eta-U)}\mu\frac{1+k^2v^2}{1-v^2}\Bigg)
  + \del_\theta\Bigg(a^2\,te^{2(\eta-U)}\mu\frac{(1+k^2)v}{1-v^2}\Bigg) 
= a\,te^{2(\eta-U)}\mu\Sigma''_0,
\ee
\be
\label{61}
\del_t\Bigg(te^{2(\eta-U)}\mu\frac{(1+k^2)v}{1-v^2}\Bigg) 
   + \del_\theta\Bigg(a\,te^{2(\eta-U)}\mu\frac{k^2 + v^2}{1-v^2}\Bigg) 
  = a\,te^{2(\eta-U)}\mu\Sigma''_1,
\ee
understood in the sense of distributions, 
where $\Sigma''_0,\Sigma''_1$ are given by
$$
\aligned
\Sigma''_0 & = -\frac{k^2}{t} + (\eta_t - U_t)(1-k^2) +\frac{a_t}{a}\frac{1+k^2v^2}{1-v^2},
\\
\Sigma''_1 & = -(\eta_\theta - U_\theta)(1-k^2) - \frac{a_\theta}{a} \frac{1+k^2v^2}{1-v^2}.
\endaligned
$$

\section{First-order formulation of the Einstein--Euler equations} 
\label{sec:3}

\subsection{Spacetimes with bounded variation}

For the analysis in the present paper, it is convenient to refomulate the Einstein equations as a first--order hyperbolic system supplemented with constraint equations. In fact, we will first drop the constraint equations and study the evolution part of this system. We emphasize that the Einstein constraint equations are used in the first-order evolution equations. 
 
We now arrive at a new formulation of the field equations as a system of first-order, hyperbolic balance laws supplemented with ordinary differential equations. To this end, we introduce the new variables
\be
\label{404} 
\aligned
& W_\pm := (U_t \mp aU_\theta),
\qquad  V_\pm := e^{2U} (A_t \mp aA_\theta), 
\endaligned
\ee
which we refer to as the {\sl first--order geometric variables,} 
to which we append the fluid variables $( \, \mut, v),$ and the geometric coefficients $(a, \nu)$. 
Here $\mut$ is obtained by scaling the energy density and we also introduce a new metric coefficient, as follows: 
$$
\nu = \eta + \log (a), \qquad \mut:=e^{2(\nu-U)}\mu.  
$$ 
Observe that the product containing $a_\theta$ on the right--hand side of the ``second'' Euler 
equation \eqref{61} may not be defined under the regularity assumptions we are interested in. However, if it is multiplied by $a$, 
the second Euler equation can be rewritten so that $a_\theta$ {\sl cancels out.} In this case, using our new notation, the source-terms $\Sigma'''_0, \Sigma'''_1$ of the Euler equations take the form  
\be
\label{52}
\aligned 
\Sigma'''_0 
&= \Sigma''_0 =  -\frac{k^2}{(1-k^2)t} - \frac{1}{2}(W_+ + W_-) + \frac{t}{2}(W_+^2  + W_-^2) + \frac{1}{8t}(V_+^2 + V_-^2),
\\
\Sigma'''_1 
&= a\Sigma'' + \frac{a_t}{a}\frac{(1+k^2)v}{1-v^2} + a_\theta \frac{k^2 + v^2}{1-v^2} 
\\
&
  = -a(\nu_\theta - U_\theta)(1 - k^2) + \frac{a_t}{a}\frac{(1+k^2)v}{1-v^2} 
\\ 
&= -\frac{1}{2}(W_+ - W_-) + \frac{t}{2}(W_+^2  - W_-^2) + \frac{1}{8t}(V_+^2 - V_-^2). 
\endaligned
\ee 

Without specifying yet (see below) the regularity of the spacetime metric and lapse function, we now collect, 
in the new notation, the first two evolution equations, the ordinary differential equation 
for $a$, and Euler equations. 

\begin{definition} 
\label{first-order}
The {\rm first-order formulation of the Einstein--Euler equations} for Gowdy--symmetric spacetimes 
consists of five evolution equations for the geometry 
\begin{align}
(a^{-1}tW_\pm)_t \pm (tW_\pm)_\theta =& \pm \frac{(W_+ - W_-)}{2a} + \frac{1}{2at}V_+V_- + \frac{t\mut}{2a}(1-k^2), 
\label{geometry1}
\\
(a^{-1}t^{-1}V_\pm)_t \pm ( t^{-1}V_\pm)_\theta =& \mp \frac{(V_+ - V_-)}{2at^2} - \frac{2}{ta} W_\pm V_\mp,
\label{geometry2}
\\
a_t =& -at\mut (1-k^2), 
\label{geometry3}
\end{align}
two equations for the fluid  
\be
\label{fluid}
\aligned
&\del_t\Bigg(a^{-1}t\mut\frac{1+k^2v^2}{1-v^2}\Bigg)+\del_\theta\Bigg(t\mut\frac{(1+ k^2)v}{1-v^2}\Bigg) 
= 
a^{-1}t\mut(1-k^2) \, \Sigma'''_0, 
\\
&\del_t\Bigg(a^{-1}t\mut\frac{(1+k^2)v}{1-v^2}\Bigg) + \del_\theta\Bigg(t\mut\frac{k^2 + v^2}{1-v^2}\Bigg) 
= 
a^{-1}t\mut(1-k^2) \, \Sigma'''_1, 
\endaligned
\ee
and three constraint equations: 
\be
\label{58} 
\aligned 
\nu_t & = \frac{t}{2}(W_+^2  + W_-^2) + \frac{1}{8t}(V_+^2 + V_-^2) + t\mut \, \frac{k^2 + v^2}{1-v^2},
\\
- a\nu_\theta & = \frac{t}{2}(W_+^2  - W_-^2) + \frac{1}{8t}(V_+^2 - V_-^2) + t\mut \, \frac{(1+k^2)v}{1-v^2},
\endaligned
\ee
and 
\be
\label{59}
\aligned
\big( ta^{-1}(\nu_t + t\mut (1 - k^2)) \big)_t -(ta\nu_\theta)_\theta =
&  \frac{t}{2a}(W_+ - W_-)^2 + \frac{1}{8at}(V_+ + V_-)^2 
\\
&  
+ t a^{-1} \, \mut \, \frac{(1+k^2)}{1-v^2}.
\endaligned
\ee  
\end{definition}

As we will show it, \eqref{58} are indeed constraint equations, that is, provided the initial data satisfy these equations, they then hold throughout the spacetime. This is a standard property; yet, in the present class of weak regularity, it is essential to check it, as we do here.  

We are now in a position to state the regularity of the spacetimes under consideration and to define our concept of weak solutions.  
We denote by $BV(S^1)$ the space of {\sl functions with bounded variation,}  that is, bounded functions $f: S^1 \to \RR$ whose derivative is a bounded measure.  The BV regularity for tensor fields is defined by considering each component in adapted coordinates. 

\begin{definition} 
A {\rm  foliated spacetime with bounded variation} (or BV foliated spacetime, in short)
 is a  foliated spacetime $(\Mcal, g)$ with locally Lipschitz continuous metric such that the areal metric coefficients 
$U_t, U_\theta, A_t, A_\theta, a, \nu_t, \nu_\theta$ belong to $L^\infty_\loc(BV(S^1)) \cap Lip_\loc(L^1(S^1))$. 
Such a spacetime is said to be an {\rm Einstein--Euler spacetime with bounded variation} 
if, in addition, there exist a scalar field $\mu$ and 
a unit vector field $u = (u^\alpha)$ in $L^\infty_\loc(BV(S^1)) \cap Lip_\loc(L^1(S^1))$ such that 
the first-order formulation of the Einstein--Euler equations (cf.~Definition~\ref{first-order}) holds in the sense of distributions. 
\end{definition}

Whenever necessary, we will tacitly extend functions defined on the circle $S^1=[0,1]$ to periodic functions defined on the real line $\RR$ with period $1$. 

\subsection{From first--order to second--order variables}
\label{subsec:ftos}

In this section, we show that it is actually sufficient to solve the set of essential equations \eqref{geometry1}--\eqref{fluid}, without imposing the constraints 
at the initial time.  
This is justified, as long as we can show that there exists a function 
$\nu$ satisfying the constraint equations \eqref{58}--\eqref{59} and we can recover the metric coefficients $U$ and $A$ from a given BV solution $(W_\pm, V_\pm, a, \mut, v)$ to the first--order system \eqref{geometry1} -- \eqref{fluid}. 

Indeed, by a direct calculation, it follows from Einstein-Euler equations that the right--hand side of the constraint equations \eqref{58}, that is, 
$$
\aligned
C_1 &:= \frac{t}{2}(W_+^2  + W_-^2) + \frac{1}{8t}(V_+^2 + V_-^2) + t\mut \, \frac{k^2 + v^2}{1-v^2},
\\
C_2 &:= \frac{t}{2}(W_+^2  - W_-^2) + \frac{1}{8t}(V_+^2 - V_-^2) + t\mut \, \frac{(1+k^2)v}{1-v^2},
\endaligned
$$
satisfies the compatibility condition
$$
(a^{-1}C_2)_t - (C_1)_\theta = 0. 
$$   
Therefore, the system  
$$
\aligned
\nu_t &= C_1,
\qquad \qquad 
a\nu_\theta = C_2,
\endaligned
$$
can be solved for $\nu$, by integrating the first (say) of the above equations 
\be
\nu(t, \theta) 
:= 
 \nu_0(\theta) + \int_0^t \Bigg( \frac{t}{2}(W_+^2  + W_-^2) + \frac{1}{8t}(V_+^2 + V_-^2) + t\mut\frac{k^2 + v^2}{1-v^2} \Bigg)(t', \theta) \, dt'. 
\ee 
Obviously, the distributional derivatives $\nu_t, \nu_\theta$ of $\nu$ belong to $BV(S^1)$, while $\nu$ is periodic, since $\nu_0$ 
is periodic and, by construction,   
$$
\del_t \int_{S^1} \nu_\theta \,d\theta = \int_{S^1} \del_t(a^{-1}C_2) d\theta = \int_{S^1} \del_\theta C_1 \, d\theta= 0.
$$ 
Moreover, by a straightforward calculation, it follows that \eqref{59} holds as well.  Observe that the above calculations only require multiplying the evolution equations for $W_\pm$ and $V_\pm$, which is allowed even for {\sl weak} solutions, since the 
principal part of these equations is a {\sl linear} operator and jump relations are unchanged. 

It remains to recover the metric coefficients $U, A$ from a BV solution in the variables $W_\pm, V_\pm, a$. Indeed, by subtracting 
the evolution equations for $W_\pm$, we obtain  
$$
\Big( a^{-1} ( - W_+ + W_-)\Big)_t + \Big(  ( W_+ + W_-) \Big)_\theta = 0 
$$
and, therefore, the system
$$
U_t = \frac{1}{2} ( W_+ + W_-), \qquad U_\theta = \frac{1}{2a} ( - W_+ + W_-), 
$$
can be solved in $U$, by setting
\be
\label{eq:409}
U(t, \theta) = U_0(\theta) + \frac{1}{2} \int_0^t ( W_+ + W_-)(t', \theta) \, dt'.
\ee
Clearly, first--order derivatives of $U$ are functions of bounded variation in $\theta$, and $U$ is periodic. The same properties hold for the coefficient $A$, and our argument is completed. 

\section{Special solutions to the Einstein--Euler equations} 
\label{sec:4}
\subsection{Homogeneous solutions}
\label{sec:41} 

In this section, we neglect the source terms in the equations \eqref{geometry1}--\eqref{fluid} and we temporarily assume 
$t=1$, that is $tW_\pm \rightarrow W_\pm$, $t^{-1}V_\pm \rightarrow V_\pm$, $t\mut \rightarrow \mut$, to facilitate the exposition. The results for the original variables are easily recovered after scaling by a suitable power of $t$. Hence, we study the system 
\be
\label{homog}
\aligned 
(a^{-1}W_\pm)_t \pm (W_\pm)_\theta &= 0,
\\
(a^{-1}V_\pm)_t \pm (V_\pm)_\theta &= 0,
\\
a_t &= 0, 
\\
\del_t\Bigg(a^{-1}\mut\frac{1+k^2v^2}{1-v^2}\Bigg)+\del_\theta\Bigg( \, \mut\frac{(1+k^2)v}{1-v^2}\Bigg) & = 0, 
\\
\del_t\Bigg(a^{-1}\mut\frac{(1+k^2)v}{1-v^2}\Bigg) +\del_\theta\Bigg( \, \mut\frac{k^2 + v^2}{1-v^2}\Bigg) & = 0. 
\endaligned
\ee 
The geometry variables  $(V_{\pm},W_\pm, a)$ satisfy linear hyperbolic equations, associated with the speeds $-a, 0, +a$ only,
whereas the last two equations for $( \, \mut, v)$ correspond to the system of special relativistic fluid dynamics. 
The fluid equations are coupled to the geometry only
through $a$ and, thus, can be considered separately from $(V_\pm, W_\pm)$.  

We are interested first in solving the {\sl Riemann problem}
for the above system expressed in terms of $u:=(W_\pm, V_\pm, a, \mut, v)$, that is, in solving the initial value problem 
with piecewise constant initial data
\be
\label{ivriemann}
u(t_0, \theta) = \begin{cases}
u_l, \quad & \theta < \theta_0, 
\\
u_r, \quad & \theta >\theta_0. 
\end{cases} 
\ee
Here, the constant vectors $u_l, u_r$ are prescribed within the physical region 
\be
\label{phys}
a>0,  \quad \mut>0,  \quad |v|<1, 
\ee
and an arbitrary point $(t_0, \theta_0)$ has been fixed. 

To solve the Riemann problem for the subsystem $(a, \mut, v)$, we will make use of the Riemann invariants $y, z, w$  
$$
\aligned
& y:= \log a, \qquad\quad
&& z := \frac{1}{2}\log \Bigg(\frac{1+v}{1-v} \Bigg) + \frac{k}{1+k^2}\log \,\mut, 
\\ 
& w:=\frac{1}{2}\log \Bigg(\frac{1+v}{1-v} \Bigg) - \frac{k}{1+k^2}\log \,\mut,
\endaligned
$$
This system is not strictly hyperbolic at hypersurfaces $v=\pm k$, 
where, in each case, two of the corresponding characteristic speeds   
$$
\lambda_0 = 0, \quad \qquad \lambda_1 = a\frac{v - k}{1 - v \, k}, \quad \qquad \lambda_2 = a\frac{v + k}{1 + v \, k},
$$   
coincide. On the other hand, both $\lambda_1$ and $\lambda_2$ are genuinely nonlinear and $\lambda_0$ is obviously linearly degenerate. 
Moreover, we have $\lambda_1<\lambda_2$ where both are strictly monotonically increasing with 
respect to $v$.  Up to an additional vector $a$, they correspond to the characteristic speeds of the special relativistic Euler equations, which we denote by 
$$
\widehat{\lambda}_{1,2}:=\frac{v \mp k}{1 \mp vk}.  
$$

\begin{proposition}[Riemann problem for the Einstein--Euler equations]
\label{homogeq}
The system \eqref{homog} admits a unique self-similar weak solution 
to the initial value problem \eqref{ivriemann}. 
All the variables contain a stationary discontinuity. In addition, $W_+, V_+$ contain jumps associated with 
the speed $a_r$, 
while $W_-,V_-$ contain jumps associated with the speed $-a_l$. The fluid variables $( \, \mut, v)$
have up to three additional waves comprising rarefaction and/or shock waves satisfying Lax shock condition. Moreover, 
the regions 
$$
A_M:=  \big\{(w,z) \, / \, -M\leq w\leq z\leq M \big\}, \quad M>0, 
$$
are invariant for the solutions of the Riemann problem associated with the fluid equations. 
In other words, if the Riemann data belong to a set $A_M$, 
then so does the corresponding solution of the Riemann problem.  
\end{proposition}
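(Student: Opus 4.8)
The plan is to exploit the almost-decoupled structure of the homogeneous system \eqref{homog}: the coefficient $a$ solves the pure ordinary differential equation $a_t=0$; the pairs $W_\pm$ and $V_\pm$ solve \emph{linear} transport equations whose only coupling to the rest is through $a$; and $(\mut,v)$ solves the system of special relativistic fluid dynamics, coupled to the geometry solely through the multiplicative constant $a$. I would construct the self-similar solution in the variable $\xi:=(\theta-\theta_0)/(t-t_0)$ block by block. Since a self-similar profile obeying $a_t=0$ must be piecewise constant with its single jump located at $\xi=0$, one sets $a\equiv a_l$ for $\xi<0$ and $a\equiv a_r$ for $\xi>0$; this is the linearly degenerate ($\lambda_0=0$) field, and it produces the stationary discontinuity. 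Using $a_t=0$, the first geometric equation becomes $(W_+)_t+a\,(W_+)_\theta=0$, so $W_+$ is transported to the right, and a jump of $W_+$ sitting in a region where $a$ is constant must travel at that constant value of $a$; the only location compatible with the side on which it lies is then $\xi=a_r$, while the Rankine--Hugoniot relation across $\xi=0$ forces $[W_+]=0$ there. Symmetrically, $W_-$ is transported to the left with its jump at $\xi=-a_l$, and $V_\pm$ accompany $W_\pm$, so all intermediate constant states are uniquely fixed.

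On each side of $\xi=0$ the fluid block $\partial_t f_i(\mut,v)+a\,\partial_\theta g_i(\mut,v)=0$, $i=0,1$, is the special relativistic Euler system with characteristic speeds rescaled by the positive constant $a_l$ (respectively $a_r$), whose Riemann problem is solved by the classical construction of \cite{Thompson,SmollerTemple}: through any state the $\lambda_1$- and $\lambda_2$-wave curves consist of a rarefaction branch (an integral curve of the corresponding eigenvector, along which one of the Riemann invariants $w,z$ is constant) and a Lax-admissible shock branch (a piece of the Hugoniot locus), and these curves are monotone in the $(w,z)$ coordinates, so that the composite forward curves connect any two states uniquely. The two half-solutions are joined across the stationary wave by the speed-zero Rankine--Hugoniot condition, that is, continuity of the Euler fluxes $g_0,g_1$. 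Reading the fan from left to right one obtains: $W_-,V_-$ jumping at $-a_l$; then $\lambda_1$- and/or $\lambda_2$-waves of nonpositive speed in the $a_l$-medium; the stationary jump at $\xi=0$ (carrying the $a$-jump and, in the resonant case below, possibly a jump of $(\mut,v)$); then $\lambda_1$- and/or $\lambda_2$-waves of nonnegative speed in the $a_r$-medium; then $W_+,V_+$ jumping at $a_r$; so at most three fluid waves arise beyond the stationary one. Uniqueness of the assembled pattern follows from the monotonicity of these composite wave curves.

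The genuine difficulty — and what I expect to be the main obstacle — is the interaction of the stationary $\lambda_0=0$ wave with the Euler waves when a fluid characteristic speed vanishes, that is, near $v=\pm k$, where $\lambda_1$ or $\lambda_2$ collides with $\lambda_0$ and strict hyperbolicity is lost. There the map $(\mut,v)\mapsto(g_0,g_1)$ fails to be locally one-to-one, so the flux-continuity condition at $\xi=0$ admits, a priori, solutions in which $(\mut,v)$ itself jumps across the stationary wave, possibly with additional fluid waves on one side; one must show that the Lax/entropy-admissible self-similar solution is nevertheless unique, a resonant-Riemann-problem analysis in the spirit of Isaacson and Temple. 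It is precisely to keep this interaction harmless — and to make the linear interaction estimates of Section~\ref{sec:5} go through — that the first-order variables $W_\pm,V_\pm$ and the coefficient $\nu=\eta+\log a$ are adopted, so that $a$ never propagates.

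Finally, the invariance of $A_M=\{(w,z)\,/\,-M\le w\le z\le M\}$ is obtained from the Chueh--Conley--Smoller invariant-region criterion: along each of the three sides of the triangle $A_M$ one verifies that the admissible wave curves enter $A_M$ — the rarefaction curves run along level sets of the Riemann invariants $w$ and $z$; the Hugoniot loci lie on the correct side thanks to the genuine nonlinearity of $\lambda_1,\lambda_2$; and along the stationary contact the pair $(w,z)$ is unchanged. Since the Riemann solution built above is a concatenation of finitely many such curves, every intermediate state, and hence the solution itself, remains in $A_M$.
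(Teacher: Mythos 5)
Your overall plan matches the paper's: exploit the block structure, solve the geometry variables by linear transport, reduce the fluid block to the special-relativistic Euler Riemann problem with the characteristic speeds multiplied by the positive constant $a$, and insert the stationary $a$-jump at the appropriate location in the fan. Your treatment of $W_\pm,V_\pm$ and the placement of their jumps at speeds $a_r$ and $-a_l$ is exactly right.

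However, there is a genuine gap at what you yourself call ``the main obstacle.'' You flag the resonance at $v=\pm k$ --- where one of the fluid characteristic speeds vanishes and collides with $\lambda_0=0$ --- and then defer to ``a resonant-Riemann-problem analysis in the spirit of Isaacson and Temple'' without carrying it out. The paper actually disposes of this case by a short and elementary argument that you should have supplied. The key observations are: (i) $a$ is constant along \emph{every} fluid wave curve (both shocks and rarefactions), so the fluid portion of the Riemann fan, expressed in $(\mut,v)$ or equivalently in the Riemann invariants $(w,z)$, is \emph{identical} to the special-relativistic one; (ii) $\mut$ and $v$ are constant across the linearly degenerate $0$-contact, so the only question is where in the already-determined $(\mut,v)$-fan to insert the $a$-jump; (iii) since $v$ increases monotonically along $\widehat{\Rcal}_1$ from $v_l$ to $v_m$ and along $\widehat{\Rcal}_2$ from $v_m$ to $v_r$, at most one of the two speeds $\widehat{\lambda}_1,\widehat{\lambda}_2$ can change sign inside the fan, so the two critical configurations ($\widehat{\Rcal}_1$ crossing $v=k$ and $\widehat{\Rcal}_2$ crossing $v=-k$) cannot occur simultaneously; and (iv) in the critical case the $a$-jump is placed precisely at the state $(\mut_\ast,\pm k)$ where the rarefaction speed vanishes, splitting the rarefaction fan into a left half propagating with speeds scaled by $a_l$ and a right half with speeds scaled by $a_r$. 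This yields a unique self-similar solution with five constant states. No Isaacson--Temple machinery is needed, and indeed there is no genuine resonance: the $0$-field never transmits information to the fluid variables.

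Relatedly, your framing of the coupling at $\xi=0$ as ``continuity of the Euler fluxes $g_0,g_1$'' is needlessly delicate and is what opens the door to the spurious multi-valuedness you then have to worry about. It is cleaner (and is what the paper does) to characterize the stationary wave directly as the integral curve of $R_0=(1,0,0)$, i.e.~as a linearly degenerate contact across which $(\mut,v)$ is preserved, and to observe that there is no other admissible stationary wave: a fluid shock of zero speed would require $\widehat{s}_i=0$, which only occurs at an isolated state, and a stationary rarefaction is impossible. With that characterization, flux continuity is automatic and uniqueness of the intermediate state $(\mut_m,v_m)$ follows from the monotonicity and convexity of $\widehat{\Wcal}_1,\widehat{\Wcal}_2$ --- precisely as in Smoller--Temple --- which the paper records explicitly. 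Your invariant-region argument for $A_M$ is acceptable; the paper simply inherits it from the identification of $\widehat{\Wcal}_i$ with the special-relativistic wave curves.
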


The proof of this proposition is given shortly below. 
We will mesure the strength $|\eps(u_l, u_r)|$ of the waves present in the solution of the Riemann problem 
$\Rcal(u_l, u_r)$ associated with Riemann data $u_l, u_r$ in terms of the strength vector $\eps(u_l,u_r)\in \R^7$. 
Its first five components contain the jumps $(a, W_\pm, V_\pm)$, whereas the last two contain the jumps of 
$\log \mut$ across shock/rarefaction waves of the fluid equations. 
The total strength is defined as the sum of the components of $\eps$, that is
$$
|\eps(u_l, u_r)|:= \sum_i|\eps_i(u_l, u_r)|.
$$
Recall that $W_\pm, V_\pm$ and $\mut$ are all {\sl constant across a stationary discontinuity} 
and that a possible jump in $\mut$ occurring when one of the rarefaction waves
crosses the plane of non--strict hyperbolicity would not contribute if included in the definition of $\eps$, since 
$\mut$ is strictly monotone along both rarefaction curves.  
For the geometry variables, the following statement is just the triangle inequality, 
whereas for $\mut$ the claim follows from the corresponding estimate for special relativistic fluids~\cite{SmollerTemple}. 

\begin{proposition}[Interaction estimates]
\label{inter}
For any three Riemann problems $\Rcal(u_l, u_m)$, $\Rcal(u_m, u_r)$, $\Rcal(u_l,u_r)$ associated with 
constant states $u_l$, $u_m$, $u_r$, one has 
$$
|\eps(u_l,u_r)|\leq |\eps(u_l,u_m)| + |\eps(u_m,u_r)|.
$$ 
\end{proposition}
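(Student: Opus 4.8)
The plan is to prove the inequality componentwise, using the splitting of the strength vector $\eps=(\eps_1,\ldots,\eps_7)\in\R^7$ into a geometric block $(\eps_1,\ldots,\eps_5)$ — the jumps of $a$ and of $W_\pm,V_\pm$ — and a fluid block $(\eps_6,\eps_7)$ — the jumps of $\log\mut$ across the genuinely nonlinear fluid waves. Since $|\eps|=\sum_i|\eps_i|$, it is enough to bound each $|\eps_i(u_l,u_r)|$ by $|\eps_i(u_l,u_m)|+|\eps_i(u_m,u_r)|$, and I would handle the two blocks along the two lines already indicated before the statement.

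For the geometric block the plan is to read off the wave pattern from Proposition~\ref{homogeq}: in $\Rcal(u_l,u_r)$ the coefficient $a$ carries only the stationary jump $a_r-a_l$; applying the Rankine--Hugoniot relation with zero speed to the linear equations in \eqref{homog} forces $W_+$ and $V_+$ (and, likewise, $\mut$ and $v$) to be continuous across the stationary discontinuity, so that $W_+$ and $V_+$ each carry a single jump, of respective sizes $|W_{+,r}-W_{+,l}|$ and $|V_{+,r}-V_{+,l}|$, along the $a_r$--characteristic issued from $\theta_0$, and symmetrically $W_-,V_-$ are continuous at $\theta_0$ and carry a single jump, of sizes $|W_{-,r}-W_{-,l}|$ and $|V_{-,r}-V_{-,l}|$, along the $-a_l$--characteristic. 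Thus each of $\eps_1,\ldots,\eps_5$ for $\Rcal(u_l,u_r)$ is the modulus of the difference between the corresponding components of the two constant end states; since these end states are $(u_l,u_r)$, $(u_l,u_m)$, $(u_m,u_r)$ for the three Riemann problems, the scalar triangle inequality applied to each of $a,W_+,W_-,V_+,V_-$ closes this case.

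For the fluid block I would reduce to the special relativistic Euler system and invoke the interaction estimate of \cite{SmollerTemple}. The two points to check are: (i) the stationary ($\lambda_0$)--contact does not contribute to $\eps_6,\eps_7$, because $\mut$ is constant across it; and (ii) the $\lambda_1$-- and $\lambda_2$--wave curves in the $(\mut,v)$--plane are independent of $a$ — along rarefactions the pertinent Riemann invariant ($z$ or $w$) does not involve $a$, while across a fluid shock $a$ is unchanged and merely rescales the shock speed by the factor $a$, leaving the Hugoniot locus, hence the jump of $\log\mut$, untouched. Granting (i)--(ii), the two fluid strengths of $\Rcal(u_l,u_r)$ depend only on the classical Riemann problem $(\mut_l,v_l)\mapsto(\mut_r,v_r)$ for special relativistic fluids (the constants $a_l,a_r$ only fixing on which side of $\theta_0$ each fluid wave lies), and the same holds for $\Rcal(u_l,u_m)$ and $\Rcal(u_m,u_r)$; measuring strengths by $|[\log\mut]|$ is legitimate and lossless since $\mut$ is strictly monotone along both wave curves, so \cite{SmollerTemple} yields $|\eps_6(u_l,u_r)|+|\eps_7(u_l,u_r)|\le\sum_{j=6,7}|\eps_j(u_l,u_m)|+\sum_{j=6,7}|\eps_j(u_m,u_r)|$. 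Adding the seven componentwise bounds gives the claim.

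I expect the only genuinely delicate step to be the decoupling statement (ii): one must verify, from the explicit Rankine--Hugoniot conditions of \eqref{homog}, that the geometric factor $a$ neither deforms the fluid Hugoniot locus nor makes $\log\mut$ depend on $a$ — once this is secured, the geometric block is a bare triangle inequality and the fluid block is a direct citation. A minor additional remark, already anticipated above, is that a fluid rarefaction may cross the plane of non--strict hyperbolicity $v=\pm k$; since $\mut$ remains strictly monotone along the entire rarefaction curve, such a crossing neither splits the wave nor produces an extra jump to be accounted for.
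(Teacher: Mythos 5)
Your proposal is correct and follows essentially the same route as the paper: the geometric components are bare triangle inequalities since $a,W_\pm,V_\pm$ each carry a single jump equal to the difference of end states, and the fluid components reduce to the special relativistic interaction estimate of Smoller--Temple once one notes that the stationary contact leaves $\mut$ unchanged and that the $1$-- and $2$--wave curves do not depend on $a$ (both facts having been established in the proof of Proposition~\ref{homogeq}). The verifications you flag as delicate, including the remark about rarefactions crossing $v=\pm k$, are precisely the observations the paper appeals to in the sentence preceding the statement.
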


\begin{lemma}
\label{odeinter}
The strength vector $\eps : \Omega \times \Omega \rightarrow \R^7$ is a smooth function of its arguments, and for all 
$u_l,u_r, u_l', u_r'$ in a compact subset of $\Omega$ one has the uniform estimate
$$
\aligned
\eps(u_l',u_r') = \eps(u_l,u_r) 
& + O(1)|\eps(u_l,u_r)|(|u_l' - u_l| + |u_r' - u_r|) 
\\
& + O(1) |(u_r'-u_l')-(u_r-u_l)|.
\endaligned
$$
\end{lemma}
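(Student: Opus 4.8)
The plan is to establish the claimed Lipschitz-type dependence of the strength vector $\eps$ by exploiting that $\eps$ is built by composing the solutions of several decoupled Riemann problems, each of which is governed by a classical wave-curve construction. First I would recall the structure from Proposition~\ref{homogeq}: the strength vector splits into a ``geometry'' block (the jumps of $a$, $W_\pm$, $V_\pm$ across the stationary and the $\pm a$-speed discontinuities) and a ``fluid'' block (the jumps of $\log\mut$ across the two genuinely nonlinear waves of the $(a,\mut,v)$-subsystem). Since the geometry variables $W_\pm, V_\pm$ satisfy \emph{linear} transport equations and are constant across the stationary jump, their contribution to $\eps$ depends only \emph{affinely} on the difference $u_r - u_l$ together with the value of $a_l, a_r$ (which determine the speeds $\mp a_l, a_r$ carrying those jumps); in particular the geometry components of $\eps$ are smooth, and in fact one can compute them explicitly, so the asserted expansion for those components is immediate and the $O(1)|\eps|(|u_l'-u_l|+|u_r'-u_l'|)$ term there accounts only for the dependence of the carrier speeds on the perturbed $a$-values.

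The substantive part is the fluid block. Here I would invoke the standard theory of the Riemann problem for a strictly hyperbolic (away from $v=\pm k$) system with genuinely nonlinear fields: the map $(u_l, u_r)\mapsto \eps_{\mathrm{fluid}}$ is obtained by solving $u_r = \Psi_2(\sigma_2)\circ\Psi_1(\sigma_1)(u_l)$ for the wave strengths $(\sigma_1,\sigma_2)$, where $\Psi_i$ are the (Lax) wave curves, which are $C^2$ in all arguments; by the implicit function theorem $(\sigma_1,\sigma_2)$ — and hence the jumps of $\log\mut$ — are smooth functions of $(u_l,u_r)$ on the region where the construction is nondegenerate. The first-order Taylor expansion of this smooth map around $(u_l,u_r)$ then gives $\eps(u_l',u_r') = \eps(u_l,u_r) + O(1)\big(|u_l'-u_l| + |u_r'-u_r|\big)$ on any compact subset of $\Omega$. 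To upgrade the second term to the sharper $O(1)|(u_r'-u_l')-(u_r-u_l)|$ and to make the first term carry the factor $|\eps(u_l,u_r)|$, I would use the elementary but crucial observation that when $u_l' = u_l + h$ and $u_r' = u_r + h$ with the \emph{same} increment $h$ (i.e.\ the difference is unchanged), the leading-order change in $\eps$ is proportional to the existing wave strength: this is the standard fact that $\partial_{u_l}\eps + \partial_{u_r}\eps = O(|\eps|)$, which follows because if $u_l = u_r$ then $\eps \equiv 0$ and translating both endpoints equally is, to first order, a variation tangent to the diagonal. Splitting a general perturbation as a common translation plus a relative displacement $(u_r'-u_l')-(u_r-u_l)$ then yields exactly the stated estimate.

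The main obstacle I anticipate is the lack of strict hyperbolicity of the fluid subsystem along $v=\pm k$, noted right before Proposition~\ref{homogeq}: there two characteristic speeds coincide, the wave curves can lose smoothness, and a rarefaction may pick up a spurious jump in $\mut$ when it crosses that hypersurface. I would handle this exactly as the excerpt already indicates: because $\mut$ is \emph{strictly monotone} along each rarefaction curve (stated in the paragraph preceding Proposition~\ref{inter}), the relevant components of $\eps$ measure only the total variation of $\log\mut$ across each nonlinear wave, and this quantity is a Lipschitz — indeed, away from the degeneracy, smooth — function of the endpoints even though the wave structure itself is not smooth there. One therefore works with the $(w,z)$ Riemann invariants, in which the wave curves straighten out and the interaction/strength bookkeeping is governed by the special-relativistic fluid estimates of \cite{SmollerTemple}; restricting attention, as the lemma does, to a fixed compact subset of $\Omega$ (hence a fixed $A_M$, staying a uniform distance from $v=\pm1$) gives uniform constants. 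The remaining work is then routine: assemble the geometry-block expansion and the fluid-block expansion, absorb all the smooth remainder terms into the two $O(1)$ terms, and note that the $O(1)$ constants depend only on $C^2$-bounds of the wave curves over the chosen compact set.
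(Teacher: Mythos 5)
Your proposal is correct and rests on the same two facts the paper uses---that $\eps$ is smooth on compact subsets and that $\eps(u,u)\equiv 0$---but the actual proof in the paper is considerably more economical and treats $\eps$ as an abstract smooth function, never referring back to the wave-curve structure at all. Concretely, the paper writes
$\eps(u_l,u_r)=\int_0^1 \frac{\partial\eps}{\partial u_r}\big(u_l,(1-\tau)u_l+\tau u_r\big)(u_r-u_l)\,d\tau$
(and likewise for the primed pair), subtracts, and adds/subtracts a single cross term; Lipschitz continuity of $\partial\eps/\partial u_r$ on the compact set gives the factor $O(1)(|u_l'-u_l|+|u_r'-u_r|)|u_r-u_l|$ for the first piece and boundedness of $\partial\eps/\partial u_r$ gives $O(1)|(u_r'-u_l')-(u_r-u_l)|$ for the second, after which $|u_r-u_l|$ and $|\eps(u_l,u_r)|$ are comparable. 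Your version reaches the same conclusion by differentiating rather than integrating: you observe $(\partial_{u_l}+\partial_{u_r})\eps=O(|\eps|)$ because $\eps$ vanishes identically on the diagonal, then decompose the perturbation into a common translation plus the relative displacement $(u_r'-u_l')-(u_r-u_l)$. That is a perfectly valid alternative and requires the same $C^{1,1}$ regularity, but you spend most of the argument justifying smoothness of $\eps$ block by block (geometry block affine, fluid block via the implicit function theorem, degeneracy at $v=\pm k$), which the paper simply takes as given in the lemma statement. One small point to tidy up: for the geometry components $\eps_{\mathrm{geom}}$ is \emph{exactly} the restriction of $u_r-u_l$, so its variation is captured entirely by the second $O(1)$ term with constant $1$ and the first term is identically zero; the carrier speeds $\mp a_l, a_r$ never enter $\eps$ itself, so your remark that the first term ``accounts for the dependence of the carrier speeds on the perturbed $a$-values'' is a bit off, though it does not affect the validity of the conclusion.
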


\noindent{\sl Proof of Proposition~\ref{homogeq}.} 
We first study the fluid part of the above equations, that is we study the subsystem corresponding to $u :=(a, \mut, v)$. 
Since $\lambda_0$ is linearly degenerate, the 0-wave curve starting at $u_l$ can be obtained from Rankine-Hugoniot relations or directly 
as an integral curve to the eigenvector $R_0 := (1,0,0)$. Immediately, we have
$$
\mut = \mut_l, \qquad v = v_l.
$$
The $1$-- and the $2$--wave shock curves $\Scal_{1,2}$ can be obtained by a direct calculation from the Rankine-Hugoniot relations, following 
the exposition in \cite{SmollerTemple}. Then, it follows immediately that $a$ remains constant along both shock curves, but also that $\mu$ and $v$ are given by the corresponding special relativistic shock curves; hence, all of the standard results still hold true.  
For completness, we state $\Scal_1$ respectively $\Scal_2$ starting at $u_l$ in terms of the Riemann invariants $y,w, z$. 
We have 
$$
\aligned
y(u)-y(u_l) & = 0,
\\
w(u)-w(u_l) &= -\frac{1}{2}\log f_+(2K\beta) - \sqrt{\frac{K}{2}}\log f_+(\beta),
\\
z(u)-z(u_l) & = -\frac{1}{2}\log f_+(2K\beta) + \sqrt{\frac{K}{2}}\log f_+(\beta),
\endaligned
$$
and
$$
\aligned
y(u)-y(u_l) & = 0,
\\
w(u)-w(u_l) &= -\frac{1}{2}\log f_+(2K\beta) - \sqrt{\frac{K}{2}}\log f_-(\beta),
\\
z(u)-z(u_l) &= -\frac{1}{2}\log f_+(2K\beta) + \sqrt{\frac{K}{2}}\log f_-(\beta),
\endaligned
$$
where $f_\pm(\beta) := 1 + \beta\big(1\pm\sqrt{1+2/\beta} \,\big)$ is defined for all $\beta \geq 0$,   
while $K:=2k^2/(1+k^2)^2$ is a constant.

For the shock speeds $s_i$, it can be checked that  
$s_i(a,v; a_l,v_l) = a \, \widehat{s}_i(v;v_l)$, 
where $\widehat{s}_i(v;v_l)$ have the same form as the shock speeds of the special relativistic equations starting at $v_l$. 
Therefore, Lax shock conditions
$$
\lambda_i(a,v) \leq s_i(a,v; a_l, v_l) \leq \lambda_i(a_l, v_l),
$$  
immediately follow from the corresponding standard result. On the other hand, for the rarefaction curves $\Rcal_1, \Rcal_2$ starting at $u_l$ we have
$$
\Rcal_1 = \big\{u \, / \, y(u) = y(u_l), \, z(u) = z(u_l), \, w(u)\geq w(u_l) \big\},
$$
$$
\Rcal_2 = \big\{ u \, / \, y(u) = y(u_l), \, w(u) = w(u_l), \, z(u)\geq z(u_l) \big\},
$$
which coincide with the rarefaction curves of the special relativistic equations. Finally, the composite wave curves, defined as $\Wcal_i\equiv \Scal_i\cup \Rcal_i$, are of class $C^2$ and convex; the parametrized curve 
$\Wcal_1$ is monotonically decreasing, while $\Wcal_2$ is monotonically increasing. From now on, we denote the $(w,z)$--components of 
$\Wcal_i$ by $\widehat{\Wcal}_i$.

The form of the curves $\widehat{\Wcal}_i$ suggests that they can be regarded as wave curves of the special relativistic fluid dynamics; 
hence, for given Riemann data $u_l, u_r$, there exists a unique point 
$$
( \, \mut_m, v_m)\in \widehat{\Wcal}_1( \, \mut_l, v_l)\cap \widehat{\Wcal}_2( \, \mut_r, v_r), 
$$ 
such that the special relativistic speed of the first wave at $v_m$ remains strictly smaller 
than the special relativistic speed of the second one at $v_r$. 
Therefore, depending on their respective signs, it only remains to decide where to put the stationary discontinuity in order to retain the correct wave speed relations. In terms of $\Wcal_i$, this amounts to jumping from a point $( \, \mut,v)$ on the plane $a=a_l$ to the same point on the plane $a = a_r$, whereby all the relevant speeds $\widehat{s}_i, \widehat{\lambda}_i$ prior to the jump get multiplied by $a_l$ and all following 
it by $a_r$. 

For instance, if we have $( \, \mut_m, v_m)\in \widehat{\Scal}_1( \, \mut_l, v_l)$ followed by $( \, \mut_r, v_r)\in \widehat{\Scal}_2( \, \mut_m, v_m)$
and $0<\widehat{s}_1(v_m)<\widehat{s}_2(v_r)$ we have to put the stationary discontinuity afirst, i.e., we have four constant states
$$
(a_l, \mut_l, v_l), \quad (a_r, \mut_l, v_l), \quad (a_r, \mut_m, v_m), \quad (a_r, \mut_r, v_r).
$$ 
Observe that, in the limiting case $\widehat{s}_1(v_m) = 0$, the middle state $(a_r, \mut_l, v_l)$ cancels out and the solution consists of the three constant states  
$$
(a_l, \mut_l, v_l),\quad (a_r, \mut_m, v_m),\quad (a_r, \mut_r, v_r).
$$
We obtain all the other cases similarly, except if either $\widehat{\Rcal}_1$ crosses the
plane $v = k$ or $\widehat{\Rcal}_2$ crosses $v = -k$, i.e.~if one of the characteristic speeds changes sign.  
Observe that it is not possible for both cases  to occur simultaneously, since $v\geq v_l$ in both cases. 
If one of them occurs however, the solution consists of five constant states, where the stationary discontinuity has to be put at 
$v = \pm k$. If we take, for example, the case of two rarefaction waves, where $\widehat{\Rcal}_1$ crosses the line $v=k$ 
at some $\mut_k$, then we have
$$
(a_l, \mut_l, v_l), \quad (a_l, \mut_k, k), \quad (a_r, \mut_k, k), \quad (a_r, \mut_m, v_m), \quad (a_r, \mut_r, v_r).
$$

We now turn to the geometric variables. The component $W_+$ of the Riemann problem consists of 
at most three constant states where a stationary discontinuity, across which $W_+$ is constant, 
is followed by a jump discontinuity of speed $a^r$ across which $a$ is constant. 
 We handle $W_-$ similarly and only note that here a jump discontinuity with speed $-a^l$ is 
followed by a stationary discontinuity. An analogous conclusion holds for $V^\pm$ and the proof is completed.   

\begin{proof}[Proof of Lemma~\ref{odeinter}]
Since $\eps(u_l,u_r) = 0$ if $u_l = u_r$ we have
$$\eps(u_l,u_r) = \int_0^1\frac{\partial \eps}{\partial u_r}(u_l, (1-\tau)u_l + \tau u_r)(u_r - u_l)d\tau$$
and analogously for $\eps(u_l', u_r')$; hence, we find 
\be
\aligned
&\eps_i(u_l',u_r')- \eps_i(u_l,u_r) 
\\
& = 
\int_0^1
\Bigg(\frac{\partial \eps}{\partial u_r}(u_l', (1-\tau)u_l' + \tau u_r')-\frac{\partial \eps}{\partial u_r}(u_l, (1-\tau)u_l + \tau u_r)\Bigg)
\, (u_r - u_l) \, d\tau 
\\
& \quad + \int_0^1\frac{\partial \eps}{\partial u_r}(u_l', (1-\tau)u_l' + \tau u_r') \, \big((u_r' - u_l') - (u_r - u_l)\big)d\tau
\\
& = O(1)\, \big( |u_l' - u_l| + |u_r' - u_r|\big) \, |u_r - u_l| + O(1) \, \big|(u_r' - u_l')- (u_r - u_l) \big|. 
\endaligned
\ee
On the other hand, we have 
$|\eps(u_l,u_r)| = O(1)|u_r - u_l|$, 
and the claim thus follows.
\end{proof}

\subsection{Spatially independent solutions}  

We set $u = (W_\pm, V_\pm, a, \mut, v)$ and study the following system of ordinary differential equations (ODE, in short) 
\be
\label{57} 
u'(t) = g(u(t), t), \qquad u(t_0) = u_0,
\ee
where $g$ represents the right--hand side of the equations (\ref{geometry1})-(\ref{fluid}) and we assume $a_0, \mut_0>0$ 
and $|v_0|<1$. More precisely, for the geometry variables we have
\be
\label{odegeometry} 
\aligned
(a^{-1}tW_\pm)' &= \pm \frac{W_+ - W_-}{2a} + \frac{1}{2at}V_+V_- + \frac{t\mut }{2a}(1-k^2), 
\\
(a^{-1}t^{-1}V_\pm)' &=  \, \mp \frac{V_+ - V_-}{2at^2} - \frac{2}{at} W_\pm V_\mp,  
\endaligned
\ee
while, for the fluid variables, 
\be
\label{odefluid}
\aligned
\Bigg(a^{-1}t\mut\frac{1+k^2v^2}{1-v^2}\Bigg)' 
= a^{-1} t\mut(1-k^2) \Big(& -\frac{k^2}{(1-k^2)t} - \frac{1}{2}(W_+ + W_-) 
\\
& + \frac{t}{2}(W_+^2  + W_-^2) + \frac{1}{8t}(V_+^2 + V_-^2) \Big), 
\\
\Bigg(a^{-1}t\mut\frac{(1+k^2)v}{1-v^2}\Bigg)' 
 = a^{-1} t \mut(1-k^2) \Big( &-\frac{1}{2}(W_+ - W_-) 
\\
& + \frac{t}{2}(W_+^2  - W_-^2) + \frac{1}{8t}(V_+^2 - V_-^2) \Big). 
\endaligned
\ee 
Finally, the component $a$ satisfies 
$$
a' = -at\mut (1-k^2).
$$

Observe that the above system admits an {\sl energy functional}
$$
E := \frac{1}{2a}(W^2_+ + W^2_-) + \frac{1}{8at^2}(V^2_+ + V^2_-) + \frac{\mut}{a}\frac{1+k^2v^2}{1-v^2},
$$
whose time-derivative reads 
$$
E' 
= - \frac{1}{t}\Bigg( \frac{1}{2a}(W_+ + W_-)^2 + \frac{1}{8at^2}(V_+ - V_-)^2 + \frac{\mut}{a}\frac{(1+k^2)}{1-v^2} \Bigg).
$$
Hence, $E$ is non-increasing in (areal) time in the expanding case, while it is 
non-decreasing in the contracting case. We now establish (for the simplified model under consideration in the present section, at least) that, 
as long as the coefficient $a$ remains bounded, the other geometric and fluid variables can not blow--up. 

\begin{lemma}[No blow-up in finite time. Preliminary version] 
\label{406} 
Fix a time $t_0$ satisfying $t_0>0$ in the expanding case and $t_0 < 0$ in the contracting case. 
Fix also some initial data $u_0$ satisfying the physical constraints
\be
\label{phys1}
0<a_0,  \qquad 0<\mut_0, \qquad  |v_0|<1,
\ee
and consider the corresponding solution of the ordinary differential system \eqref{57} defined on 
its maximal interval of existence $[t_0,T)$. (Here, $0 < t_0 < T$ in the expanding case
and $t_0 < T < 0$ in the contracting case.) Then, the following results hold: 
\begin{itemize}
\item In the expanding case, the solution is defined for all time up to $T=+\infty$ 
and 
\be
\label{301}
\lim_{t \to +\infty} W_\pm(t) = \lim_{t \to +\infty} {V_\pm(t) \over t} = 0,
\ee
\be
\label{302}
{1 \over a_0} \leq  {1 \over a(t)},
\ee
and 
\be
\label{303}
-1 < v < 1, \qquad \qquad 0 < \mu < + \infty. 
\ee
\item In the contracting case, $T$ may be finite; however, as long as $a(t)$ remains bounded, 
the variables $W_\pm, V_\pm$ remain bounded and $\mut, v$ remain strictly inside the physical domain; 
in other words, if $T \in (t_0, 0)$ and $a$ is defined on the interval $[t_0, T)$ and remains bounded on that interval, i.e.~{\rm provided}
\be
\label{306}
a_0 \leq \sup_{t \in [t_0, T)} a(t) < + \infty,
\ee
then  the solution $u=u(t)$ extends by continuity to the closed interval $[t_0, T]$
and, therefore, \eqref{301} and \eqref{303} hold in the interval $[t_0, T]$.  
\end{itemize}
\end{lemma}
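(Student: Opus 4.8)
The plan is to run an ODE continuation argument, reducing the whole statement to the following assertion: on any compact subinterval of the maximal interval of existence on which $a$ stays bounded away from $0$ and $\infty$, the variables $W_\pm,V_\pm,\mut$ stay bounded and neither $\mut$ reaches $0$ nor $|v|$ reaches $1$. Once this is proved, both items follow from the standard fact that a maximal ODE solution confined to a fixed compact subset of the open physical region $\{a>0,\ \mut>0,\ |v|<1\}$ cannot cease to exist there: in the contracting case hypothesis \eqref{306} provides the control of $a$, so the solution (and all the bounds) extend continuously up to $t=T$ and \eqref{303} holds there; in the expanding case the control of $a$ is automatic on finite intervals (see below), so $T=+\infty$ and \eqref{302}, \eqref{303} hold.

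First I would control the energy $E$. In the expanding case $t>0$ the stated formula for $E'$ gives $E'\le0$, hence $E(t)\le E(t_0)$ throughout. In the contracting case $t<0$ I would bound the dissipation bracket $D:=\tfrac1{2a}(W_++W_-)^2+\tfrac1{8at^2}(V_+-V_-)^2+\tfrac{(1+k^2)\mut}{a(1-v^2)}$ against $E$, using $(W_++W_-)^2\le2(W_+^2+W_-^2)$, $(V_+-V_-)^2\le2(V_+^2+V_-^2)$ and $\tfrac{1+k^2}{1+k^2v^2}\le1+k^2<2$: each of the three pieces is $\le2E$, so $E'=-D/t\le6E/|t|$, and Gronwall gives $E(t)\le E(t_0)\,(|t_0|/|t|)^{6}$, which is bounded on $[t_0,T]$ since $|t|\ge|T|>0$ there. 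Combining the bound on $E$ with the bound on $a$ ($a\le a_0$ in the expanding case since $a'=-at\mut(1-k^2)<0$; $a_0\le a\le\sup_{t\in[t_0,T)}a(t)<\infty$ in the contracting case by \eqref{306} and $a'>0$) yields at once that $W_\pm$ and $V_\pm/t$ are bounded, $V_\pm$ is bounded on every finite $t$-interval, and $t\mut\tfrac{1+k^2v^2}{1-v^2}$ is bounded above, hence $\mut$ is bounded above. In the expanding case this last bound, inserted into $(\log a)'=-t\mut(1-k^2)$, shows $\log a$ cannot run to $-\infty$ in finite time, so $a$ stays in a compact subset of $(0,\infty)$ on every finite interval.

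The hard part will be to prevent $\mut\to0$ and $|v|\to1$: these degenerations hide in the combination $\mut/(1-v^2)$, which the dissipation controls only in an averaged-in-time sense, so $E$ alone is insufficient. Here I would use the momentum-type quantities $P:=a^{-1}t\mut\tfrac{1+k^2v^2}{1-v^2}$ and $Q:=a^{-1}t\mut\tfrac{(1+k^2)v}{1-v^2}$ of \eqref{fluid}. Since $a^{-1}t\mut=\tfrac{1-v^2}{1+k^2v^2}P$, the fluid equations give $P'=(1-k^2)a^{-1}t\mut\,\Sigma'''_0$ and $Q'=(1-k^2)a^{-1}t\mut\,\Sigma'''_1$; using the identities $P\mp Q=a^{-1}t\mut\,\tfrac{1\mp k^2v}{1\pm v}$ and $\Sigma'''_0\mp\Sigma'''_1=-\tfrac{k^2}{(1-k^2)t}-W_\mp+tW_\mp^2+\tfrac1{4t}V_\mp^2$ one obtains
$$
\frac{(P\mp Q)'}{P\mp Q}=(1-k^2)\,\frac{1\pm v}{1\mp k^2v}\,(\Sigma'''_0\mp\Sigma'''_1),
$$
whose right-hand side is bounded on $[t_0,T]$ (resp. on every finite interval in the expanding case): the prefactor $\tfrac{1\pm v}{1\mp k^2v}$ lies in $(0,\tfrac2{1-k^2})$ for $|v|<1$, while $\Sigma'''_0\mp\Sigma'''_1$ is bounded once $W_\pm,V_\pm$ are bounded and $t$ is bounded away from $0$. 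Hence $P+Q$ and $P-Q$ (nonzero at $t_0$ by \eqref{phys1}) keep a positive lower bound and a finite upper bound on the interval. Since $\tfrac{P+Q}{P-Q}=\tfrac{(1+v)(1+k^2v)}{(1-v)(1-k^2v)}=:g(v)$ is a strictly increasing bijection of $(-1,1)$ onto $(0,\infty)$ with $g(v)\to0$ as $v\to-1$ and $g(v)\to\infty$ as $v\to1$, the two-sided bound on $g(v)$ confines $v$ to a compact subinterval $[g^{-1}(c),g^{-1}(1/c)]$ of $(-1,1)$; then $P=\tfrac12\big((P+Q)+(P-Q)\big)$ being bounded away from $0$ and $\infty$, together with $a$ and $t$ controlled and $\tfrac{1+k^2v^2}{1-v^2}$ now bounded, forces $\mut$ bounded above and below by positive constants. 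This establishes the reduction claim, with $\mu$ in \eqref{303} recovered from $\mut$ after integrating $(\nu-U)_t$.

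Finally, for the limits \eqref{301} I would use that $E$ is nonincreasing and nonnegative, so it has a limit and $\int_{t_0}^\infty t^{-1}D\,dt<\infty$; combining this with the evolution equations — in particular the exact relation $W_+-W_-=\tfrac{a(t)}{a_0}(W_+-W_-)(t_0)$, obtained by subtracting the two $W_\pm$-equations and integrating, together with the corresponding evolution of the $V$-variables — one extracts $W_\pm\to0$ and $V_\pm/t\to0$. I expect this step, in particular showing $a(t)\to0$ as $t\to+\infty$ so that $W_+-W_-$ (and $V_+-V_-$) vanish in the limit, to be the remaining technical point; the genuine obstacle throughout, however, is the two-sided control of the fluid variables obtained above from the quantities $P\pm Q$.
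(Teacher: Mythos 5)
Your proposal is correct and shares the overall architecture of the paper's proof: control the geometry via the monotone energy $E$ (decreasing in the expanding case; Gronwall-bounded in the contracting case), bound $a$ above, then use the resulting geometry bounds to run an ODE continuation argument preventing $\mut$ from hitting $0$ or $\infty$ and $|v|$ from hitting $1$. The step where you diverge from the paper is the fluid control. The paper changes variables to $\log(t\mut)$ and $\log\tfrac{1+v}{1-v}$ and observes that the right-hand sides of the resulting scalar ODEs are bounded in terms of the geometry, except for the self-referential quadratic term $-(1-k^2)t\mut$ in the $\log(t\mut)$ equation, whose sign or a priori energy bound has to be invoked separately. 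You instead pass to $\log|P\pm Q|$ for the conserved fluid momenta $P,Q$, deriving the clean relations
\begin{equation*}
\frac{(P\mp Q)'}{P\mp Q}=(1-k^2)\,\frac{1\pm v}{1\mp k^2 v}\,\Big(-\tfrac{k^2}{(1-k^2)t}-W_\mp+tW_\mp^2+\tfrac1{4t}V_\mp^2\Big),
\end{equation*}
whose right-hand sides are bounded purely by the geometry with no self-referential term, and then reading $v$ off from the monotone function $g(v)=(P+Q)/(P-Q)$. This is a slightly cleaner execution of the same idea, and it buys a direct two-sided confinement of $v$ and $\mut$ without the bootstrap the paper uses to handle the quadratic term in the contracting case. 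Two small remarks: your Gronwall exponent is suboptimal (each of the three pieces of the dissipation is $\le 2$ times the \emph{corresponding} piece of $E$, so $D\le 2E$ and the sharp exponent is $2$, not $6$; the paper uses $2$, which matters for the quantitative version in Proposition~\ref{406b}), and as you yourself note, the decay assertion \eqref{301} is left incomplete — but so it is in the paper's own proof of this lemma, which explicitly defers the quantitative estimates to Proposition~\ref{406b} and does not establish the limits at $t\to+\infty$.
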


In particular, this proposition shows that $(\, \mut, v)$ remain in the physically admissible region, i.e.
$$
0<\mut, \qquad |v|<1.
$$ 

\begin{proof} We will establish here a slightly stronger result than the one stated in the proposition and will be able to control 
the time-dependence of our upper bounds, which will turn out to be useful later on. 
We first rewrite the two fluid equations in terms of $\log \mut$ and $\log \frac{1+v}{1-v}$: 
$$
\aligned
\Big (\log ( t \, \mut) \Big)' & = \frac{(1-k^2)}{1-k^2v^2} \, f(W_\pm, V_\pm, v, t) - (1-k^2) \, t \, \mut,
\\
\Bigg( \log \frac{1+v}{1-v} \Bigg)' &= \frac{2(1-k^2)}{(1-k^2v^2)} \, g(W_\pm, V_\pm, v, t),
\endaligned
$$
with
$$
\aligned
f(W_\pm, V_\pm, v, t) 
& = -\frac{k^2}{t}\frac{(1+v^2)}{(1-k^2)} -2v\big( \,-\frac{1}{2}(W_+ - W_-) + \frac{t}{2}(W_+^2  - W_-^2) + \frac{1}{8t}(V_+^2 - V_-^2)\big)
\\
&\quad+(1+v^2)\big( - \frac{1}{2}(W_+ + W_-) + \frac{t}{2}(W_+^2  + W_-^2) + \frac{1}{8t}(V_+^2 + V_-^2)\big), 
\endaligned
$$
and 
$$
\aligned
g(W_\pm, V_\pm, v, t) 
& = \frac{k^2}{t}\frac{v}{(1-k^2)} - v\big(\, -\frac{1}{2}(W_+ + W_-) + \frac{t}{2}(W_+^2  + W_-^2) + \frac{1}{8t}(V_+^2 + V_-^2)\big) 
\\
&\quad + \frac{1+k^2v^2}{1+k^2}\big(\,-\frac{1}{2}(W_+ - W_-) + \frac{t}{2}(W_+^2  - W_-^2) + \frac{1}{8t}(V_+^2 - V_-^2)\big).
\endaligned
$$

For the expanding case $t_0>0$, the function $a$ is non-increasing and thus bounded by the initial value. 
Assume that on some time interval $[t_0, T_1)$ we have 
$0<\mut$ and $|v|<1$. 
Then, $E$ is decreasing and we have that $a^{-1}W^2_\pm$ and $a^{-1}t^{-2}V^2_\pm$ are bounded and therefore 
also $W_\pm, V_\pm$. Consequently, 
$\log \mut$ and $\log \frac{1+v}{1-v}$ remain bounded as well, since all the relevant geometry expressions 
are bounded. Therefore, there are constants $C, M>0$ depending on $T_1$ and the supremum bounds on the geometry such that 
$$ 
0< \frac{1}{C} \leq \mut \leq C, \qquad |v|\leq M<1.
$$    
Finally, the differential equation for $a$ implies that $a$ is decreasing and remains bounded away from zero 
$$
0<K\leq a(t)\leq a_0 
$$ 
for some constant $K>0$ and for all $t<T_1$. 

Next, we turn our attention to the contracting case $t_0<0$ and we assume $\sup _{t\in[t_0, T]}|a|<+\infty$ for 
some $T<0$. Suppose, moreover, that on the interval $[t_0, T)$ one has  
$0<\mut$ and $|v|<1$. The energy is now increasing and we have the inequality 
$$
E(t)'\leq -\frac{2}{t}E(t),
$$ 
which, by Gronwall lemma, implies 
$$
E(t)\leq E(t_0)\frac{t_0^2}{t^2}.
$$
Hence, we have 
$$
W_\pm^2(t) \leq a(t) 2E(t_0)\frac{t_0^2}{t^2}, \quad V_\pm^2(t)\leq 8E(t_0)t_0^2a(t),
$$
that is, the geometry variables remain bounded. On the other hand, this implies as in the expanding case that $v$ is 
bounded away from $\pm 1$, and $\mut$ from $0$ and $+\infty$. 
\end{proof}

The bounds in Lemma~\ref{406} can be stated in a quantitative form, which will be useful when dealing with approximate solutions later. 
We introduce the energy 
$$
h_1 := \frac{(tW_+ - 1/2 )^2}{2at^2} + \frac{(tW_- - 1/2 )^2}{2at^2} + \frac{V_+^2}{8at^2} + \frac{V_-^2}{8at^2}
$$
and the energy flux 
$$
g_1:= \frac{(tW_+ -1/2)^2}{2at^2}  - \frac{(tW_- -1/2)^2}{2at^2}  + \frac{V_+^2}{8at^2} - \frac{V_-^2}{8at^2},
$$
and observe that they satisfy (for spatially independent solutions) 
\be
\label{eq:hc} 
\aligned
h_1'
&= \frac{a_t}{a}h_1 - \frac{2k_1}{t}, 
\\g_1' 
 &= \frac{a_t}{a}g_1 - \frac{g_1}{t}, 
\endaligned
\ee
in which
$$
k_1 := \frac{(t(W_+ + W_-) - 1)^2}{4at^2} + \frac{(V_+ + V_-)^2}{16at^2}
$$
is non--negative, with $0 \leq k_1 \leq h_1$. 
We then introduce the variables\footnote{These variables are natural also for the homogeneous Riemann problem studied in Section~\ref{sec:41} and are easily checked to satisfy the invariant domain principle,
like the Riemann invariants $w,z$ for the fluid; cf.~Proposition~\ref{homogeq}.} 
\be
\label{396} 
H^\pm := a \, ( h_1 \pm g_1 ). 
\ee

\begin{proposition}[No blow-up in finite time] 
\label{406b} 
Fix a time $t_0$ satisfying $t_0>0$ in the expanding case and $t_0 < 0$ in the contracting case, together
with initial data $u_0$ satisfying 
\be
\label{phys1b}
0<a_0,  \qquad 0<\mut_0, \qquad  |v_0|<1.
\ee
Consider the corresponding solution of the ordinary differential system \eqref{57} defined on 
its maximal interval of existence $[t_0,T)$. 
Then, the following results hold: 
\begin{itemize}
\item In the expanding case, $T=+\infty$ and for all $T' > t_0$ there exists a constant $C_0>0$
depending on the times $t_0, T'$ and the initial data $u_0$ so that  for all $t \in [t_0,T']$ 
$$
\aligned
0 \leq H^\pm(t) &\leq \max\big( H^-(t_0), H^+(t_0) \big), 
\\
a(t) &\leq a(t_0), 
\endaligned
$$
and 
$$
\aligned
e^{-C_0 (t-t_0)} \, |w (t_0)|  
          &\leq |w(t)| 
      \leq e^{C_0 (t-t_0)} \, |w (t_0)|, 
\\
e^{-C_0 (t-t_0)} \, |z (t_0)|  
          &\leq |z(t)| 
      \leq e^{C_0 (t-t_0)} \, |z(t_0)|. 
\endaligned
$$

\item In the contracting case, $T \leq 0$ may be non zero (so that $a(t)$ may blow--up ``before'' $t$ would reach $0$), and one has 
$$
\aligned
0 \leq H^\pm(t) &\leq {t_0^2 \, a(t)^2 \over t^2 a(t_0)^2} \, \max\big( H^-(t_0), H^+(t_0) \big), 
\\
a(t_0) &\leq  a(t), 
\endaligned
$$
and $w,z$ also satisfy the bounds above for all $T' \in (t_0, T)$.  
\end{itemize}
\end{proposition}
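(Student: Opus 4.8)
Here is how I would prove Proposition~\ref{406b}. The plan is to convert the differential identities \eqref{eq:hc} for $h_1,g_1$ into a single scalar differential inequality for $\Phi:=\max(H^+,H^-)$ and then integrate it. First I would record that in the combinations $h_1\pm g_1$ the factor $a$ drops out, so that from \eqref{396},
\[
H^\pm \;=\; \frac{(tW_\pm-1/2)^2}{t^2}+\frac{V_\pm^2}{4t^2}\;\ge\;0 .
\]
Differentiating $H^\pm=a(h_1\pm g_1)$, inserting \eqref{eq:hc} and using $a_t/a=-t\mut(1-k^2)$ from \eqref{geometry3}, one gets the identity
\[
(H^\pm)'\;=\;-2t\mut(1-k^2)\,H^\pm\;-\;\frac{1}{t}\,a\,(2k_1\pm g_1),
\]
so that everything reduces to controlling the sign and the size of $a\,(2k_1\pm g_1)$.

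The algebraic heart of the argument is that, writing $x_\pm:=tW_\pm-1/2$, the specific combination $2k_1\pm g_1$ unfolds as
\[
a\,(2k_1\pm g_1)\;=\;H^\pm\;+\;\Big(\frac{x_+x_-}{t^2}+\frac{V_+V_-}{4t^2}\Big),
\]
and the cross term is, by Cauchy--Schwarz, bounded in modulus by $\sqrt{H^+H^-}$. Consequently, at any time at which $H^\pm\ge H^\mp$ one has the two-sided bound $0\le a\,(2k_1\pm g_1)\le 2H^\pm$. This is exactly what promotes $\max(H^+,H^-)$ — rather than only $H^++H^-$ — to a Lyapunov-type quantity, and it is what produces the sharp constants in the statement.

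Next I would treat the two cases. In the expanding case $t>0$: at a time with $H^\pm\ge H^\mp$ both terms on the right-hand side of the identity for $(H^\pm)'$ are nonpositive, hence $(H^\pm)'\le0$ there; since $\Phi$ is locally Lipschitz and equals whichever of $H^\pm$ is the larger, it follows that $\Phi'\le0$ a.e., so $\Phi$ is non-increasing and $0\le H^\pm(t)\le\Phi(t)\le\Phi(t_0)=\max(H^+(t_0),H^-(t_0))$. In the contracting case $t<0$: at a time with $H^\pm\ge H^\mp$, using $0\le a(2k_1\pm g_1)\le 2H^\pm$ and $-1/t>0$,
\[
(H^\pm)'\;\le\;-2t\mut(1-k^2)H^\pm+\frac{2H^\pm}{-t}\;=\;\Big(\tfrac{2a_t}{a}-\tfrac{2}{t}\Big)H^\pm\;=\;\Big(\log\tfrac{a^2}{t^2}\Big)'H^\pm,
\]
so $\Phi'\le(\log(a^2/t^2))'\,\Phi$ a.e.; a Gronwall estimate — equivalently, a forward invariant-domain argument on the region $\{0\le H^\pm\le M\,a^2t_0^2/(t^2a(t_0)^2)\}$ with $M:=\max(H^+(t_0),H^-(t_0))$ — then gives $H^\pm(t)\le\Phi(t)\le\frac{t_0^2a(t)^2}{t^2a(t_0)^2}\max(H^+(t_0),H^-(t_0))$ on $[t_0,T')$. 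The monotonicity of $a$ in each case ($a(t)\le a(t_0)$ resp. $a(t_0)\le a(t)$) is immediate from the sign of $a_t=-at\mut(1-k^2)$. Finally, for $w$ and $z$: on any compact subinterval $[t_0,T']$ the bounds just obtained on $H^\pm$, combined with the closed form $H^\pm=\frac{(tW_\pm-1/2)^2}{t^2}+\frac{V_\pm^2}{4t^2}$, bound $W_\pm$ and $V_\pm$, while Lemma~\ref{406} supplies positive upper and lower bounds for $\mut$ and a bound $|v|\le M_0<1$ there; substituting these into the evolution equations for $\log(t\mut)$ and $\log\tfrac{1+v}{1-v}$ derived in the proof of Lemma~\ref{406} bounds their right-hand sides by a constant $C_0=C_0(t_0,T',u_0)$, and a Gronwall estimate yields the asserted exponential bounds on $|w|$ and $|z|$.

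The main obstacle is the algebraic step of the second paragraph: isolating the cross term in $a\,(2k_1\pm g_1)$ and then obtaining the sharp estimate $0\le a\,(2k_1\pm g_1)\le 2H^\pm$ on the ``active sheet'' $H^\pm\ge H^\mp$. This is precisely what makes $\max(H^+,H^-)$ (and not merely the sum) a monotone, respectively controlled, quantity, and it is what delivers the constant $\max(H^+(t_0),H^-(t_0))$ with no spurious factor $2$ together with the precise weight $t_0^2a(t)^2/(t^2a(t_0)^2)$. Everything else is a routine differential-inequality and Gronwall computation.
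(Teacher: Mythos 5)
Your argument is correct and reaches the same bounds, but it replaces the paper's decomposition by a different (and in my view cleaner) algebraic device. The paper never writes a differential inequality for $\max(H^+,H^-)$: it observes instead that $h_1'\le0$ and $(\log|g_1|)'\le0$ in the expanding case (resp.\ $(t^2a^{-1}h_1)'\le0$ and $(\log(t^2a^{-1}|g_1|))'\le0$ in the contracting case, using $h_1-k_1\ge0$), deduces $h_1(t)\le h_1(t_0)$ and $|g_1(t)|\le|g_1(t_0)|$ (suitably weighted), and then reassembles the two pieces via the identity $h_1+|g_1|=\max(h_1+g_1,\,h_1-g_1)=a^{-1}\max(H^+,H^-)$. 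Your proof instead works directly on $H^\pm$, extracting the exact identity $a(2k_1\pm g_1)=H^\pm+\big(\tfrac{x_+x_-}{t^2}+\tfrac{V_+V_-}{4t^2}\big)$ with $x_\pm=tW_\pm-1/2$, bounding the cross term by $\sqrt{H^+H^-}$ via Cauchy--Schwarz, and running a Danskin/Gronwall argument on $\Phi=\max(H^+,H^-)$. Both yield exactly the stated coefficients, and each works with either spelling of $k_1$ that appears in the paper (which has a $(V_++V_-)^2$ versus $(V_+-V_-)^2$ inconsistency between Sections 4 and 7; the Cauchy--Schwarz step absorbs either sign). One small caveat affecting both your write-up and the paper's: since the right-hand sides of the $(\log(t\mut))'$ and $(\log\tfrac{1+v}{1-v})'$ equations are bounded by a \emph{constant} once $H^\pm,\mut,v$ are controlled on a compact interval --- they are not bounded by a multiple of $|w|$ or $|z|$ --- the Gronwall step as written produces an \emph{additive} estimate $|w(t)-w(t_0)|\le C_0(t-t_0)$, not the purely multiplicative two-sided bound $e^{-C_0(t-t_0)}|w(t_0)|\le|w(t)|\le e^{C_0(t-t_0)}|w(t_0)|$ claimed in the statement (which would force $w\equiv0$ whenever $w(t_0)=0$). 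The paper glosses over this in exactly the same way, and the additive bound suffices for everything that follows, but it is worth flagging rather than asserting that Gronwall ``yields the asserted exponential bounds.''
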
 

Of course, controling the two variables $h_1, g_1$ is sufficient to control the four variables $W_\pm, V_\pm$. Observe also that, in the expanding case, our bounds on $H^\pm, w, z$ are independent of any lower bound on $a$; such a lower bound can however be 
deduced from the equation $(\log a)_t = - (1-k^2) \, t \, \mut$ (and the control of the density $\mut$ implied by the above proposition); we do not state this here, since such a lower bound can not be ``propagated'' in the Glimm scheme, as only Riemann invariants must be used.  

Observe that, in the contracting case, the coefficient ${t_0^2 \, a(t)^2 \over t^2 a(t_0)^2} $ is greater than $1$ and a possible amplification of $H^\pm$ may take place.

\begin{proof} Consider first the expanding case. Here, we have $t>0$ and $a_t < 0$ and, since $h_1$ and $k_1$ are both positive, 
we easily find $0 \leq h_1(t) \leq h_1(t_0)$. Next, we consider the equation for $g_1$ and obtain directly 
$\big(\log|g_1| \big)' \leq 0$ so that $|g_1(t)| \leq |g(t_0)|$.  After summation, we deduce that 
$$
\aligned 
0 \leq H^\pm(t) 
& \leq {a(t) \over a(t_0)} \, \max\big( H^+(t_0), H^+(t_0) \big), 
\\
& \leq \max\big( H^+(t_0), H^+(t_0) \big), 
\endaligned
$$
since $a$ is decreasing. The Riemann invariants $w,z$ for the fluid satisfy a first--order differential system whose right--hand side is immediatly controled by $H^\pm$ and, therefore, $w,z$ can grow by a multiplicative factor $e^{C (t-t_0)}$, at most.

Consider next the case of contracting spacetimes. The function is now increasing in time and we have 
$$
\big( t^2 a^{-1} h_1 \big)' = {2t \over a} \, (h_1 - k_1 ) < 0, 
$$
since now $t<0$ and $h_1 - k_1 >0$. Similarly, one checks that $\big( \log\big( t^2 a^{-1} | g_1| \big)\big)'$ is negative and, in turn, 
we obtain 
$$
\aligned
& t^2 a(t)^{-1} h_1(t) \leq t_0^2 a(t_0)^{-1} h_1(t_0), 
\\
& t^2 a(t)^{-1} | g_1(t)| \leq t_0^2 a(t_0)^{-1} | g_1(t_0)|,
\endaligned
$$
which leads to the announced result for $H^\pm$. The argument for the fluid variables is identical to the one in the expanding case. 
\end{proof}


\section{A continuation criterion for the Einstein--Euler equations}  
\label{sec:5}
\subsection{Generalized random choice scheme}
\label{subsec:glimm}

In this section, we follow \cite{BLSS} and generalize the so-called random--choice scheme to the Einstein--Euler system  
\eqref{geometry1}--\eqref{fluid}. The scheme is constructed from an approximate solver to the {\sl generalized Riemann problem,}
i.e.~the Cauchy problem associated with piecewise constant initial data  
\be
u(t_0, \cdot) = \begin{cases}
u_l, \quad & \theta < \theta_0
\\
u_r, \quad & \theta >\theta_0,  
\end{cases} 
\ee
which we denote by $\Rcal_G(u_l,u_r; t_0, \theta_0)$. The approximate solver of interest
is constructed by evolving the solution $\uh$ of the corresponding classical 
Riemann problem $\Rcal(u_l,u_r; t_0, \theta_0)$ by the ODE system \eqref{odegeometry}--\eqref{odefluid}. 
More precisely, we have  
\be
\label{grsol}
\widetilde u(t,\theta; u_l,u_r, t_0, \theta_0) : = \uh (t,\theta; u_l,u_r, t_0, \theta_0) + \int^t_{t_0}g(\tau, S_\tau \uh (t,\theta; u_l,u_r, t_0, \theta_0))d\tau,
\ee
where $S_\tau \uh (t,\theta; t_0, \theta_0, u_l,u_r))$ is the solution of the 
ODE system \eqref{odegeometry}--\eqref{odefluid} corresponding to the initial value 
$u(t_0) = \uh (t, \theta; u_l,u_r, t_0, \theta_0)$.

To formulate the scheme of interest, we denote by $r,s>0$ the space and time mesh--lengths, respectively, 
and by $(t_k, \theta_h)$ (for $k = 0,1,\ldots$ and $h$ any integer) the mesh points of the grid, that is, 
$$
t_k := t_0 + ks,\qquad \theta_h:= hr. 
$$ 
We also set 
$$
\theta_{k,h}:= a_k + hr,
$$
where $(a_k)_{k\in\mathbb{N}}$ is a fixed equidistributed sequence in the interval $(-1,1)$. We will let $s,r\rightarrow 0$, while 
keeping the ratio $s/r$ constant. We can now define the approximate solutions $u_s=u_s(t,\theta)$ to the Cauchy problem 
for the system \eqref{geometry1}--\eqref{fluid} associated with the initial data 
$$
u_0(\theta):=u(t_0, \theta), \qquad \theta\in S^1.
$$
It is convenient to avoid the discussion of boundary conditions by assuming that $u_0$ is defined on the real line $\R$, after periodically 
extending it beyond $S^1$. 

As usual, the scheme is defined inductively. First, the initial data are approximated by a piecewise constant function    
$$
u_s(t_0, \theta) := u_0(\theta_{h+1}), \qquad \theta\in[\theta_h, \theta_{h+2}), \quad h \,  \mathrm{ even}.
$$
Then, if $u_s$ is known for all $t < t_k$, we define $u_s$ at the level $t = t_k$ as 
$$
u_s(t, \theta) = u_s(t-, \theta_{k, h+1}), \qquad \theta\in[\theta_h, \theta_{h+2}), \quad k + h \, \mathrm{ even}.  
$$  
Finally, the approximation $u_s$ is defined  in each region 
$$
t_k<t<t_{k+1}, \quad \theta_{h-1}\leq \theta< \theta_{h+1}, \quad k+h \, \mathrm{even},
$$ 
from the approximate generalized Riemann problem 
$$
\Rcal_G\big( u_s(t_k, \theta_{h-1} ),\;u_s(t_k, \theta_{h+1});\; t_k, \theta_h \big),
$$  
that is, 
$$
u_s(t, \theta) := \widetilde u\big( t, \theta;\, u_s(t_k, \theta_{h-1} ),\;u_s(t_k, \theta_{h+1});\, t_k, \theta_h \big),
$$
as introduced in \eqref{grsol}.  
  

\subsection{Sup--norm and bounded variation estimates}

We will first establish that the sup--norm of the approximate solutions remains uniformly 
bounded for all compact intervals $[t_0, T]$ with $t_0>0$ and $T < +\infty$ in the expanding case, and for 
all compact intervals $[t_0, T]$ with $t_0< T < 0$ before the mass density 
blows up in the contracting case, that is,
for all intervals $[t_0, T]$ such that 
\be
\label{eq:606} 
\sup_s \sup_{[t_0,T] \times S^1} \mut_s \leq C_1(T) \qquad \text{ in the contracting case.} 
\ee 
Subsequently, we establish a uniform bound on the total variation, again on compact time intervals. 
Recall that $s$ and $r$ must satisfy the standard stability condition  
\be
\label{eq:606b} 
\frac{s}{r}\sup_{S^1}  a_s(t,\cdot)<1,
\ee
where we have observed that the function $a_s$ provides one with a bound on all characteristic speeds associated with the classical Riemann solver. In the expanding case, $a_s$ remains bounded by the supremum of the initial data $a_0$ but, in the contracting 
case, the bound imposes that the coefficient $a_s$ does not blow up.  However, thanks to the equation \eqref{geometry3}, 
the bound \eqref{eq:606} on the mass density implies an a priori bound on the function $a_s$, so that the condition \eqref{eq:606b} can be ensured a priori from the initial data and once we know the constant $C_1(T)$. 

\begin{lemma}[Global sup--norm estimate] 
\label{uniformbounds}
Consider any bounded initial data $W_\pm^0, V_\pm^0, a_0, \mut_0, v_0$ satisfying the physical constraints 
$$
0<a_0, \qquad 0<\mut_0, \qquad |v_0|<1.
$$ 
Then, in the expanding case, for all $T > t_0$ there exists a constant $C_0$, depending on the time $T, t_0$ 
and the initial data, and a constant $C_0'$ depending on $t_0$ and the initial data so that for all $t\in[t_0, T]$ and all relevant $s$  
$$
\aligned
0<1/C_0(T) \leq a_s(t,\cdot)   &\leq \sup_{S^1} a_0(\cdot),
\\
\sup_{S^1} |W_\pm^s(t,\cdot) - 1/(2t)| + \frac{1}{t}\sup_{S^1}|V_\pm^s(t,\cdot)| & \leq C_0',
\\
\sup_{S^1} \Bigg| \log \Bigg(\frac{1+v_s(t, \cdot)}{1-v_s(t,\cdot)} \Bigg) \Bigg| + \frac{2k}{1+k^2} \, \sup_{S^1} \big| \log \mut_s(t,\cdot) \big|& \leq C_0(T). 
\endaligned
$$
On the other hand, in the contracting case the same conclusion holds for any $T \in (t_0,0)$ such that  \eqref{eq:606} holds.   
\end{lemma}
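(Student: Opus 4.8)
The plan is to combine three ingredients that are already available: the invariant-domain property of the classical Riemann solver (Proposition~\ref{homogeq}, together with the remark accompanying the definition \eqref{396}), the pointwise bounds for spatially homogeneous solutions (Proposition~\ref{406b}), and the elementary observation that the random sampling step of the scheme only selects values already attained by the approximate solution, hence preserves every sup-norm bound. Since, by construction \eqref{grsol}, the generalized Riemann solver evolves the classical Riemann solution by the ODE system \eqref{odegeometry}--\eqref{odefluid}, the estimates of Proposition~\ref{406b} apply on each time strip $[t_k,t_{k+1}]$, and the task reduces to checking that the resulting per-step bounds do not degrade when composed over the $O((T-t_0)/s)$ steps of the scheme.

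I would first treat the expanding case. Unwinding the definitions preceding \eqref{396} shows that $H^+$ depends only on $(W_+,V_+,t)$ and $H^-$ only on $(W_-,V_-,t)$; consequently $H^\pm$ takes only its two endpoint values inside the classical Riemann fan, so that, combined with the bound $0\le H^\pm(t)\le\max(H^+(t_k),H^-(t_k))$ from Proposition~\ref{406b} on each strip and with the sampling step, one gets $\sup_s\sup_{[t_0,T]\times S^1}\max(H^+_s,H^-_s)\le M_0:=\max_\pm\sup_{S^1}H^\pm(t_0,\cdot)$, a constant depending only on $t_0$ and the initial data. Reading off $|W^s_\pm(t,\cdot)-1/(2t)|+t^{-1}|V^s_\pm(t,\cdot)|\le 3\sqrt{M_0}=:C_0'$ then gives the second asserted bound with $C_0'$ of the stated dependence. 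For the fluid, the Riemann solver keeps $(w,z)$ in the invariant regions $A_M$, while by Proposition~\ref{406b} the ODE evolution makes $\sup_\theta|w|$ and $\sup_\theta|z|$ grow by a factor at most $e^{C_0s}$ per strip, the constant $C_0=C_0(t_0,T,M_0)$ being fixed because $H^\pm_s$ is already controlled; telescoping over the steps yields $\sup_{S^1}|w_s(t,\cdot)|,\ \sup_{S^1}|z_s(t,\cdot)|\le e^{C_0(T-t_0)}\max_\pm(\sup_{S^1}|w(t_0,\cdot)|,\sup_{S^1}|z(t_0,\cdot)|)$, which, since $w+z=\log\frac{1+v}{1-v}$ and $z-w=\frac{2k}{1+k^2}\log\mut$, is exactly the third bound. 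Finally the upper bound $a_s\le\sup_{S^1}a_0$ is immediate, as the Riemann solver keeps $a$ between its two endpoint values and \eqref{geometry3} makes the ODE evolution of $a$ non-increasing; for the lower bound, the third bound just obtained gives $\mut_s\le C_1(T)$ on $[t_0,T]\times S^1$, whence $(\log a)_t=-(1-k^2)t\mut\ge-(1-k^2)TC_1(T)$ along the strips while the Riemann/sampling steps cannot decrease $\inf_{S^1}a_s$, so that $\inf_{S^1}a_s(t,\cdot)\ge(\inf_{S^1}a_0)\,e^{-(1-k^2)TC_1(T)(T-t_0)}=:1/C_0(T)>0$.

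For the contracting case under the hypothesis \eqref{eq:606}, the point is that \eqref{eq:606} and \eqref{geometry3} first furnish an a priori two-sided bound $0<\inf_{S^1}a_0\le a_s\le\overline a(T)<+\infty$ on $[t_0,T]\times S^1$, after which the stability condition \eqref{eq:606b} can be enforced uniformly in $s$. On each strip $\max(H^+,H^-)$ is now amplified by the factor $t_k^2a(t)^2/(t^2a(t_k)^2)$ of Proposition~\ref{406b}; since $a$ and $t$ are held fixed at a Riemann/sampling step and $H^\pm$ then takes only endpoint values, these factors telescope over the steps to $t_0^2a(t)^2/(t^2a(t_0)^2)\le t_0^2\overline a(T)^2/(t^2(\inf_{S^1}a_0)^2)$, so $H^\pm_s$ remains bounded (with a $T$-dependent constant) and the geometry bound follows as before; the fluid variables $w,z$, and hence $\mut_s$ and $v_s$, are then controlled exactly as in the expanding case.

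The main obstacle is precisely this telescoping of the geometric energies in the contracting case: one must verify that the per-strip amplification of $H^\pm$ quantified in Proposition~\ref{406b} is compatible with both the Riemann and the sampling steps — that is, that $H^\pm$ does not jump while $a$ and $t$ are unchanged — so that the per-strip factors collapse to the single global factor $t_0^2a(t)^2/(t^2a(t_0)^2)$ rather than accumulating an uncontrolled power of $e^{O(s)\cdot(T-t_0)/s}$. In the expanding case the analogous difficulty is absent, because $\max(H^+,H^-)$ and $a$ are genuinely monotone through all three steps and the only growing quantities, $w$, $z$ and $1/a$, grow by $1+O(s)$ per step and telescope harmlessly. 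A secondary point worth making explicit in the write-up is the non-circularity of the chain of estimates: the control of $H^\pm$ (hence of $W_\pm,V_\pm$) uses only $a>0$, the control of $w,z$ (hence of $\mut,v$) uses the bound on $H^\pm$, and only afterwards is the positive lower bound on $a$ recovered from \eqref{geometry3}; likewise the stability condition \eqref{eq:606b} is ensured from the \emph{upper} bound on $a_s$ alone.
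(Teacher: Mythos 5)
Your proposal follows essentially the same route as the paper's proof of Lemma~\ref{uniformbounds}: combine the invariant-domain/maximum principles of Propositions~\ref{homogeq} and~\ref{406b} (for the classical Riemann solver and the ODE evolution respectively, phrased in the Riemann invariants $H^\pm$ for the geometry and $w,z$ for the fluid), note that the random sampling step only selects values already attained, and iterate over the $O((T-t_0)/s)$ strips. Your treatment of the expanding case matches the paper step for step; you even carry out the Gronwall estimate for the lower bound on $a_s$ explicitly, which the paper only sketches in the remark after Proposition~\ref{406b}.

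The only place where you diverge is the contracting case, and there your self-identified ``main obstacle'' is in fact a non-issue, which the paper handles more directly than you propose. You try to force the per-strip amplification factors $t_k^2\,a_s(t_{k+1},\theta)^2/(t_{k+1}^2\,a_s(t_k,\theta)^2)$ to collapse into the single global factor $t_0^2\,a(t)^2/(t^2 a(t_0)^2)$; this exact telescoping does fail in general, because the value of $a_s$ at the end of one ODE sub-step need not coincide with its value at the start of the next ODE sub-step at the same $\theta$ (the intervening Riemann/sampling step shifts $a_s$ to one of the two neighbouring endpoint values). But the product of the per-strip factors does \emph{not} need to telescope exactly: the $t_k^2/t_{k+1}^2$ part telescopes trivially to $t_0^2/T^2$, and the paper controls the $a$-ratios simply by bounding each one through $\sup_\theta a_s(t_{k+1},\theta)/a_s(t_k,\theta)\le\exp\!\big((1-k^2)\,C_1(T)\,|t_0|\,(t_{k+1}-t_k)\big)$, which follows directly from $(\log a_s)_t=-(1-k^2)\,t\,\mut_s$ and the hypothesis \eqref{eq:606}; multiplying $n\le (T-t_0)/s$ such factors gives $\exp\!\big((1-k^2)\,C_1(T)\,|t_0|\,(T-t_0)\big)$. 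The quantity $e^{O(s)(T-t_0)/s}$ that you flag as ``uncontrolled'' is precisely this bounded $T$-dependent constant, so there is nothing to verify about $H^\pm$ being continuous across the Riemann/sampling step (it may jump, but only downwards, by the maximum principle), and your alternative two-sided bound on $a_s$ is then unnecessary for the amplification argument, though you do use it correctly to enforce the CFL condition \eqref{eq:606b} a priori.

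Aside from this overly cautious treatment of the contracting case, your ordering of the estimates — $H^\pm$ using only $a>0$, then $w,z$ using the $H^\pm$ bound, then the lower bound on $a$ from the resulting bound on $\mut_s$ — is exactly the non-circular chain the paper relies on, and the argument is sound.
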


\begin{proof} We apply Propositions~\ref{homogeq} and~\ref{406b}, in which we derived ``maximum principles'' for the geometry and 
the fluid. Observe that it is at this juncture that it is important to formulate these principles in terms of the Riemann invariants associated with the geometry and the fluid variables. In each ODE step, the maximum of the geometric Riemann invariants $H_s^\pm$ 
defined as in \eqref{396} 
does not increase in the expanding case, and may only increase by a factor $1 + C \, s$ in the contracting case. 
In each Riemann problem step, the variables $H^\pm$ are non--increasing as well. The fluid variables $w,z$ 
are also similarly controled in both the ODE and the Riemann problem steps, while for the function $a$ we need to require 
that the density is a priori bounded in the contracting case; that is, the assumption  \eqref{eq:606}.  
By iterating our estimates and covering a compact time interval $[t_0, t]$,  
and after observing $( 1 + C \, s)^n 
\leq e^{-C (t-t_0)}$ provided $(t-t_0) \leq n \, s$, we obtain the desired uniform bounds. 

In the contracting case, Proposition~\ref{406b} shows that the amplification factor for $H_s^\pm$
at each time step is 
$$
{t_k^2 \, a_s(t_{k+1}, \theta)^2 \over t_{k+1}^2 a_s(t_k,\theta)^2},
$$
which we can control from the uniform bound \eqref{eq:606}, as follows, using $(\log a_s )_t = -(1- k^2) \, t \, \mut_s$ so that 
$$
\aligned
\sup_\theta {a_s(t_{k+1}, \theta)\over a_s(t_k, \theta)} 
& = \sup_\theta  \exp \Bigg( (1- k^2) \int_{t_k}^{t_{k+1}} |t''| \, \mut_s(t'', \theta) \, dt'' \Bigg) 
\\
& \leq  \exp \Bigg( (1- k^2) \, C_1(T) \, (t_{k+1} - t_k) |t_0| \Bigg) 
\endaligned
$$
for all $t_k < t_{k+1} \leq T$. By iterating this estimate over all times $\leq T$, we conclude that the sup--norm of $H_s^\pm$ 
remains uniformly bounded in the compact interval $[t_0, T]$. 
\end{proof}


We next establish the uniform bounds on the total variation of the approximate solutions. 
Denote by $u_{k, h+1}$ the value achieved by the function $u_s$ at the mesh point $(t_k, \theta_{h+1})$, i.e.
$$
u_{k,h+1} := u_s(t_k, \theta_{h+1}) \quad k+h \,\mathrm{even},
$$
and denote by $\uh _{k,h}$ the solution to the classical Riemann problem 
$$
\Rcal(u_{k-1, h}, u_{k-1,h+2}; t_{k-1}, \theta_{h+1}),
$$ 
where 
$$
u_{k,h+1} := \uh _{k,h+1} + \int_{t_{k-1}}^{t_k}g(\tau, S_\tau\uh _{k,h+1})d\tau.
$$
We divide the $(t,\theta)$--plane into diamonds $\Delta_{k,h}$ ($k+h$ even)  
with vertices $(\theta_{k-1, h}, t_{k-1})$, $(\theta_{k, h-1}, t_{k})$, $(\theta_{k, h+1},t_{k})$, 
$(\theta_{k+1, h}, t_{k+1})$. To simplify the notation, we introduce the values of $u_s$ at the vertices of $\Delta_{k,h}$ and 
the corresponding classical Riemann solvers by 
$$
u_S:= u_{k-1, h}, \qquad u_W:= u_{k, h-1}, \qquad u_E:= u_{k, h+1}, \qquad u_N := u_{k+1, h},
$$ 
$$
\uh _W:= \uh _{k, h-1}, \qquad \uh _E:= \uh _{k, h+1}, \qquad \uh _N:= \uh _{k + 1, h},
$$  
in terms of which, the strength $\eps_*( \Delta_{k,h})$ of the waves entering the diamond is defined 
as
$$
\eps_*( \Delta_{k,h}) := |\eps(\uh _W, u_S)| + |\eps(u_S, \uh _E)|,
$$
whereas the strength $\eps^*( \Delta_{k,h})$ of the waves leaving it is defined as
$$
\eps^*( \Delta_{k,h}) := |\eps(u_W, \uh _N)| + |\eps(\uh _N, u_E)|.
$$
Let $J$ be a spacelike mesh curve, that is a polygonal curve connecting the vertices $(\theta_{k, h+1}, t_k)$ of 
different diamonds, where $k+h$ is even. We say that waves $(u_{k-1, h}, \uh _{k,h+1})$ cross the curve $J$ 
if $J$ connects $(\theta_{k-1, h}, \theta_{k-1})$ to $(y_{k, h+1}, t_k)$ and similarly for $( \, \uh _{k,h-1}, u_{k-1, h})$. 
The total variation $L(J)$ of $J$ is defined as 
$$
L(J):= \sum |\eps(u_{k-1, h}, \uh _{k,h+1})| + |\eps(\uh _{k,h-1}, u_{k-1, h})|, 
$$
where the sum is taken over all the waves crossing $J$. Furthermore, we say that a curve $J_2$ is an immediate successor 
of the curve $J_1$ if they connect all the same vertices except for one and if $J_2$ lies in the future of $J_1$. For the 
difference of their total variation we have the following result. 

\begin{lemma}[Global total variation estimate] 
\label{diamond}
Let $J_1, J_2$ be two spacelike curves such that $J_2$ is an immediate successor of $J_1$ and let $\Delta_{k,h}$ be the diamond limited by these two curves. Then, for each time $T$, strictly larger than the time $t_k$ determined by $\Delta_{k,h}$, for which the assumptions of Lemma~\ref{uniformbounds} hold, there exists a constant $C_0(T)$ 
depending on $T$ and the sup bounds on the initial data, only, such that  
$$
L(J_2) - L(J_1) \leq C_0(T)\,s\eps_*(\Delta_{k,h}).
$$
\end{lemma}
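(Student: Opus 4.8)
The plan is to follow the standard Glimm-scheme bookkeeping, the two features specific to the present system being that wave interactions cost nothing beyond the triangle inequality of Proposition~\ref{inter} (there is no quadratic interaction term: the geometric variables obey linear transport and the fluid subsystem is the decoupled special-relativistic Euler system, cf.~Proposition~\ref{homogeq}), and that the source terms contribute only a correction controlled \emph{linearly} by the time-step $s$ and by the incoming strength $\eps_*(\Delta_{k,h})$. First I would reduce to a single diamond: since $J_2$ is an immediate successor of $J_1$, the two mesh curves join exactly the same vertices away from $\Delta_{k,h}$, every wave not entering $\Delta_{k,h}$ contributes identically to $L(J_1)$ and to $L(J_2)$, and the waves crossing the lower edges of $\Delta_{k,h}$ (resp.~the upper edges) are precisely those counted by $\eps_*(\Delta_{k,h})$ (resp.~$\eps^*(\Delta_{k,h})$); hence $L(J_2)-L(J_1)=\eps^*(\Delta_{k,h})-\eps_*(\Delta_{k,h})$ and it suffices to show $\eps^*(\Delta_{k,h})\le(1+C_0(T)s)\,\eps_*(\Delta_{k,h})$. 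Since $\uh_N$ lies in the Riemann fan of $\Rcal(u_W,u_E)$, uniqueness (Proposition~\ref{homogeq}) shows the strengths of $\Rcal(u_W,\uh_N)$ and of $\Rcal(\uh_N,u_E)$ add up to that of $\Rcal(u_W,u_E)$, so $\eps^*(\Delta_{k,h})=|\eps(u_W,u_E)|$; moreover, by Proposition~\ref{inter}, $|\eps(\uh_W,\uh_E)|\le|\eps(\uh_W,u_S)|+|\eps(u_S,\uh_E)|=\eps_*(\Delta_{k,h})$, and $|\uh_W-u_S|\le C|\eps(\uh_W,u_S)|$, $|u_S-\uh_E|\le C|\eps(u_S,\uh_E)|$ give $|\uh_W-\uh_E|\le C\,\eps_*(\Delta_{k,h})$.

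Next I would isolate the effect of the source. The two states feeding the outgoing Riemann problem are obtained from the classical samples by one ODE step over the interval $[t_{k-1},t_k]$ of length $s$: writing $S$ for the flow of the \emph{spatially homogeneous} system \eqref{odegeometry}--\eqref{odefluid} over that interval, one has $u_W=S\uh_W$ and $u_E=S\uh_E$. Two elementary facts about $S$ are needed, both with constants controlled on the compact range of the approximate solutions furnished by Lemma~\ref{uniformbounds} over $[t_0,T]$ — which is exactly what makes $C_0(T)$ depend on $T$, $t_0$ and the sup-norms of the data only: first, $|Su-u|\le s\sup|g|\le C_0(T)\,s$ for any admissible state $u$; second, applying Gronwall's lemma to $\tfrac{d}{d\tau}(S_\tau u'-S_\tau u)=g(\tau,S_\tau u')-g(\tau,S_\tau u)$ gives $|(Su'-Su)-(u'-u)|\le C_0(T)\,s\,|u'-u|$, that is, $S$ perturbs a \emph{difference} of two states by at most $O(s)$ times its size.

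Then I would conclude by Lemma~\ref{odeinter}, applied with $(u_l,u_r)=(\uh_W,\uh_E)$ and $(u_l',u_r')=(u_W,u_E)$, which yields $\eps^*(\Delta_{k,h})=|\eps(u_W,u_E)|\le|\eps(\uh_W,\uh_E)|+O(1)|\eps(\uh_W,\uh_E)|(|u_W-\uh_W|+|u_E-\uh_E|)+O(1)|(u_E-u_W)-(\uh_E-\uh_W)|$. By the reduction step, $|\eps(\uh_W,\uh_E)|\le\eps_*(\Delta_{k,h})$; by the first ODE fact, $|u_W-\uh_W|+|u_E-\uh_E|\le C_0(T)\,s$; and by the second ODE fact together with $|\uh_W-\uh_E|\le C\,\eps_*(\Delta_{k,h})$, the last term is at most $C_0(T)\,s\,\eps_*(\Delta_{k,h})$. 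Collecting, $\eps^*(\Delta_{k,h})\le(1+C_0(T)s)\,\eps_*(\Delta_{k,h})$, which with the reduction step gives the claimed bound. (The stability condition \eqref{eq:606b} is used implicitly, to guarantee that neighboring Riemann fans stay separated so that $\Delta_{k,h}$ indeed collects exactly the waves described above.)

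The main obstacle is precisely the proportionality to $\eps_*(\Delta_{k,h})$: taken by itself the source perturbs the states by an \emph{unconditional} $O(s)$, which would only yield $L(J_2)-L(J_1)=O(s)$ per diamond and be useless once summed over a time slice. The gain comes from the two structural facts combined — Proposition~\ref{inter} costing nothing beyond the triangle inequality, so the interaction at the diamond produces no quadratic term, and in Lemma~\ref{odeinter} the $O(s)$ perturbation entering either multiplied by $|\eps(\uh_W,\uh_E)|\le\eps_*(\Delta_{k,h})$ or through the quantity $(u_E-u_W)-(\uh_E-\uh_W)$, which the Gronwall estimate bounds by $O(s)|\uh_E-\uh_W|$, with $|\uh_E-\uh_W|$ itself $O(\eps_*(\Delta_{k,h}))$ because $\uh_W$ and $\uh_E$ are joined to the \emph{common} south state $u_S$ by waves whose strengths are among those counted in $\eps_*(\Delta_{k,h})$. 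Making this chain of estimates uniform in $s$ and over compact time intervals is what requires the a priori bounds of Lemma~\ref{uniformbounds}; in the contracting case one must in addition invoke the density bound \eqref{eq:606} to keep $a_s$, and hence all wave speeds and the constants above, under control.
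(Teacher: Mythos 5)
Your proposal is correct and follows essentially the same route as the paper: both split $L(J_2)-L(J_1)=\eps^*(\Delta_{k,h})-\eps_*(\Delta_{k,h})$ into a nonpositive interaction part (controlled by the triangle inequality of Proposition~\ref{inter}, your $|\eps(\uh_W,\uh_E)|\leq\eps_*$) and a source-perturbation part controlled via Lemma~\ref{odeinter} together with the two ODE-flow facts (the $O(s)$ displacement and the Gronwall bound on the displaced difference), with the uniform constants supplied by Lemma~\ref{uniformbounds}. The only cosmetic difference is that the paper names these two contributions $X_1$ and $X_2$ explicitly, whereas you fold the first into the final collection of terms.
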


\begin{proof}
By definition, we have 
$$
\aligned
L(J_2)-L(J_1) &= |\eps(u_W, \uh _N)| + |\eps(\uh _N, u_E)| - |\eps(\uh _W, u_S)| - |\eps(u_S, \uh _E)| 
\\
&=   \eps^*( \Delta_{k,h}) -  \eps_*( \Delta_{k,h}).  
\endaligned
$$ 
Observe that $|\eps(u_W, \uh _N)| + |\eps(\uh _N, u_E)| = |\eps(u_W, u_E)|$ since $\uh _N$ is just 
one of the states in the solution of the Riemann problem for $u_W, u_E$. Hence, we can write 
$$
L(J_2)-L(J_1) = X_1 + X_2,
$$
where
$$
\aligned
&X_1 : = |\eps(\uh _W, \uh _E)| - |\eps(\uh _W, u_S)| - |\eps(u_S, \uh _E)|,
\\
&X_2 : = |\eps(u_W, u_E)| - |\eps(\uh _W, \uh _E)|.
\endaligned
$$
By Proposition~\ref{inter} we have 
$$
X_1 \leq 0. 
$$
To estimate $X_2$, note that, by Lemma~\ref{uniformbounds}, $u_s$ is uniformly bounded for all $t\in[t_0, T]$, hence by Lemma~\ref{odeinter} we can write for some constant $C$ depending on $T$ and the initial data  
$$
\aligned
X_2 &\leq C \, |\eps(\uh _W, \uh _E)| \, \big( |u_W - \uh _W| + |u_E - \uh _E| \big) + C \, \big| (u_W - u_E) - (\uh _W - \uh _E) \big|
\\
    &\leq Cs \, |\eps(\uh _W, \uh _E)| \, (\sup _{t\in[t_{k-1}, t_k]}|u'_W(t)| + \sup _{t\in[t_{k-1}, t_k]}|u'_E(t)|) + C s \, |(\uh _W - \uh _E)|
\\
    &\leq Cs|\eps(\uh _W, \uh _E)| \leq Cs(|\eps(\uh _W, u_S)| + |\eps(u_S, \uh _E)|), 
\endaligned 
$$
since $|(\uh _W - \uh _E)| = O(1)|\eps(\uh _W,\uh _E)|$.
\end{proof}


\subsection{Convergence and consistency properties} 

In view of the uniform total variation bound, 
Helly's theorem allows us to extract a converging subsequence and, hence, to arrive at the following conclusion. 

\begin{theorem}[Convergence and existence theory for the Einstein--Euler equations]
\label{apptheo}
Let $u_0 = (W^0_\pm, V^0_\pm, a_0, \mut_0, v_0)$ be an initial data set for the system \eqref{geometry1}--\eqref{fluid}, which is assumed to belong to $BV(S^1)$ and satisfy the physical constraints 
$$
0<a_0,  \qquad 0<\mut_0, \qquad  |v_0|<1.
$$
Consider the approximate solutions $u_s = (W^s_\pm, V^s_\pm, a_s, \mut_s, v_s)$ constructed by the random choice scheme in Section~ \ref{subsec:glimm}. Then, the functions $u_s(t, \cdot)$ belong to $BV(S^1)$ for each fixed time $t\in[t_0, +\infty)$ in the 
expanding case and for each $t\in[t_0, T]$ in the contracting case, where $T<0$ is such that the mass--energy density 
satisfies the uniform bound
\eqref{eq:606} for some constant $C_1(T)$. 
Moreover, on any compact time interval, the solutions $u_s$ are Lipschitz continuous in time with values in $L^1$, and have uniformly bounded total variation, that is,  for all  $t, t' \in [t_0, T]$  
\be
\label{307}
TV\big( u_s(t, \cdot) \big) \leq C_0(T) \,TV(u_0),
\ee
\be
\label{308}
\int_{S^1} |u_s(t,\theta) - u_s(t', \theta)| \, d\theta \leq C_0(T) \,TV(u_0) \, \big( |t-t'| + s \big),
\ee
where $C_0$ is a constant depending on $t_0,\,T$ and the 
supremum bounds on the initial data and, additionally, on $C_1(T)$ in the contracting case.   
Furthermore, there exists a subsequence of $u_s$ converging in $L^1$, whose limit $u = (V, W, a, \mut, v)$  
is a weak solution to the system \eqref{geometry1}--\eqref{fluid} and satisfies the physical constraints  
$$
0 < a(t, \theta), \qquad 0< \mut(t, \theta), \qquad |v(t,\theta)|<1 
$$
and periodic boundary conditions.
\end{theorem}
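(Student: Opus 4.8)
The plan is to follow the standard Glimm-scheme convergence strategy, but carefully checking each step survives the nonstandard features here: the variable characteristic speed $a_s$, the ODE source updates, and the asymmetry between expanding and contracting cases. First I would record the uniform sup-norm and total variation bounds already supplied: Lemma~\ref{uniformbounds} gives $\sup_{[t_0,T]\times S^1}|u_s|\leq C_0(T)$ (with the proviso \eqref{eq:606} in the contracting case), and Lemma~\ref{diamond} together with the Glimm-functional argument gives the decay of total variation along spacelike mesh curves. From $L(J_2)-L(J_1)\leq C_0(T)\,s\,\eps_*(\Delta_{k,h})$ one forms the usual Glimm functional $F(J):=L(J)+K\,Q(J)$, with $Q$ the sum of strengths of approaching waves, and $K$ chosen large so that $F$ is non-increasing up to the $O(s)$ source contribution; summing the source contributions over a compact time strip $[t_0,T]$ produces only $O(1)$ extra total variation because the spatial mesh has $O(1/r)$ diamonds per time level and $O((T-t_0)/s)$ time levels, and $s\cdot(1/r)\cdot((T-t_0)/s)=O(1/r)$ — wait, that would diverge, so the correct bookkeeping is the one in Lemma~\ref{diamond}: the factor $s$ multiplies $\eps_*(\Delta_{k,h})$, and $\sum_{k,h}\eps_*(\Delta_{k,h})$ over one time level is $O(TV)$, hence summing over $O((T-t_0)/s)$ levels gives $O((T-t_0)\,TV)$, which is finite. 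This yields \eqref{307}.

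Next I would establish the $L^1$-in-time Lipschitz bound \eqref{308}. This is the classical estimate: comparing $u_s(t,\cdot)$ and $u_s(t',\cdot)$, on each mesh rectangle the solution moves at finite speed (bounded by $\sup a_s$, via \eqref{eq:606b}) plus an ODE displacement of size $O(s)$ per step; integrating over $\theta$ and telescoping over the $O(|t-t'|/s)$ intervening time levels gives a bound $C_0(T)\,TV(u_0)\,(|t-t'|+s)$, where the extra $s$ absorbs the discrepancy when $t,t'$ fall in the same mesh strip and the restarting/sampling jumps at the grid lines $t_k$. The sampling jumps are controlled because their $L^1$-norm at level $t_k$ is bounded by $r\cdot TV(u_s(t_k-,\cdot))=O(r)$, which is $O(s)$ by the fixed ratio $s/r$.

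With \eqref{307}–\eqref{308} in hand, Helly's theorem (in the form for $BV$ functions uniformly bounded and of uniformly bounded variation, combined with the $L^1$-time-continuity to get compactness in $L^1_{\mathrm{loc}}((t_0,T)\times S^1)$) extracts a subsequence $u_s\to u$ in $L^1$, with $u(t,\cdot)\in BV(S^1)$ for a.e.\ $t$ and periodic. The physical constraints $a>0$, $\mut>0$, $|v|<1$ pass to the limit because, by Lemma~\ref{uniformbounds}, they hold with uniform \emph{quantitative} margins ($a_s\geq 1/C_0(T)$, $\mut_s$ bounded above and below, $|v_s|\leq M<1$) that are preserved under $L^1$ convergence (up to passing to a.e.\ convergence along a further subsequence). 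Finally, consistency: one shows the limit is a weak solution of \eqref{geometry1}–\eqref{fluid} by the standard Glimm argument, testing the scheme against a smooth compactly supported $\varphi$, writing the defect as a sum over diamonds of (i) the exact-Riemann-solver defect, which vanishes identically since $\uh$ is an exact weak solution of the homogeneous conservation laws, (ii) the ODE-source defect, which converges to the correct source integral by construction \eqref{grsol} and the uniform bounds, and (iii) the sampling (random choice) error, which tends to zero for almost every choice of the equidistributed sequence $(a_k)$ by the classical equidistribution argument of Glimm — here one must note the defect is a martingale-type sum whose variance is $O(s)$ per term so that, after rescaling, the total error is $o(1)$ a.s. The linearity of the principal part of the $W_\pm,V_\pm$ equations makes the geometry components even easier, since no wave-interaction error arises there and jump relations are exact.

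The main obstacle is the consistency step in the contracting case: unlike the expanding case, where $a_s$ is bounded \emph{a priori} by $\sup a_0$ and all estimates are automatic, here every bound is conditional on the density bound \eqref{eq:606}, so one must check that this hypothesis is genuinely propagated by the scheme before blow-up and that the CFL condition \eqref{eq:606b} can be maintained with $s/r$ fixed — this uses the equation \eqref{geometry3} to convert the density bound into an a priori bound on $a_s$, exactly as in the proof of Lemma~\ref{uniformbounds}, but one must be careful that the constant $C_0(T)$ does not degenerate as $T$ approaches the (possibly finite) blow-up time. The rest is a faithful adaptation of the Glimm–Dafermos–DiPerna framework to this symmetry class.
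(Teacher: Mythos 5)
Your plan follows the paper's proof quite closely in its overall architecture: uniform sup--norm bound from Lemma~\ref{uniformbounds}, total--variation bound from Lemma~\ref{diamond}, Helly compactness, preservation of the physical constraints via the quantitative margins in Lemma~\ref{uniformbounds}, and a consistency argument splitting the defect into an exact--Riemann part, an ODE--source part, and a sampling part. Two points, however, deserve attention.

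First, you reach for the standard Glimm functional $F(J)=L(J)+K\,Q(J)$ with a quadratic interaction potential $Q$. This is not what the paper does, and it is not needed here: Proposition~\ref{inter} is a plain \emph{triangle inequality} $|\eps(u_l,u_r)|\leq |\eps(u_l,u_m)|+|\eps(u_m,u_r)|$, so that $X_1\leq 0$ in the proof of Lemma~\ref{diamond} without any quadratic remainder. This is the structural simplification (analogous to Nishida and Smoller--Temple for the isothermal/relativistic Euler system) that makes the analysis in this symmetry class tractable; invoking a quadratic $Q$ is unnecessary and slightly misrepresents where the difficulty lies. Second, your final bookkeeping, ``$\sum_{k,h}\eps_*(\Delta_{k,h})$ over one time level is $O(TV)$, hence summing over $O((T-t_0)/s)$ levels gives $O((T-t_0)\,TV)$,'' is circular as stated: the wave strength at level $k$ is bounded by $L(J_k)$, which is precisely the quantity you are trying to control. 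The paper instead closes the argument with a discrete Gronwall inequality,
\begin{equation*}
L(J_{k+1}) \leq \big(1 + s\,C(T)\big)\,L(J_k) \leq e^{C(T)(T-t_0)}\,L(J_1),\qquad L(J_1)\leq C\,TV(u_0),
\end{equation*}
which gives the multiplicative (not additive) bound \eqref{307} directly. With this fix, the rest of your sketch (the $L^1$-Lipschitz estimate \eqref{308} via finite speed of propagation and the sampling--jump control, and the three-term consistency decomposition $\Omega_s^1+\Omega_s^2+\Omega_s^3$ with the random choice error vanishing a.e.\ in the equidistributed sequence) matches the paper's proof. Your remarks on the contracting case and the role of the density bound \eqref{eq:606} in maintaining the CFL condition \eqref{eq:606b} are also in line with the paper.
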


\begin{proof}
Lemma \ref{diamond} implies that we can estimate the total variation $L(J_{k+1})$, along the curve $J_{k+1}$ connecting 
vertices of the form $(\theta_{k+1,h}, t_{k+1})$ to vertices $(\theta_{k+2,h+1}, t_{k+2})$, 
in terms of the the total variation $L(J_k)$ of the curve $J_k$. Hence, we have 
inductively
$$
\aligned 
L(J_{k+1}) &\leq L(J_k) + s\,C(T) \sum_h \eps_*(\Delta_{k,h}) 
\\
	   &\leq (1 + s\,C(T))L(J_k)  \leq e^{C(T)}L(J_1),
\endaligned 
$$
and also 
$$
L(J_1)\leq C \, TV(u_0), 
$$
which imply the claim made on $u_s$. Observe that by periodically extending $u_0$ to the real line, the proof of \eqref{308} proceeds 
along the same lines as the standard proof since, by construction, $u_s$ are periodic as 
well. By Helly's compactness theorem, there exists a subsequence, still denoted by $u_s$, converging strongly in the $L^1$ norm toward a function $u$ with bounded variation. Moreover, the uniform estimates valid for $u_s$ extend to the limit $u$.

It remains to show that $u$ is a weak solution to the system \eqref{geometry1}--\eqref{fluid}, which we express in the form 
$$
\del_t H(u,t) + \del_\theta F(u,t) + G(u,t) = 0,
$$
with obvious notation. By fixing a compactly supported smooth function $\phi$, we have  
$$
\aligned
& \int^{+\infty}_0 \int^{+\infty}_{-\infty} \big( \del_t \phi \,H(u_s,t)  + \del_\theta \phi \,F(u_s, t) + \phi \,G(u_s, t) \big) d\theta dt 
\\
& =\sum_k \sum_{h + k \, \mathrm{ odd}} \int_{k s}^{(k + 1)s}\int_{(h - 1)r }^{(h + 1)r } \big( \del_t \phi \, H(u_s, t)  + \del_\theta \phi \, F(u_s, t) + \phi \, G(u_s, t) \big) d\theta dt
\\
&
 = \Omega_s^1 + \Omega_s^2 + \Omega_s^3,
\endaligned
$$
where
$$
\aligned 
\Omega_s^1 &:= \sum_{k,h \atop h + k \, \mathrm{ odd}}  \int_{k s}^{(k + 1)s}\int_{(h - 1)r }^{(h + 1)r }\phi G(u_s, t) d\theta dt,
\endaligned
$$
$$
\aligned
\Omega_s^2 := \sum_{k,h \atop h + k \, \mathrm{ odd}}  \int_{(h - 1)r }^{(h + 1)r } 
\Bigg( 
& H\big(u_s((k + 1)s -, \theta), (k + 1)s -\big)\,\phi((k + 1)s, \theta) 
\\
& \hskip2.cm  - H\big(u_s(k s +, \theta),k s + \big) \, \phi(k s, \theta) \Bigg) \,  d\theta,
\endaligned
$$
and 
$$
\aligned
\Omega_s^3 := \sum_{k,h \atop h + k \, \mathrm{ odd}}  
 \int_{k s}^{(k + 1)s} \Bigg( 
& F \big( u_s(t, (h+1)r -), t \big) \, \phi(t, (h+1)r) dt 
\\
& \hskip1.cm  -  F \big( u_s(t, (h-1)r +), t \big) \, \phi(t, (h-1)r) \Bigg) \, dt.
\endaligned
$$
For the first term, we have  
$$
\Omega_s^1 =  O(1) \, \sum_k \sum_{h + k \, \mathrm{ odd}} \big( s^2 + r^2 \big) \, \big( s + r + |u^r_{k, h+1} - u^r_{k, h-1}| \big) \, \chi_{\mathrm{supp}\phi}, 
$$ 
where $\chi_{\mathrm{supp}\phi}$ denotes the characteristic function of the support of $\phi$; see \cite{BLSS} for further details. Since $u_s$ have uniformly bounded total variation and $\phi$ is smooth, we have $\Omega_s^1 \rightarrow 0$ for $s\rightarrow 0$.  
The second term can be rewritten as 
$$
\aligned
\Omega_s^2  
=& \sum_{k \geq 1} \sum_{h + k \, \mathrm{ odd}} \int_{(h - 1)r }^{(h + 1)r } \big( H(u_s(ks -, \theta_{k,h}), ks -) - H(u_s(ks, hr), ks) \big) \phi(k s, \theta) d\theta 
\\
&- \int^{+\infty}_{-\infty} H(u_s(t_0, \theta), t_0)\,\phi(t_0, \theta) d\theta,
\endaligned
$$
and it is well--known \cite{Glimm,Dafermos-book} that the first part on the right--hand side goes to zero as $s \rightarrow 0$ for almost every 
equidistributed sequence $(a_k)_{k\in\mathbb{N}}$. It remains to estimate $\Omega_s^3$, but after rearranging some terms, 
we obtain
$$
\aligned
\Omega_s^3 = \sum_{k} \sum_{h + k \, \mathrm{ odd}} \int_{k s}^{(k + 1)s}   \big(  
& F( u_s(t, (h+1)r +), t) 
  - F( u_s(t, (h+1)r -), t) \big) \phi(t, (h+1)r) dt = 0, 
\endaligned
$$
since $u_s$ is smooth there. Collecting the previous estimates, we conclude that, as $s \to 0$,   
$$
\aligned
& \int^{+\infty}_0 \int^{+\infty}_{-\infty} \big( \del_t \phi \, H(u_s, t)  + \del_\theta \phi F(u_s, t) + \phi G(u_s, t) \big) d\theta dt  
\\
& +   \int^{+\infty}_{-\infty} H(u_s(t_0, \theta), t_0)  \phi(t_0, \theta) d\theta 
\endaligned
$$
tends to zero. Hence, by the Lebesgue's dominated convergence theorem, passing to the limit $u_s \rightarrow u$, we arrive at 
$$
\aligned
& \int^{+\infty}_0 \int^{+\infty}_{-\infty} \big( \del_t \phi \, H(u, t)  + \del_\theta \phi F(u, t) + \phi G(u, t) \big) d\theta dt  
\\
&   +   \int^{+\infty}_{-\infty} H(u_0(t_0, \theta), t_0)  \phi(t_0, \theta) d\theta = 0.
\endaligned
$$
\end{proof}


\section{Future expanding Einstein--Euler spacetimes}
\label{sec:6} 

We are now in a position to establish our first main result.

\begin{theorem}[Expanding Einstein--Euler space\-times with Gowdy \- symmetry] 
\label{maintheo}  
Consider any BV regular, Gowdy symmetric, initial data set on $T^3$ for the Einstein--Euler equations, and assume that these initial data set has constant area $t_0>0$ and is everywhere expanding (toward the future). Then, there exists a BV regular, Gowdy symmetric spacetime $M$ with metric $g_{\alpha\beta}$ and matter fields $\mu$ and $u^\alpha$
satisfying the Einstein--Euler equations \eqref{EE1}--\eqref{eq:pressure}
in the distributional sense, and the following properties hold.
This spacetime is (up to diffeomorphisms) a Gowdy-symmetric 
future development of the initial data set and  
is globally covered by a single chart of coordinates $t$ and $(\theta,x,y) \in T^3$, with 
$$
M = [t_0, + \infty) \times T^2,   
$$
the time variable being  chosen to coincide with the area of the surfaces of symmetry.  
\end{theorem}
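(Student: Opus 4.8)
The proof assembles the machinery built up above. The plan is to (i) normalize the given geometry so as to introduce areal coordinates near the initial slice and read off BV initial data for the reduced first-order system; (ii) invoke the convergence theorem for the generalized random choice scheme to produce a \emph{global} weak solution of the essential equations \eqref{geometry1}--\eqref{fluid}; (iii) reconstruct the remaining metric coefficients from the constraints and assemble the spacetime metric; and (iv) translate back to the original Einstein--Euler system and identify the global product structure. For step (i), by Gowdy symmetry the area of the orbits of the $T^2$ action is a well-defined function on $\Hcal_{t_0}$; the hypothesis that the data have constant area $t_0>0$ and are everywhere expanding lets one introduce areal coordinates $(t,\theta)$ on a neighbourhood of $\Hcal_{t_0}$ in which $g$ has the form \eqref{areal} with $t|_{\Hcal_{t_0}}=t_0$. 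Reading off $U_0,A_0,\eta_0,a_0$ and their first-order spatial and normal derivatives, and passing to the variables \eqref{404} together with $\nu_0=\eta_0+\log a_0$ and $\mut_0=e^{2(\nu_0-U_0)}\mu_0$, yields $u_0=(W^0_\pm,V^0_\pm,a_0,\mut_0,v_0)\in BV(S^1)$; the BV hypothesis on the metric and matter is exactly what makes $u_0\in BV(S^1)$, the Lorentzian/physical character of the data gives $a_0>0$, $\mut_0>0$, $|v_0|<1$, and the prescribed data satisfy the Einstein constraints, i.e. \eqref{58} holds at $t=t_0$.

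For step (ii), I would apply Theorem~\ref{apptheo} in the expanding case. Since $t_0>0$ and $a_t=-at\mut(1-k^2)<0$, the coefficient $a$ cannot blow up, so the continuation criterion imposes no restriction and the random choice approximations converge, on every compact $[t_0,T]$ and hence on all of $[t_0,+\infty)$, to a weak solution $u=(W_\pm,V_\pm,a,\mut,v)$ of \eqref{geometry1}--\eqref{fluid} that is periodic in $\theta$, lies in $BV(S^1)$ for each $t$, is Lipschitz in $t$ with values in $L^1(S^1)$, and satisfies $a>0$, $\mut>0$, $|v|<1$ throughout; by Lemma~\ref{uniformbounds} and Proposition~\ref{406b}, $a$ stays bounded above by $\sup_{S^1}a_0$ and, together with $(\log a)_t=-(1-k^2)t\mut$ and the density control, bounded away from zero on each compact time interval. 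For step (iii), following Section~\ref{subsec:ftos}, I would recover $\nu$ by integrating the first constraint in \eqref{58} in $t$ from $\nu_0$, using the compatibility identity $(a^{-1}C_2)_t-(C_1)_\theta=0$ --- a consequence of the evolution equations, valid for weak solutions since it only multiplies equations with linear principal part --- to verify that the second constraint in \eqref{58} and the equation \eqref{59} hold and that $\nu$ is periodic; then $U$ and $A$ by \eqref{eq:409} and its analogue. Because the prescribed data satisfy the constraints and the constraints propagate, no inconsistency arises. Setting $\eta=\nu-\log a$ and $N=ae^{\eta-U}$ and substituting into \eqref{areal}, all coefficients $U,A,\eta,\log a$ are Lipschitz in $t$ and BV (hence bounded) in $\theta$, with $a$ locally bounded away from $0$ and $\infty$, so $g$ is a locally Lipschitz Lorentzian metric with locally Lipschitz lapse and the slices $\{t=\mathrm{const}\}$ are spacelike.

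For step (iv), the weak form \eqref{60}--\eqref{61} of the Euler equations, with these coefficients, is precisely the distributional statement of $\nabla_\alpha T^{\alpha\beta}=0$ in the sense of Definition~\ref{102}, while the weak evolution equations \eqref{evolu1} together with the constraints \eqref{constraint} are equivalent to \eqref{EE1} with the stress tensor \eqref{stren}, \eqref{eq:pressure}; all these equivalences were established in Sections~\ref{sec:2}--\ref{sec:3} using only manipulations legitimate under BV regularity, in particular keeping the term $a_\theta$ only inside combinations in which it cancels or is multiplied by $a$. The solution being defined for all $t\in[t_0,+\infty)$, the manifold is $M=[t_0,+\infty)\times T^3$ with coordinates $(t,\theta,x,y)$, $\theta\in S^1$, $(x,y)\in T^2$, the $T^2$ action acting by translation in $(x,y)$; since $\det({}^{(2)}g)=t^2$, the area of every symmetry orbit equals a fixed constant times $t$, so the time function coincides with the orbit area. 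Matching $U_0,A_0,\eta_0,a_0$ and the first-order variables at $t=t_0$ shows that the induced metric and second fundamental form on $\Hcal_{t_0}$ are the prescribed ones, so $M$ is, up to diffeomorphism, a Gowdy-symmetric future development of the data.

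The step I expect to be the main obstacle is the back-translation in step (iv): checking rigorously that a weak BV solution of the reduced first-order system genuinely yields a distributional solution of the Einstein--Euler equations in their original geometric form, and in particular that the constraints are \emph{propagated} in this low-regularity class. The difficulty is that several natural manipulations (writing the field equations with the operator $\del_{tt}-a^2\del_{\theta\theta}$, or multiplying by $a_\theta$) are illegitimate because the relevant products of distributions are undefined, so one must verify that every identity used is of the permitted type --- linear principal part, or multiplication by a Lipschitz factor --- and track the precise combinations in which $a_\theta$ enters. A secondary technical point is the regularity bookkeeping for the reconstructed $\nu,U,A$ and the verification, uniform on compact time intervals, that the assembled metric is genuinely Lorentzian with Lipschitz lapse, which hinges on the lower bound for $a$ coming from $(\log a)_t=-(1-k^2)t\mut$ together with the density control from Theorem~\ref{apptheo}.
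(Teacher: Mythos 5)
Your proof follows the paper's own argument almost exactly: apply Theorem~\ref{apptheo} (convergence of the generalized random choice scheme) in the expanding case, where the monotone decrease of $a$ renders the continuation criterion vacuous, and then invoke the reconstruction and constraint-propagation argument of Section~\ref{subsec:ftos} to recover $\nu,U,A$ and assemble the metric. Your steps (i) and (iv) spell out the coordinate normalization and the back-translation to the geometric Einstein--Euler equations more explicitly than the paper's proof (which leaves these to Sections~2--3), but the substance and the key lemmas are the same.
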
  

This global existence result was established earlier by LeFloch and Rendall \cite{LeFlochRendall} for a different class of regularity, which provides less regularity properties on the spacetimes.  

\begin{proof}
Given an arbitrary initial data set $u = (W_\pm^0, V_\pm^0, a_0, \mut_0, v_0)$,  
we have constructed (in Section~\ref{subsec:glimm}) a sequence of approximate solutions for the ``essential'' Einstein--Euler equations \eqref{geometry1}--\eqref{fluid} and this sequence was found to converge to an exact solution $u = (W_\pm, V_\pm, a, \mut, v)$ with BV regularity to the Einstein--Euler equations understood in the sense of distributions; Cf.~Theorem~\ref{apptheo} above. 
More precisely, $u$ is locally Lipschitz continuous in time and has bounded variation with respect to the space variable, so 
for any fixed $T \in (t_0, +\infty)$ and all $t, t'\in [t_0, T]$
\be
\aligned
TV(u(t, \cdot))  & \leq C_0(T) \, TV(u_0),
\\
\int_{S^1} |u(t,\theta) - u(t', \theta)| \, d\theta   & \leq C_0(T) \, \big (|t-t'| + s \big) \,TV(u_0). 
\endaligned 
\ee
Once a solution to the first--order system \eqref{geometry1}--\eqref{fluid} has been found, we only have to recover 
the original metric coefficients $U, A$ and $\nu$ and show that the constraint equations remain satisfied for all times. However, this has 
precisely been done in Section~\ref{subsec:ftos}; hence, the proof of Theorem~\ref{maintheo} is now complete. 
\end{proof}


\section{Future contracting Einstein--Euler spacetimes}
\label{sec:7}

\subsection{Global geometry and energy functionals} 

We are now in a position to state our second main result. Observe that, for the contracting spacetimes under consideration now, a partial existence result was established in LeFloch and Rendall \cite{LeFlochRendall} for a different class of regularity, as it was already established therein that the development below covers $[t_0, t_c)$ with $t_c \leq 0$.  

\begin{theorem}[Contracting Einstein--Euler spacetimes with Gowdy symmetry] 
\label{maintheo2}  
Consider any BV regular, Gowdy symmetric, initial data set on $T^3$ for the Einstein--Euler equations, which has constant area $-t_0>0$ and is everywhere contracting toward the future. Then, there exists a BV regular, Gowdy symmetric spacetime $M$ with metric 
$g_{\alpha\beta}$ and matter fields $\mu$ and $u^\alpha$
satisfying the Einstein--Euler equations \eqref{EE1}--\eqref{eq:pressure}
in the distributional sense, and the following properties hold.
This spacetime is (up to diffeomorphisms) a Gowdy-symmetric future development of the initial data set and  
is globally covered by a single chart of coordinates $t$ and $(\theta,x,y) \in T^3$, with  
$$
M = [t_0, t_c) \times T^2  
$$
for some $t_c\leq 0$, the time variable being chosen to coincide with minus the area of the surfaces of symmetry and so that the mass energy density blows-up at $t_c$, that is, 
\be
\label{eq:606-c} 
\lim_{t \to t_c} \sup_{S^1} \mut(t, \cdot) = + \infty.
\ee 
Furthermore, when the initial data set satisfies 
\be
\label{eqn:70}
\int_{S^1} a_0^{-1} \, \Bigg(1 - \frac{4}{3} \, t_0^2 \,\frac{1 + k^2v_0^2}{1 - v_0^2} \,  \mut_0 \Bigg) \, d\theta \geq 0,
\ee
the future development exists up to the maximal time $t_c = 0$ when the area of the surfaces of symmetry shrinks to zero. 
\end{theorem}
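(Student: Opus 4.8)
The plan is to obtain the first part from the convergence theory already in place (Theorem~\ref{apptheo}) together with the continuation criterion of Proposition~\ref{406b}, and then to produce a single functional on $S^1$ whose sign is preserved by the evolution and whose positivity at $t_0$ --- which is exactly \eqref{eqn:70} --- forces the areal time to reach $0$. First I would run Theorem~\ref{apptheo} on the given BV data $(W^0_\pm,V^0_\pm,a_0,\mut_0,v_0)$ to obtain a BV weak solution of \eqref{geometry1}--\eqref{fluid} on a maximal interval $[t_0,t_c)$ with $t_c\le 0$, and recover $U,A,\nu$ and verify the constraints exactly as in Section~\ref{subsec:ftos} and in the proof of Theorem~\ref{maintheo}. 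By the continuation criterion, the solution extends beyond any $T\in(t_0,0)$ as long as $\sup_{[t_0,T]\times S^1}\mut<\infty$ (equivalently, by \eqref{geometry3}, as long as $a$ stays bounded); hence if $t_c<0$ then $\limsup_{t\to t_c^-}\sup_{S^1}\mut(t,\cdot)=+\infty$, which is \eqref{eq:606-c}. It remains to prove $t_c=0$ when \eqref{eqn:70} holds.

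The key object is
\[
\Phi(t):=\int_{S^1}a^{-1}\Big(1-\tfrac{4}{3}\,t^2\,\tfrac{1+k^2v^2}{1-v^2}\,\mut\Big)\,d\theta ,
\]
so that \eqref{eqn:70} reads $\Phi(t_0)\ge 0$, and I claim $\Phi'\ge 0$ on $[t_0,t_c)$. Differentiating the first term and using \eqref{geometry3} gives $\del_t(a^{-1})=(1-k^2)\,t\,a^{-1}\mut$. Differentiating the second term and using the first Euler equation in \eqref{fluid}, whose $\theta$--flux integrates to $0$ on the circle, one inserts $\Sigma'''_0$ from \eqref{52} and completes the squares $\tfrac{t^2}{2}W_\pm^2-\tfrac{t}{2}W_\pm=\tfrac12(tW_\pm-\tfrac12)^2-\tfrac18$; this collapses $(1-k^2)t\,\Sigma'''_0$ into $-\tfrac{1+3k^2}{4}+(1-k^2)\,a\,t^2h_1$, with $h_1\ge0$ the quantity from Section~\ref{sec:4}. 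After the bookkeeping, all $k$--dependent constants combine into $\tfrac43$ and one is left with
\[
\Phi'(t)=\tfrac{4}{3}\,t\int_{S^1}a^{-1}\mut\Big(1-\tfrac{1+k^2v^2}{1-v^2}\Big)\,d\theta-\tfrac{4}{3}(1-k^2)\,t^3\int_{S^1}\mut\,h_1\,d\theta ,
\]
and both terms are non--negative for contracting data, since $t<0$, $1-\tfrac{1+k^2v^2}{1-v^2}=-\tfrac{(1+k^2)v^2}{1-v^2}\le0$, $\mut>0$, $h_1\ge0$. Thus $\Phi(t)\ge\Phi(t_0)\ge0$ throughout $[t_0,t_c)$. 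This is a legitimate weak--solution computation: it only multiplies the geometry equations (whose principal part is linear) and integrates the conservation--form Euler equations, and it uses only the combination $a\cdot(\text{second Euler equation})$, in which $a_\theta$ cancels out as recorded in \eqref{52}.

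From $\Phi\ge0$ and $\tfrac{1+k^2v^2}{1-v^2}\ge1$ one gets $\int_{S^1}a^{-1}\mut\,d\theta\le\tfrac{3}{4t^2}\int_{S^1}a^{-1}\,d\theta$, and since \eqref{geometry3} with $t<0$ makes $a$ pointwise increasing --- so $\int_{S^1}a^{-1}\,d\theta$ is non--increasing --- we obtain $\int_{S^1}a^{-1}\mut\,d\theta\le\tfrac{3}{4t^2}\int_{S^1}a_0^{-1}\,d\theta$. In addition, integrating \eqref{59} over $S^1$ kills the $\del_\theta$--term and shows that $\mathcal E(t):=\int_{S^1}t\,a^{-1}\big(\nu_t+t\mut(1-k^2)\big)\,d\theta$, which by \eqref{58} equals $\int_{S^1}\big(\tfrac{t^2}{2a}(W_+^2+W_-^2)+\tfrac{1}{8a}(V_+^2+V_-^2)+t^2a^{-1}\mut\tfrac{1+k^2v^2}{1-v^2}\big)\,d\theta\ge0$, is non--increasing, so $\mathcal E(t)\le\mathcal E(t_0)$. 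These two a priori bounds control every integral quantity entering the Glimm estimates, uniformly on each compact $[t_0,T]\subset[t_0,0)$, and crucially without any lower bound on $a$. To conclude I would argue by contradiction: if $t_c<0$, then on $[t_0,t_c)$ one has $t^2\ge t_c^2>0$, so the right--hand sides above stay finite there, and I would re--run the sup--norm and total--variation estimates of Lemma~\ref{uniformbounds} and \eqref{307} on $[t_0,t_c)$ --- the hypothesis \eqref{eq:606} now being supplied, through a bootstrap in time, by the $\Phi$-- and $\mathcal E$--estimates together with the uniform $BV$ control, which converts the integral bound on $\mut$ into the sup bound required by the continuation criterion. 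This contradicts $\limsup_{t\to t_c^-}\sup_{S^1}\mut=+\infty$, forcing $t_c=0$.

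The genuinely routine parts are the construction in the first paragraph and the long but mechanical algebra behind $\Phi'\ge0$. The main obstacle I anticipate is the last step: Lemma~\ref{uniformbounds} and the bound \eqref{307} are conditional on \eqref{eq:606}, so one must verify that the integral control coming from $\Phi\ge0$ and from the monotonicity of $\mathcal E$ --- fed back through the $BV$ bounds --- actually prevents $\sup_{S^1}\mut$, equivalently $\sup_{S^1}a$, from escaping on any $[t_0,T]$ with $T<0$; the delicacy is that $a_\theta$ carries no $BV$ control, so the passage from ``the average of $a^{-1}\mut$ is bounded'' to ``$\sup_{S^1}\mut$ is bounded'' must be organized carefully along the time--stepping of the scheme. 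I expect that, and not the identity $\Phi'\ge0$, to be where the real work lies.
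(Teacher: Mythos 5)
Your overall architecture matches the paper's, and the identity $\Phi'\ge 0$ is a genuine (and cleaner) reformulation of one step: what the paper does by integrating the first Euler equation over $S^1$, completing the square in $\Sigma'''_0$ to get $(1-k^2)t\Sigma'''_0=-\tfrac{3k^2+1}{4}+(1-k^2)at^2h_1$, and running Gronwall on $\int_{S^1}a^{-1}\mut|t|\,d\theta$, you package as monotonicity of a single functional $\Phi$ whose positivity at $t_0$ is precisely \eqref{eqn:70}. Combined with $(\log\int_{S^1}a^{-1})_t\ge -\tfrac{3(1-k^2)}{4|t|}$ this reproduces the paper's positive lower bound $\int_{S^1}a^{-1}(t)\,d\theta\ge(|t|/|t_0|)^{3(1-k^2)/4}\int_{S^1}a_0^{-1}\,d\theta$, which is exactly the hypothesis \eqref{keyassumption} of Proposition~\ref{prop:1}. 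Your $\mathcal{E}$ is likewise essentially $t^2E_2$ and its monotonicity matches $E_2(t)\le E_2(t_0)\,t_0^2/t^2$.

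The gap, however, is exactly where you point to, and the bootstrap you sketch cannot close as stated. Both $\Phi\ge 0$ and the $\mathcal{E}$-bound give you only \emph{spatial averages} of $a^{-1}\mut$. The BV and sup-norm estimates you want to feed this into --- Lemma~\ref{uniformbounds} and \eqref{307} --- are themselves conditional on the sup bound \eqref{eq:606}, so invoking them to deduce \eqref{eq:606} is circular, and no amount of ``organizing carefully along the time-stepping'' fixes that without new input. The missing ingredient is Lemma~\ref{prop:2}: integrating the \emph{momentum} Euler equation in conservation form over a slab $[t_0,t]\times[\theta_0,\theta_1]$ yields, for every pair $\theta_0,\theta_1\in S^1$,
$$
k^2\int_{t_0}^t\frac{\mut\,|\tau|}{1-v^2}(\tau,\theta_1)\,d\tau\le (1+k^2)\int_{t_0}^t\frac{\mut\,|\tau|}{1-v^2}(\tau,\theta_0)\,d\tau + |t_0|\big(E_1(t)-E_1(t_0)\big)+2|t|E_2(t)+2|t_0|E_2(t_0),
$$
a control on the \emph{spatial oscillation} of the time-averaged density which requires no BV information on $\mut$ or $a$. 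Together with Lemma~\ref{lem:1} (which converts the $\int_{S^1}a^{-1}$ lower bound into a bound on the spatial infimum) this gives $\sup_{S^1}\int_{t_0}^t\mut\,d\tau<\infty$, hence $\sup_{S^1}a(t)<\infty$ via \eqref{geometry3}, and only \emph{then} do the Glimm estimates apply. You should state and prove this slab estimate (or an equivalent pointwise-in-$\theta$ flux estimate) before the contradiction argument is allowed to run; as written, the proposal defers the one step that does not follow from what precedes it.
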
  

The theorem above seems to be essentially optimal and, for instance, we do not expect that a global--in--space, energy--type functional should allow us to decompose the space of initial data into spacetimes that blow--up (that is, $t_C<0$) and spacetimes that do not blow--up (that is, $t_C=0$). At this juncture, let us recall that the class of $T^2$--symmetric {\sl vacuum} spacetimes was studied by Isenberg and Weaver \cite{IsenbergWeaver} who could exhibit an energy that vanishes precisely for a set of exceptional solutions and also controls the behavior of the area function. Later, Smulevici \cite{Smulevici} was able to handle spacetimes satisfying the Einstein equation with a cosmological constant. 
In contrast to these problems, the Euler equations are {\sl nonlinear} partial differential equations of {\sl hyperbolic} type, so that 
the analysis of the global geometry of Einstein--Euler spacetimes is technically much more involved than the 
one of Einstein--vacuum or Einstein--Vlasov spacetimes.  
By taking advantage of the property of finite speed of propagation satisfied by the Euler equations, it is straightforward to ``localize'' our conclusions above and to construct solutions to the Einstein--Euler that are small perturbations {\sl in the $L^1$ norm} (but not in the BV semi--norm) of non--blow-up solutions (for instance homogeneous spacetimes) and, however, blow--up in a time $t_C <0$ with $t_C$ {\sl arbitrarily close} to the initial time $t_0$. 

The rest of this section is devoted to the proof of Theorem~\ref{maintheo2} and will make use of the energy functionals
$$
\aligned
E_1(t) &:= \int_{S^1} h_1 \, d\theta, \qquad \quad  E_2(t) := \int_{S^1} \big( h_1 + h_1^M \big) \, d\theta, 
\endaligned
$$
with
$$
h_1 = \frac{(tW_+ - 1/2 )^2}{2at^2} + \frac{(tW_- - 1/2 )^2}{2at^2} + \frac{V_+^2}{8at^2} + \frac{V_-^2}{8at^2},
\qquad
\quad 
h_1^M := \frac{\mut}{a}\frac{1+k^2v^2}{1-v^2}.
$$ 
Observe that $h_1$ can be obtained from the standard energy density by a simple transformation of the metric coefficient $U \mapsto 2U - \log t$. Introduce the flux 
$$
g_1:= \frac{(tW_+ -1/2)^2}{2at^2}  - \frac{(tW_- -1/2)^2}{2at^2}  + \frac{V_+^2}{8at^2} - \frac{V_-^2}{8at^2},
$$
and observe that $h_1, g_1$ satisfy the balance laws (cf.~with \eqref{eq:hc}) 
$$
\aligned
\del_t h_1 + \del_\theta (ag_1) &= \frac{a_t}{a}h_1 - \frac{2k_1}{t}, 
\\
\del_t g_1 + \del_\theta (a h_1) &= \frac{a_t}{a}g_1 - \frac{g_1}{t},
\endaligned
$$
with $k_1 := \frac{(t(W_+ + W_-) - 1)^2}{4at^2} + \frac{(V_+ - V_-)^2}{16at^2}$. 
By using the evolution equations, we obtain the following result.

\begin{lemma}
Both functionals $E_1$ and $E_2$ are monotonically increasing and, specifically, 
\be
{d E_1 \over dt} (t) = \int_{S^1}
\Bigg(
 \frac{a_t}{a}h_1 - \frac{2k_1}{t}
\Bigg)
\, d\theta \geq 0
\label{der1}
\ee 
\be
\aligned
& {d E_2 \over dt} (t) 
  = -\frac{1}{t}\int_{S^1} 
\Bigg( 
\frac{(t(W_+ + W_-) - 1)^2}{2at^2} + \frac{(V_+ + V_-)^2}{8at^2} + h_1^M + \frac{\mut}{a}\frac{3k^2 + 1}{4}
\Bigg)
\, d\theta \geq 0.
\label{der2}
\endaligned
\ee
\end{lemma}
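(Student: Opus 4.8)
The plan is to treat both functionals by the same recipe: produce for each a balance law of the form $\del_t(\text{density})+\del_\theta(\text{flux})=(\text{source})$, integrate it over the circle so that the flux divergence drops out by periodicity, and then read off the sign of the source using $t<0$ together with $a_t>0$. Throughout, one uses that $a,\mut>0$ and $|v|<1$ hold on the spacetime (Theorem~\ref{apptheo}), that $a$ is Lipschitz in $t$ with $a_t=-at\mut(1-k^2)$ by \eqref{geometry3} --- hence $a_t>0$ in the contracting regime, since $t<0$, $a,\mut>0$ and $0<k<1$ --- and that $h_1,h_1^M\in L^\infty_\loc(BV(S^1))\cap\lip_\loc(L^1(S^1))$, so that $t\mapsto E_1(t)$ and $t\mapsto E_2(t)$ are locally Lipschitz and may be differentiated using the distributional identities below.

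For $E_1$, I would simply integrate the balance law $\del_t h_1+\del_\theta(ag_1)=\frac{a_t}{a}h_1-\frac{2k_1}{t}$ over $S^1$: the term $\int_{S^1}\del_\theta(ag_1)\,d\theta$ vanishes because $ag_1$ is periodic in $\theta$, which yields \eqref{der1}. Since $a_t>0$ and $h_1\ge0$, the first term of the integrand is $\ge0$; since $t<0$ and $k_1\ge0$, so is the second. Hence $\frac{dE_1}{dt}\ge0$.

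For $E_2$ the new ingredient is a balance law for $h_1^M=\frac{\mut}{a}\frac{1+k^2v^2}{1-v^2}$, which I would extract from the first equation of \eqref{fluid}. Noting that this equation has density $t\,h_1^M$ and flux $ta\,j_1^M$, where $j_1^M:=\frac{\mut}{a}\frac{(1+k^2)v}{1-v^2}$, it reads $\del_t(t\,h_1^M)+\del_\theta(ta\,j_1^M)=a^{-1}t\mut(1-k^2)\,\Sigma'''_0$. The key computation is the completing-the-square identity
$$
-\tfrac12(W_++W_-)+\tfrac t2(W_+^2+W_-^2)+\tfrac1{8t}(V_+^2+V_-^2)=ta\,h_1-\tfrac1{4t},
$$
which, inserted into the formula \eqref{52} for $\Sigma'''_0$, turns the right-hand side into $t^2\mut(1-k^2)\,h_1-\frac{\mut}{a}\,\frac{3k^2+1}{4}$ (using $k^2+\frac{1-k^2}{4}=\frac{3k^2+1}{4}$). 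Using \eqref{geometry3} once more in the form $t^2\mut(1-k^2)=-t\,\frac{a_t}{a}$, then dividing by $t$ and subtracting $\frac{h_1^M}{t}$, one gets
$$
\del_t h_1^M+\del_\theta(a\,j_1^M)=-\frac{h_1^M}{t}-\frac{a_t}{a}h_1-\frac{\mut}{a t}\,\frac{3k^2+1}{4}.
$$
Adding this to the balance law for $h_1$ cancels the $\frac{a_t}{a}h_1$ terms and produces
$$
\del_t(h_1+h_1^M)+\del_\theta\big(a(g_1+j_1^M)\big)=-\frac1t\Big(2k_1+h_1^M+\frac{\mut}{a}\,\frac{3k^2+1}{4}\Big);
$$
integrating over $S^1$ kills the flux, and recalling $2k_1=\frac{(t(W_++W_-)-1)^2}{2at^2}+\frac{(V_++V_-)^2}{8at^2}$ one obtains exactly \eqref{der2}. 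Since the integrand is a sum of nonnegative terms and $t<0$, this gives $\frac{dE_2}{dt}\ge0$.

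The main obstacle I anticipate is the bookkeeping in the $E_2$ computation: isolating the completing-the-square identity and verifying that, after inserting \eqref{52} and using \eqref{geometry3}, the two $\frac{a_t}{a}h_1$ contributions cancel exactly. No regularity difficulty arises: the flux terms integrate to zero using only periodicity and the BV bound, and the algebraic manipulations of the sources are pointwise almost-everywhere identities, legitimate for weak solutions because the principal part of \eqref{geometry1}--\eqref{geometry2} is linear, so multiplying those equations by the $W_\pm,V_\pm$ to form $h_1,g_1$ generates no spurious jump contributions.
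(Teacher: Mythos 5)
Your proposal is correct and implements precisely the approach the paper's terse remark (``By using the evolution equations, we obtain the following result'') indicates, supplying the key details the paper omits: the balance law for $h_1^M$ derived from the first fluid equation, the completing-the-square identity $-\tfrac12(W_++W_-)+\tfrac t2(W_+^2+W_-^2)+\tfrac1{8t}(V_+^2+V_-^2)=ta\,h_1-\tfrac1{4t}$ that converts the source into $-t\frac{a_t}{a}h_1-\frac{\mut}{a}\frac{3k^2+1}{4}$ via \eqref{geometry3}, the cancellation of the $\frac{a_t}{a}h_1$ terms upon adding the two balance laws, the vanishing of the flux integral over $S^1$ by periodicity, and the sign analysis using $t<0$, $a_t>0$. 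Incidentally, your computation confirms that $k_1$ must carry $(V_++V_-)^2$ (as in Section~4 and as appears in \eqref{der2}), so the $(V_+-V_-)^2$ written in the definition of $k_1$ just before the lemma is evidently a typographical slip.
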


Hence, we have 
${d E_2 \over dt} \leq -\frac{2E_2}{t}$ and thus 
$$
E_1(t) \leq E_2(t) \leq E_2(t_0)\frac{t_0^2}{t^2},
$$
which implies that the functionals $E_1$ and $E_2$ do not blow up before the singularity hypersurface $t=0$ is reached.

Observe that the arguments in Section~\ref{sec:6} can be repeated, once we establish that the metric coefficient $a$ 
and fluid density $\mu$ do not blow--up.  The new difficulty comes from the uniform bound \eqref{eq:606} 
on the mass--energy density. In fact, as discussed now, there are solution where this condition {\sl does not hold}  uniformly 
up to $t=0$.

\subsection{Revisiting the class of spatially independent solutions} 

Before we provide a rigorous proof of our non--blow--up result, we provide a simple derivation for special classes of solutions. 
By inspection of the proof of Lemma~\ref{406}, one easily checks the following slightly stronger statement.  

\begin{lemma}[No blow-up property for the homogeneous solutions in the contracting case] 
\label{405} 
Fix a time $t_0 < 0$ and initial data $u_0=(W_0^\pm, V_0^\pm, a_0, \mut_0, v_0)$ satisfying the constraints
\be
\label{phys2}
0<a_0,  \quad 0<\mut_0, \quad  |v_0|<1,
\ee
and consider the corresponding solution $u=u(t)$ of the ordinary differential system \eqref{57}, defined on 
its maximal interval of existence $[t_0,T)$ with $T \leq 0$. Then, for the geometric variables, one has  
$$
|W_\pm(t)|\leq C_0 \frac{\sqrt{a(t)}}{|t|}, \qquad |V_\pm(t)|\leq C_0 \sqrt{a(t)}, \qquad a_0 \leq a(t), 
$$
while, for the fluid variables,  
$$
\Bigg| \log \Bigg(\frac{1+v}{1-v} \Bigg) \Bigg| + \frac{2k}{1+k^2}|\log \mut| \leq C_0 \frac{a(t)}{|t|}, 
$$
where the constant $C_0$ is uniform in time and depends on the initial data $t_0, u_0$, only. 
\end{lemma}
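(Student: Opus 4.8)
The plan is to re-run the proof of Lemma~\ref{406} in the contracting case, keeping careful track of the time-dependence of every constant and exploiting systematically the monotonicity of $a$. Throughout one works on the maximal interval $[t_0,T)$ and may invoke Lemma~\ref{406}, which already guarantees $\mut>0$ and $|v|<1$ there, so that the energy functional $E$ introduced just before Lemma~\ref{406} and the logarithmic variables $\log\mut$, $\log\frac{1+v}{1-v}$ are well defined.

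First I would dispose of $a$ and of the geometric variables. From $a_t=-at\mut(1-k^2)$ with $t<0$, $\mut>0$, $k\in(0,1)$ one gets $a_t>0$, hence $a(t)\geq a_0>0$; in particular $\sqrt{a(t)}\leq a(t)/\sqrt{a_0}$ and $1/|t|\leq a(t)/(a_0|t|)$, which will allow every lower-order term to be absorbed into $a(t)/|t|$. Using the formula for $E'$ recalled before Lemma~\ref{406}, the elementary inequalities $(W_+\pm W_-)^2\leq 2(W_+^2+W_-^2)$, $(V_+-V_-)^2\leq 2(V_+^2+V_-^2)$, $1+k^2\leq 2(1+k^2v^2)$, and $t<0$, one obtains $E'(t)\leq-\tfrac{2}{t}E(t)$, whence $E(t)\leq E(t_0)\,t_0^2/t^2$ by Gronwall, exactly as in Lemma~\ref{406}. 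Reading off the individual terms of $E$ gives $W_\pm^2\leq 2aE$, $V_\pm^2\leq 8at^2E$ and (since $\tfrac{1-v^2}{1+k^2v^2}\leq1$) $\mut\leq aE$; combined with the Gronwall bound this yields $|W_\pm(t)|\leq C_0\sqrt{a(t)}/|t|$, $|V_\pm(t)|\leq C_0\sqrt{a(t)}$ and $|t|\,\mut(t)\leq C_0\,a(t)/|t|$, with $C_0$ depending only on $E(t_0)$, $t_0$, $a_0$. This already settles the geometric part of the lemma.

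For the fluid variables I would use the rewriting of the Euler equations in terms of $\log(t\mut)$ and $\log\frac{1+v}{1-v}$ from the proof of Lemma~\ref{406}: the right-hand sides are $\tfrac{1}{1-k^2v^2}$ times linear combinations of $1/t$, $W_\pm$, $tW_\pm^2$, $V_\pm^2/t$ and $t\mut$. Using $\tfrac{1}{1-k^2v^2}\leq\tfrac{1}{1-k^2}$ together with the bounds just obtained, each such term is at most a fixed constant times $a(t)/|t|$, so that
\[
\bigl|(\log\mut)'\bigr|+\Bigl|\bigl(\log\tfrac{1+v}{1-v}\bigr)'\Bigr|\;\leq\;C\,\frac{a(t)}{|t|}\qquad\text{on }[t_0,T),\quad C=C(t_0,u_0).
\]
Integrating in time and invoking
\[
\int_{t_0}^{t}\frac{a(t')}{|t'|}\,dt'\;\leq\;|t_0|\int_{t_0}^{t}\frac{a(t')}{t'^2}\,dt'\;\leq\;|t_0|\Bigl(\frac{a(t)}{|t|}-\frac{a_0}{|t_0|}\Bigr)\;\leq\;|t_0|\,\frac{a(t)}{|t|},
\]
valid because $|t'|\leq|t_0|$ and $\tfrac{d}{dt'}\bigl(a/|t'|\bigr)=a\mut(1-k^2)+a/t'^2\geq a/t'^2$, one obtains $\bigl|\log\tfrac{1+v}{1-v}\bigr|(t)+\tfrac{2k}{1+k^2}|\log\mut|(t)\leq C\,a(t)/|t|$ after absorbing the finite initial contributions using $a(t)/|t|\geq a_0/|t_0|$. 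This would finish the argument.

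The step I expect to be the main obstacle is this last one: one must arrange the bootstrap so that the density bound $\mut\lesssim a(t)/t^2$ and the geometric bounds feed cleanly into the logarithmic form of the Euler equations, and the integral estimate $\int a/|t'|\,dt'\lesssim a(t)/|t|$ — which crucially uses the monotonicity of $a$ and of $a/|t|$ — is precisely what prevents a spurious logarithmic divergence as $t\to0^-$ and keeps every constant independent of $T$.
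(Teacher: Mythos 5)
Your proof is correct and follows exactly the route the paper itself intends: the paper states this lemma with the remark ``by inspection of the proof of Lemma~\ref{406}'', and what you have done is precisely that inspection, carried out with explicit constants. Your key ingredients --- $a_t>0$ so $a\geq a_0$, the Gronwall bound $E(t)\leq E(t_0)t_0^2/t^2$ read off term by term to get $W_\pm^2\leq 2aE$, $V_\pm^2\leq 8at^2E$, $\mut\leq aE$, the pointwise bound on the right-hand sides of the logarithmic Euler ODEs, and the integral comparison $\int_{t_0}^t a/|t'|\,dt'\leq |t_0|\,a(t)/|t|$ obtained from the monotonicity of $a/|t'|$ --- are all sound, and the last integral estimate is indeed the point that keeps $C_0$ independent of $T$ as $t\to 0^-$.
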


The above lemma provides a quantitative control of all the variables (but $a$) in terms of the coefficient $a$. 
If $a(t)$ blows-up when $t$ approaches some critical time $T \in (t_0, 0)$, then at that time 
the density may tend to $0$ or $+\infty$. To be more precise and describe the nature of the blow-up for the fluid variables, we can use the energy     
$$
E_1 = \frac{(tW_+ - 1/2 )^2}{2at^2} + \frac{(tW_- - 1/2 )^2}{2at^2} + \frac{V_+^2}{8at^2} + \frac{V_-^2}{8at^2},
$$ 
introduced in the previous section, as a criterion for the blow--up of $a$. 

\begin{lemma}
\label{407} 
Given any initial data $u_0$ and the corresponding solution $u = (W^\pm, V^\pm, a, \mut, v)$ 
of the ordinary differential system \eqref{57}, one can distinguish between the following two cases: 
\begin{itemize}

\item If $E_1(t_0) \neq 0$, then the function $a$ remains bounded. 

\item If $E_1(t_0) = 0$, then the velocity remains bounded away from $\pm 1$ and the following two subcases arise: 

\begin{itemize} 

\item When $\mut_0$ is sufficiently large and, specifically, 
$$
\frac{4}{3}t_0^2 \, \mut_0 > 1, 
$$
the coefficient $a$ and consequently the density $\mut$ {\rm blow--up} for some $T \in (t_0, T)$. 

\item When $\mut_0$ is sufficiently small and, specifically, 
$$
\frac{4}{3}t_0^2 \, \frac{1 + k^2v_0^2}{1 - v_0^2} \, \mut_0 \leq 1,
$$
then $a, \mut$ remain bounded.
\end{itemize}
\end{itemize}
\end{lemma}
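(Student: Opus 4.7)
The plan is to split the analysis according to whether the geometric energy density $h_1$ vanishes at $t_0$ (for the spatially independent solutions considered here, $E_1(t_0)$ reduces to $h_1(t_0)$). In the case $E_1(t_0)\neq 0$, I would exploit two complementary monotonicity identities derived from the balance laws \eqref{eq:hc}: a direct computation yields $(h_1/a)' = 2k_1/(a|t|) \geq 0$ in the contracting regime $t<0$, so $h_1(t)/a(t) \geq h_1(t_0)/a_0 > 0$; meanwhile, the energy estimate of Section~\ref{sec:7} gives $E_2(t) \leq E_2(t_0)\, t_0^2/t^2$, and since $h_1 \leq E_2$ pointwise this upper-bounds $h_1$. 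Combining yields $a(t) \leq a_0 E_2(t_0) t_0^2/\bigl(h_1(t_0) t^2\bigr)$, which is finite on $[t_0,0)$. Proposition~\ref{406b} then extends the solution up to $t=0$, so $a$ stays bounded, proving the first assertion.

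If instead $E_1(t_0)=0$, then since $h_1$ is a sum of nonnegative squares this forces $W_\pm(t_0)=1/(2t_0)$ and $V_\pm(t_0)=0$. I would verify that the submanifold $\{W_+=W_-=1/(2t),\ V_+=V_-=0\}$ is invariant under the ODE system \eqref{odegeometry}: substituting these equilibrium values and using $a'/a = -(1-k^2)t\mut$ produces the exact cancellation needed on the right-hand side of the $W_\pm$ equation, while the $V_\pm$ equations vanish identically. Uniqueness of the ODE then forces these relations to persist for all $t$, reducing the system to coupled ODEs for $(\mut,v,a)$. Substituting into the fluid equations gives $v' = -v(1-v^2)(1+3k^2)/\bigl(4|t|(1-k^2v^2)\bigr)$, so $|v|$ is monotonically decreasing in time and stays bounded by $|v_0| < 1$, settling the velocity assertion.

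To analyze the reduced dynamics, I would introduce the proper inertial density $P:=\mut(1+k^2v^2)/(1-v^2)$, the factor $R:=(1-v^2)/(1+k^2v^2) = \mut/P \in (0,1]$ (which is non-decreasing in time), and the scaled variable $\tilde Q := |t|P > 0$. A direct computation combining the evolutions of $\mut$ and $v$ collapses the system into the Bernoulli-type ODE
\begin{equation*}
\tilde Q' = R\bigl(\alpha\tilde Q/|t| + (1-k^2)\tilde Q^2\bigr), \qquad \alpha := (1+3k^2)/4 < 1.
\end{equation*}
Under the hypothesis $(4/3)t_0^2 P_0 \leq 1$ (which is exactly \eqref{eqn:70} for spatially independent data), I would use $R\leq 1$ to dominate $\tilde Q$ by the solution of the pure Bernoulli equation ($R\equiv 1$); setting $U_*:=1/\tilde Q_*$ converts the latter into the explicitly solvable linear ODE $dU_*/ds = \alpha U_*/s + (1-k^2)$ with $s=|t|$. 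The threshold $(4/3)t_0^2 P_0 \leq 1$ is exactly what keeps $U_*$ strictly positive on $(0,|t_0|]$, so $\tilde Q$, $P$, and $\mut$ remain bounded; integrating $(\log a)' = (1-k^2)\tilde Q R$ then yields boundedness of $a$.

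For the blowup subcase $(4/3)t_0^2\mut_0 > 1$, I would first treat $v_0=0$ (then $R\equiv 1$ and $P_0=\mut_0$): the linear ODE for $U_*$ is solved explicitly, and one verifies that $U_*$ has a zero in $(0,|t_0|)$ exactly when $(4/3)t_0^2\mut_0 > 1$, forcing $\tilde Q\to\infty$ and hence $\mut\to\infty$; substituting into $a' = -at\mut(1-k^2)$ then produces $a\to\infty$ at the same interior time $T\in(t_0,0)$. The main obstacle I anticipate is handling $v_0 \neq 0$: since $R\geq R_0 = (1-v_0^2)/(1+k^2v_0^2) < 1$, the crude lower-bound Bernoulli comparison (with constant $R_0$) produces a strictly stronger threshold than $(4/3)t_0^2\mut_0 > 1$. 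To close this gap I would exploit the explicit decay $|v(t)|\lesssim |v_0|(|t|/|t_0|)^\alpha$ coming from the $v$ equation, so that $R(t)\to 1$ rapidly, and use the integrating-factor representation of $Y:=1/\tilde Q$ with weight $e^{-B}$, $B(s):=-\int_s^{s_0}R\alpha/s'\,ds'$, to show via a perturbation argument that the $v_0=0$ blowup threshold transfers (possibly at a slightly shifted time) to the regime $v_0 \neq 0$.
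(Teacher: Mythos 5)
Your treatment of the first bullet and of the reduction to the invariant manifold $\{W_\pm = 1/(2t),\ V_\pm=0\}$ coincides with the paper's (the paper monitors $(a^{-1}E_1)'$, which for spatially independent data is exactly your $(h_1/a)'$), and your Bernoulli comparison with $R\leq 1$ correctly settles the small--mass subcase, which the paper merely asserts ``can be checked similarly''. The genuine gap is in the blow--up subcase for $v_0\neq 0$, and you have located it yourself: your scheme is built on $\tilde Q=|t|P$ and the factor $R=\mut/P\leq 1$, which is the favourable direction for \emph{upper} bounds on the density but the unfavourable one for the \emph{lower} bound needed to force blow--up, so comparison from $t_0$ only yields a threshold involving $P_0R_0$ rather than $\mut_0$. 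The proposed repair --- perturbing off $v_0=0$ using the decay of $v$ --- cannot recover the sharp condition $\frac43 t_0^2\mut_0>1$ for \emph{all} $|v_0|<1$: the actual decay is $|v(t)|\leq |v_0|\,(|t|/|t_0|)^{c_0\alpha}$ with $c_0=(1-v_0^2)/(1-k^2v_0^2)$ (your stated rate $(|t|/|t_0|)^{\alpha}$ is in fact a \emph{lower} bound for $|v|/|v_0|$), and the loss accumulated on the initial portion of the interval, where $R$ is bounded away from $1$, is not small when $|v_0|$ is close to $1$; a perturbation argument would at best cover small $|v_0|$.

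The paper closes this by choosing the other density variable. On the invariant manifold one has the exact identity
$$
\bigl(\log(a^{-1}|t|\,\mut)\bigr)' \;=\; \frac{3k^2+1}{4|t|}\,\frac{1+v^2}{1-k^2v^2}\;\geq\;\frac{3k^2+1}{4|t|},
$$
valid for every $|v|<1$ because $\tfrac{1+v^2}{1-k^2v^2}\geq 1$: the velocity factor now sits on the correct side of $1$, with no $R$--loss. Integrating gives $a^{-1}|t|\mut\geq a_0^{-1}|t_0|\mut_0\,(|t_0|/|t|)^{(3k^2+1)/4}$, and inserting this into $(a^{-1})'=(1-k^2)\,t\,\mut\, a^{-1}$ yields
$$
a^{-1}(t)\;\leq\; a_0^{-1}\Bigl(1-\tfrac43\, t_0^2\,\mut_0\bigl(1-(t/t_0)^{3(1-k^2)/4}\bigr)\Bigr),
$$
whose right--hand side vanishes at an interior time whenever $\frac43 t_0^2\mut_0>1$; since $a^{-1}>0$ on the interval of existence, $a$ (and consequently $\mut$) must blow up before that time. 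Replacing your $v_0\neq 0$ perturbation step by this computation closes the gap; the rest of your argument can stand.
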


Our latter claim is just a special case of the final statement in Theorem \ref{maintheo2}. Observe that, in the spatially homogeneous case, where, in addition, $v = 0$, the system can be solved explicitly and the above lemma actually covers all cases. 

\begin{proof}
If $E_1(t_0) > 0$, we have 
$$
(a^{-1}E_1)' = - \frac{1}{2a^2 \, t^3} \, \Big((t(W_+ + W_-) - 1)^2 + (V_+ + V_-)^2/4\Big)\geq 0 
$$
and, therefore, 
$a^{-1}(t) \geq a^{-1}_0\, E_1(t_0) / E_1(t)$, so that the coefficient $a$ is bounded. On the other hand, if the energy initially vanishes  $E_1(t_0) = 0$, by uniqueness, it must remain identically zero for all times. Therefore, the system \eqref{57} simplifies and 
$$
t \, W = 1/2, \qquad V = 0,  \qquad a' = (1-k^2) \,  |t| \,  a\mut, 
$$
with 
$W := W_+ = W_- $, $V := V_+ = V_-$, 
and
$$
\aligned
& \Bigg(a^{-1}t\mut\frac{1+k^2v^2}{1-v^2}\Bigg)' = -a^{-1} t\mut(1-k^2)\frac{3k^2 + 1}{4t}, 
\\
&\Bigg(a^{-1}t\mut\frac{(1+k^2)v}{1-v^2}\Bigg)' = 0. 
\endaligned
$$ 
After changing the variables, we obtain  
$$
\Big( \log a^{-1}|t|\mut \Big)' = \frac{(3k^2 + 1)}{4|t|}\frac{1+v^2}{(1 - k^2v^2)}
$$
and, by integration and using $|v|<1$,  
$$
a^{-1}t\mut \leq a_0^{-1}t_0\mut_0 \Bigg(\frac{t_0}{t}\Bigg)^{(3k^2 + 1)/4} 
$$
or 
$$
a^{-1}(t) \leq a^{-1}_0 + \frac{4}{3}a_0^{-1}t_0^2\mut_0\Bigg( \Bigg(\frac{t}{t_0}\Bigg)^{3(1-k^2)/4} - 1 \Bigg). 
$$ 
Consequently, our assumption $\frac{4}{3}t_0^2\mut_0 > 1$ implies that there exists a critical time $t\in(t_0, 0)$ such that 
$1 = \frac{4}{3}t_0^2\mut_0\Big(1 - \big(\frac{t}{t_0}\big)^{3(1-k^2)/4} \Big)$, 
which leads us to $a^{-1}(t) = 0$, a contradiction. The latter claim in the lemma can be checked similarly.
\end{proof}

 
\subsection{Estimates on the geometry and fluid variables}
\label{fluid_est}  

We now consider any solution to the Einstein-Euler system defined on some interval of $[t_0, t_c)$ and we derive a bound on $\sup_{S^1} a$ which depends upon a lower bound on $t \mapsto \int_{S^1}a^{-1}(t, \theta) \, d\theta$. This property will provides us with the key ingredient of the proof of Theorem~\ref{maintheo2}.

\begin{proposition}
\label{prop:1}
If there exists a function $C=C(t)$ defined for $t \in [t_0, t_c)$ such that for ll relevant times $t$ 
\be
\label{keyassumption}
0 < C(t) \leq \int_{S^1}a^{-1}(t, \theta) \, d\theta,
\ee
then there exists also a function $C_1=C_1(t)$ such that for all such times 
\be
\label{keyproperty}
\sup_{S^1}\int_{t_0}^t\frac{\mut}{1 - v^2}(\tau, \cdot) \, d\tau \leq C_1(t).
\ee
Then, a~fortiori, $\sup_{S^1}\int_{t_0}^t \mut(\tau, \cdot) \, d\tau$ is bounded and the metric coefficient $a$ is uniformly bounded on $S^1$. 
\end{proposition}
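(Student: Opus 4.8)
The plan is to reduce the whole statement to one quantitative fact — a lower bound for the conformal coefficient $\nu$ that is uniform in $\theta$ — and to produce that bound by pitting the hypothesis \eqref{keyassumption} against the decay of the energy $E_2$. \emph{Reduction.} On $[t_0,t_c)$ one has $t<0$, and since $\mut>0$, $1-v^2>0$, the first constraint in \eqref{58} gives the pointwise (in $\theta$) inequality $-\nu_t\ge |t|\,\mut\,(k^2+v^2)/(1-v^2)\ge k^2|t|\,\mut/(1-v^2)$; integrating on $[t_0,t]$ and using $|\tau|\ge|t|$ there,
\[
\int_{t_0}^{t}\frac{\mut}{1-v^2}(\tau,\theta)\,d\tau\;\le\;\frac{\nu_0(\theta)-\nu(t,\theta)}{k^2|t|}.
\]
Hence \eqref{keyproperty}, with $C_1(t)=(\sup_{S^1}\nu_0-\inf_{S^1}\nu(t,\cdot))/(k^2|t|)$, will follow as soon as $\nu(t,\cdot)$ is bounded below on $S^1$ by a function of $t$ alone; and then $\mut\le\mut/(1-v^2)$ bounds $\int_{t_0}^t\mut\,d\tau$, while \eqref{geometry3} rewritten as $(\log a)_t=(1-k^2)|t|\mut$ integrates to $\sup_{S^1}a(t,\cdot)\le(\sup_{S^1}a_0)\,e^{(1-k^2)|t_0|C_1(t)}$, which is the final assertion of the proposition.

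\emph{Oscillation of $\nu$ and a good set.} From the second constraint $-a\nu_\theta=C_2$, using $a\ge\min_{S^1}a_0>0$ and the elementary inequalities $(k^2+v^2)/(1-v^2)\le(1+k^2v^2)/(1-v^2)$ and $(1+k^2)|v|/(1-v^2)\le(1+k^2v^2)/(1-v^2)$, one checks $|C_2|/a\le 2|t|(h_1+h_1^M)+O(1/|t|)$, the remainder arising from completing the square around $W_\pm=1/(2t)$. Integrating over $S^1$ and using $\int_{S^1}(h_1+h_1^M)\,d\theta=E_2(t)\le E_2(t_0)t_0^2/t^2$ yields $\int_{S^1}|\nu_\theta(t,\cdot)|\,d\theta\le\Psi(t):=(C\,E_2(t_0)t_0^2+C')/|t|$, so $\nu(t,\theta_1)-\nu(t,\theta_2)\le\Psi(t)$ for all $\theta_1,\theta_2$. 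Separately, \eqref{keyassumption} together with $a^{-1}\le(\min_{S^1}a_0)^{-1}$ gives, by Chebyshev's inequality, that $G_t:=\{\theta:a(t,\theta)\le 2/C(t)\}$ has $|G_t|\ge\tfrac12 C(t)\min_{S^1}a_0>0$; since $a$ is non-decreasing in $t$, $a(\tau,\cdot)\le 2/C(t)$ on $G_t$ for every $\tau\in[t_0,t]$.

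\emph{Lower bound on $\nu$.} On $G_t$ I write $-\nu_t=a\big(\tfrac{|\tau|}{2a}(W_+^2+W_-^2)+\tfrac1{8|\tau|a}(V_+^2+V_-^2)+\tfrac{|\tau|}{a}\mut\tfrac{k^2+v^2}{1-v^2}\big)\le a\big(2|\tau|(h_1+h_1^M)+O(1/|\tau|)\big)$, bound $a$ by $2/C(t)$, integrate over $[t_0,t]\times G_t$, use $\int_{G_t}(h_1+h_1^M)\,d\theta\le E_2(\tau)$, and invoke the decisive estimate
\[
\int_{t_0}^{t}|\tau|\,E_2(\tau)\,d\tau\;\le\;E_2(t_0)\,t_0^{2}\,\log\frac{|t_0|}{|t|}\;<\;\infty ,
\]
valid because $E_2(\tau)\le E_2(t_0)t_0^2/\tau^2$. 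This produces $|G_t|^{-1}\int_{G_t}\nu(t,\cdot)\,d\theta\ge\inf_{S^1}\nu_0-\Phi_1(t)$ for an explicit $\Phi_1(t)$ built from $C(t)$, $\log(|t_0|/|t|)$, $E_2(t_0)$ and the initial data, hence a point $\theta^{**}\in G_t$ with $\nu(t,\theta^{**})\ge-\Phi_1(t)$, and combining with the oscillation bound, $\nu(t,\theta)\ge-\Phi_1(t)-\Psi(t)$ for every $\theta$. The Reduction step then concludes the proof.

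The point requiring care is exactly this last argument. A crude time-integration of the constraint does not close, because $E_2(\tau)$ blows up like $\tau^{-2}$ as $\tau\to0^-$; the estimate survives only because $E_2$ enters with the weight $|\tau|$, so that $\int_{t_0}^t|\tau|E_2\,d\tau$ diverges only logarithmically, and because \eqref{keyassumption} furnishes not a single good point but a set $G_t$ of definite measure on which $a$ — the factor converting the $1/a$-weighted energy densities into the unweighted combinations driving $\nu_t$ — remains $\le 2/C(t)$ all the way back to the initial slice.
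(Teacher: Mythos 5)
Your argument is correct, and it is a genuinely different route from the paper's. The paper works directly with the cumulative quantity $\int_{t_0}^{t}\frac{\mut}{1-v^2}\,d\tau$ and proves two lemmas: one bounds its infimum over $S^1$ by combining the energy decay $E_2(t)\le E_2(t_0)t_0^2/t^2$ (as in your Reduction) with the hypothesis \eqref{keyassumption}; the other bounds the spatial variation of (a $|\tau|$--weighted version of) this quantity between two points $\theta_0,\theta_1$ by integrating the second Euler equation \eqref{fluid} over a slab and controlling the resulting terms with $E_1$ and $E_2$. You instead dualize through the constraint \eqref{58}: the pointwise inequality $\int_{t_0}^t\frac{\mut}{1-v^2}\,d\tau\le(\nu_0-\nu(t,\cdot))/(k^2|t|)$ reduces everything to a uniform lower bound on $\nu(t,\cdot)$, obtained from the $\nu_\theta$--constraint (oscillation estimate via $E_2$) plus a Chebyshev "good set" $G_t$ where $a\le 2/C(t)$, on which the space--time integral of $-\nu_t$ is controlled by $\int_{t_0}^t|\tau|E_2(\tau)\,d\tau\lesssim\log(|t_0|/|t|)$. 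The two proofs exploit the same two ingredients — the $t^{-2}$ energy bound and the hypothesis on $\int_{S^1}a^{-1}$ — but your version replaces the paper's Euler-equation slab estimate (its Lemma~\ref{prop:2}) by the constraint equations together with the measure-theoretic good-set argument; what you gain is that no use of the Euler system itself is needed beyond the energy identity, while what the paper's route avoids is the Chebyshev step by keeping the $|\tau|$ weight inside the integrand so that the factor $a$ never needs to be bounded pointwise. Both yield an explicit $C_1(t)$ finite for each $t\in[t_0,t_c)$, and your closing observation — that the argument only closes because $E_2$ enters with weight $|\tau|$, producing a merely logarithmic divergence — is exactly the right diagnosis of where a naive estimate would fail.
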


To establish this result, we rely on the following two lemmas, which we derive first and concern the infimum of the time--average of the mass density $\int_{t_0}^t\frac{\mut}{1 - v^2}(\tau, \cdot) \, d\tau$ and its variation in space, respectively.  

\begin{lemma}[Lower bound for the averaged mass--energy density] 
\label{lem:1}
For all $t\in [t_0, t_c)$, one has 
$$
(1 + k^2) \, \Bigg(\int_{S^1}a^{-1}(t, \theta)d\theta\Bigg) \, \inf_{S^1}\int_{t_0}^t\frac{\mut}{1 - v^2}(\tau, \cdot) \, d\tau 
\leq |t_0| \, \big( E_2(t) - E_2(t_0) \big).
$$
\end{lemma}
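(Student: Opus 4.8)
The plan is to extract the stated inequality directly from the energy identity \eqref{der2}. Recall that $\frac{dE_2}{dt}(t)$ equals $-\frac1t \int_{S^1}\big(\cdots + h_1^M + \frac{\mut}{a}\frac{3k^2+1}{4}\big)\,d\theta$ with $t<0$, so every term on the right is nonnegative; in particular, discarding the first two (nonnegative) terms and $h_1^M$ itself gives the pointwise lower bound
$$
\frac{dE_2}{dt}(t) \geq -\frac1t \int_{S^1} \Bigg( h_1^M + \frac{\mut}{a}\frac{3k^2+1}{4}\Bigg)\, d\theta
\geq -\frac1t \int_{S^1} \frac{\mut}{a}\,\frac{1+k^2v^2}{1-v^2}\, d\theta,
$$
where in the last step I simply drop the term $\frac{\mut}{a}\frac{3k^2+1}{4}\geq 0$ and keep $h_1^M = \frac{\mut}{a}\frac{1+k^2v^2}{1-v^2}$. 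Since $\frac{1+k^2v^2}{1-v^2}\le (1+k^2)\frac1{1-v^2}$ (as $1+k^2v^2\le 1+k^2$), we get
$$
\frac{dE_2}{dt}(t) \geq -\frac{1+k^2}{t}\int_{S^1} \frac{a^{-1}\mut}{1-v^2}\, d\theta \geq \frac{1+k^2}{|t|}\int_{S^1} \frac{a^{-1}\mut}{1-v^2}\, d\theta,
$$
using $-1/t = 1/|t|$ for $t<0$.

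Next I would integrate this differential inequality from $t_0$ to $t$. Because $|t|\le |t_0|$ on $[t_0,t_c)\subset[t_0,0)$, we have $\frac{1}{|t|}\ge \frac{1}{|t_0|}$, hence
$$
E_2(t) - E_2(t_0) = \int_{t_0}^t \frac{dE_2}{d\tau}(\tau)\, d\tau \geq \frac{1+k^2}{|t_0|}\int_{t_0}^t\!\!\int_{S^1} \frac{a^{-1}(\tau,\theta)\,\mut(\tau,\theta)}{1-v^2(\tau,\theta)}\, d\theta\, d\tau.
$$
Now I exchange the order of integration (Tonelli, all integrands nonnegative) and bound $a^{-1}(\tau,\theta)$ from below. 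This is where the one subtle point lies: I want to pull out the spatial integral $\int_{S^1}a^{-1}(t,\theta)\,d\theta$ evaluated at the \emph{final} time $t$, not at $\tau$. This is legitimate because, by the equation $a_t = -at\mut(1-k^2)$ with $t<0$, we have $(\log a)_t = -(1-k^2)t\,\mut = (1-k^2)|t|\,\mut \ge 0$, so $a(\cdot,\theta)$ is nondecreasing in time for each fixed $\theta$; consequently $a^{-1}(\tau,\theta)\ge a^{-1}(t,\theta)$ for all $\tau\le t$. Therefore
$$
\int_{t_0}^t\!\!\int_{S^1} \frac{a^{-1}(\tau,\theta)\,\mut(\tau,\theta)}{1-v^2}\, d\theta\, d\tau \geq \int_{S^1} a^{-1}(t,\theta)\int_{t_0}^t \frac{\mut(\tau,\theta)}{1-v^2(\tau,\theta)}\, d\tau\, d\theta \geq \Bigg(\int_{S^1} a^{-1}(t,\theta)\,d\theta\Bigg)\inf_{S^1}\int_{t_0}^t\frac{\mut}{1-v^2}(\tau,\cdot)\,d\tau,
$$
where the last inequality bounds the inner integral below by its infimum over $S^1$.

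Combining the two displays gives
$$
E_2(t) - E_2(t_0) \geq \frac{1+k^2}{|t_0|}\Bigg(\int_{S^1}a^{-1}(t,\theta)\,d\theta\Bigg)\inf_{S^1}\int_{t_0}^t\frac{\mut}{1-v^2}(\tau,\cdot)\,d\tau,
$$
which, after multiplying through by $|t_0|$, is exactly the claimed inequality. I expect no real obstacle here; the only care needed is the monotonicity-of-$a$ argument that lets the $a^{-1}$ weight be evaluated at the final time, and the routine verification that the terms dropped from \eqref{der2} are indeed nonnegative (which is already asserted in the lemma preceding the statement, since $k_1\ge 0$ and each summand in \eqref{der2} is manifestly a nonnegative quantity divided by $a>0$).
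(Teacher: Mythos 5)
Your overall route is the same as the paper's (lower-bound $\frac{dE_2}{dt}$, integrate in time, swap the order of integration via Tonelli, use the monotonicity of $a$ to replace $a^{-1}(\tau,\theta)$ by $a^{-1}(t,\theta)$, and bound the time-integral below by its infimum over $S^1$), and the second half of the argument is fine. But there is a genuine error in the first step. You discard the term $\frac{\mut}{a}\frac{3k^2+1}{4}$ and then claim a lower bound on $h_1^M=\frac{\mut}{a}\frac{1+k^2v^2}{1-v^2}$ from the elementary inequality $\frac{1+k^2v^2}{1-v^2}\le\frac{1+k^2}{1-v^2}$. That inequality goes the wrong way for your purposes: it gives $h_1^M\le \frac{(1+k^2)\mut}{a(1-v^2)}$, so since $-1/t>0$ you obtain
$$
\frac{dE_2}{dt}\ \ge\ -\frac1t\int_{S^1}h_1^M\,d\theta\ \ \not\ge\ -\frac{1+k^2}{t}\int_{S^1}\frac{a^{-1}\mut}{1-v^2}\,d\theta,
$$
and indeed bounding $h_1^M$ below by $\frac{(1+k^2)\mut}{a(1-v^2)}$ would require $1+k^2v^2\ge 1+k^2$, which is false for $|v|<1$.

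The fix, and what the paper does implicitly, is to not discard the term $\frac{\mut}{a}\frac{3k^2+1}{4}$ but combine it with $h_1^M$: one has
$$
\frac{1+k^2v^2}{1-v^2}+\frac{3k^2+1}{4}\ \ge\ \frac{1+k^2}{1-v^2},
$$
because the difference of the first and third fractions equals $\frac{1+k^2v^2-(1+k^2)}{1-v^2}=-k^2$, and $-k^2\ge-\frac{3k^2+1}{4}$ is equivalent to $k^2\le 1$, which holds since $k\in(0,1)$. With this replacement the chain of inequalities you set up then goes through exactly as you wrote it, so only the one step needs to be corrected; the monotonicity of $a$ (so that $a^{-1}(\tau,\theta)\ge a^{-1}(t,\theta)$ for $\tau\le t$) and the Tonelli exchange are both used correctly and match the paper's argument.
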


\begin{proof} Namely, from the energy--type estimate \eqref{der2} we deduce
$$
\int_{t_0}^t \int_{S^1}a^{-1}\mut\frac{(1 + k^2)}{1 - v^2}(\tau, \theta)\,d\theta\,d\tau 
\leq |t_0| \, \big( E_2(t) - E_2(t_0) \big).
$$
Since $a^{-1}$ is monotonically decreasing in time, we obtain 
$$
\aligned
& \Bigg( \int_{S^1}a^{-1}(t, \theta)d\theta \Bigg) \,\inf_{S^1}\int_{t_0}^t\frac{\mut}{1 - v^2}(\tau, \cdot) \, d\tau  
\\
&\leq \int_{S^1}a^{-1}(t, \theta) \int_{t_0}^t \frac{\mut}{1 - v^2}(\tau, \theta)\,d\tau \, d\theta 
 \leq \int_{t_0}^t \int_{S^1} \frac{a^{-1} \mut}{1 - v^2}(\tau, \theta)\,d\theta\,d\tau. 
\endaligned
$$
\end{proof}

\begin{lemma}[Variation in space of the averaged mass--energy density] 
\label{prop:2}
For all $t\in [t_0, t_c)$ and $\theta_0, \theta_1\in S^1$, one has 
$$
\aligned
k^2\int_{t_0}^{t} \frac{\mut \, |\tau|}{1 - v^2}(\tau, \theta_1)\,d\tau  
\leq 
& (1 + k^2)\int_{t_0}^{t} \frac{\mut |\tau|}{1 - v^2}(\tau, \theta_0) \,d\tau
\\
&  + |t_0| \big(  E_1(t) - E_1(t_0) \big) 
+ 2|t| E_2(t) + 2|t_0| \, E_2(t_0). 
\endaligned
$$
\end{lemma}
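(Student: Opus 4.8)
The plan is to integrate the \emph{second} Euler balance law of \eqref{fluid} over a spacetime rectangle and to rewrite the resulting boundary and source contributions in terms of the energies $E_1$ and $E_2$. Fix an ordered pair $\theta_0,\theta_1\in S^1$, let $\gamma$ be the positively oriented arc running from $\theta_0$ to $\theta_1$ (so that $\theta_1$ is its right endpoint), and set $R:=\{(\tau,\theta):t_0\le\tau\le t,\ \theta\in\gamma\}$. Integrating over $R$ the identity (with the time variable renamed $\tau$, and $a,\mut,v$ evaluated at $(\tau,\theta)$)
\[
\del_\tau\Bigl(a^{-1}\tau\mut\tfrac{(1+k^2)v}{1-v^2}\Bigr)+\del_\theta\Bigl(\tau\mut\tfrac{k^2+v^2}{1-v^2}\Bigr)=a^{-1}\tau\mut(1-k^2)\,\Sigma'''_1,
\]
which holds in the distributional sense, and using the divergence theorem (legitimate since the solutions are BV in $\theta$ and Lipschitz in $\tau$), I obtain, with $F(\tau,\theta):=\tau\mut\frac{k^2+v^2}{1-v^2}(\tau,\theta)$ and $g^M:=a^{-1}\mut\frac{(1+k^2)v}{1-v^2}$,
\[
\int_{t_0}^{t}F(\tau,\theta_1)\,d\tau=\int_{t_0}^{t}F(\tau,\theta_0)\,d\tau-\int_\gamma\bigl(t\,g^M(t,\cdot)-t_0\,g^M(t_0,\cdot)\bigr)\,d\theta+\int_{t_0}^t\!\!\int_\gamma a^{-1}\tau\mut(1-k^2)\Sigma'''_1.
\]

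Next I would use three elementary inputs. First, since $\tau<0$ and $0\le v^2<1$ one has $k^2\le k^2+v^2\le 1+k^2$, hence $-F(\tau,\theta_1)\ge k^2\,\tfrac{|\tau|\mut}{1-v^2}(\tau,\theta_1)$ and $-F(\tau,\theta_0)\le(1+k^2)\,\tfrac{|\tau|\mut}{1-v^2}(\tau,\theta_0)$; fed into the identity above, these reproduce the two density terms of the claim. Second, $|g^M|\le h_1^M$ because $(1+k^2)|v|\le 1+k^2v^2$ (i.e.\ $(1-|v|)(1-k^2|v|)\ge0$), and $\int_{S^1}h_1^M\le E_2$ since $h_1\ge0$; therefore $\bigl|\int_\gamma(t\,g^M(t,\cdot)-t_0\,g^M(t_0,\cdot))\,d\theta\bigr|\le|t|E_2(t)+|t_0|E_2(t_0)$. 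Third — and this is the point — one has the identity $\Sigma'''_1=\tau\,a\,g_1$, obtained from \eqref{52} by completing squares, $\tfrac{\tau}{2}(W_+^2-W_-^2)-\tfrac12(W_+-W_-)=\tfrac{1}{2\tau}\bigl((\tau W_+-\tfrac12)^2-(\tau W_--\tfrac12)^2\bigr)$, so that $a^{-1}\tau\mut(1-k^2)\Sigma'''_1=\tau^2(1-k^2)\mut\,g_1$.

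It then remains to bound the source integral. Using $|g_1|\le h_1$, and $\tau^2\le|t_0|\,|\tau|$ on $[t_0,t]\subset[t_0,0)$, and rewriting the increment of $E_1$ from \eqref{der1} via $a_t/a=-\tau\mut(1-k^2)=(1-k^2)|\tau|\mut\ge0$ (from \eqref{geometry3}), namely
\[
E_1(t)-E_1(t_0)=\int_{t_0}^t\!\!\int_{S^1}\Bigl((1-k^2)|\tau|\mut\,h_1+\tfrac{2k_1}{|\tau|}\Bigr)\,d\theta\,d\tau\ \ge\ (1-k^2)\int_{t_0}^t\!\!\int_{S^1}|\tau|\mut\,h_1\,d\theta\,d\tau,
\]
one gets $\bigl|\int_{t_0}^t\!\int_\gamma a^{-1}\tau\mut(1-k^2)\Sigma'''_1\bigr|\le|t_0|(1-k^2)\int_{t_0}^t\!\int_{S^1}|\tau|\mut\,h_1\,d\theta\,d\tau\le|t_0|(E_1(t)-E_1(t_0))$. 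Inserting these bounds into the rearranged divergence identity (and discarding signs in the favorable direction) yields the asserted inequality — in fact with the sharper tail $|t_0|(E_1(t)-E_1(t_0))+|t|E_2(t)+|t_0|E_2(t_0)$, which obviously implies the stated one. The one genuinely delicate step is spotting the identity $\Sigma'''_1=\tau a g_1$: it is precisely what lets the source be absorbed into the already-controlled increment $E_1(t)-E_1(t_0)$ rather than generating an uncontrolled term; the remaining care is merely to keep the orientation of $\gamma$ straight so that $k^2\le k^2+v^2\le1+k^2$ is applied at the correct endpoint.
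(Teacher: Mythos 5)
Your proof is correct and follows essentially the same route as the paper: integrate the second Euler equation over a spacetime slab, bound the spatial flux endpoints using $k^2\le k^2+v^2\le 1+k^2$ together with $\tau<0$, absorb the source via the identity $\Sigma'''_1=\tau a g_1$ and $|g_1|\le h_1$ into the increment $E_1(t)-E_1(t_0)$, and bound the time-boundary term by $E_2$. Your treatment of the boundary term is in fact slightly sharper (giving $|t|E_2(t)+|t_0|E_2(t_0)$ directly from $|g^M|\le h_1^M$, rather than via $t^2E_2'(t)+t_0^2E_2'(t_0)\le 2|t|E_2(t)+2|t_0|E_2(t_0)$ as the paper does), and your explicit $\Sigma'''_1=\tau a g_1$ also fixes what appears to be a sign slip in the paper's intermediate display $\tau\frac{a_\tau}{a}g_1$.
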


\begin{proof} Setting $M:= {\mut \over 1-v^2}>0$, integrating the ``second'' Euler equation in \eqref{fluid} over some slab 
$[t_0, t]\times[\theta_0, \theta _ 1]$, and finally using $|v|<1$, we obtain (recalling that the time variable takes negative values)
$$
\aligned 
&  (1 + k^2)\int_{t_0}^{t} \tau M(\tau, \theta_1) \, d\tau 
- k^2 \int_{t_0}^{t} \tau M(\tau, \theta_0) \, d\tau 
\\
&\leq  \int_{t_0}^{t} \tau M\, (k^2 + v^2)(\tau, \theta_1) \, d\tau 
           - \int_{t_0}^{t} \tau M\, (k^2 + v^2)(\tau, \theta_0) \, d\tau
\\
& = \int_{t_0}^{t}\int_{\theta_0}^{\theta_1} \tau \frac{a_\tau}{a}g_1 \, d\theta d\tau 
          + (1 + k^2) \int_{\theta_0}^{\theta_1} \Big( t a^{-1} Mv(t, \theta)   - t_0 a^{-1} Mv(t_0, \theta) \Big) \, d\theta.
\endaligned
$$
In the above identity, the double integral term is controlled by $E_1$, i.e., 
$$
\aligned
 \int_{t_0}^{t}\int_{\theta_0}^{\theta_1} \frac{a_\tau}{a}\tau g_1 d\theta d\tau 
&\leq  - \int_{t_0}^{t}\int_{S^1} \frac{a_\tau}{a}\tau h_1 d\theta d\tau 
 \leq |t_0| \, \big( E_1(t) - E_1(t_0) \big),
\endaligned
$$
whereas, for the other two terms of the right-hand side, 
$$
\aligned
& (1 + k^2)  \int_{\theta_0}^{\theta_1} \Big( t a^{-1} Mv(t, \theta)  - t_0 a^{-1} Mv(t_0, \theta) \Big) \, d\theta
\\
& \leq t^2 \, E_2'(t) + t_0^2 \, E_2'(t_0) 
 \leq - 2t \, E_2(t) - 2t_0 \, E_2(t_0).
\endaligned
$$
Hence, we obtain  
$$
\aligned
& - k^2 \int_{t_0}^{t} \tau M(\tau, \theta_1)d\tau
\\
& \leq - (1 + k^2)\int_{t_0}^{t} \tau M(\tau, \theta_0) \, d\tau - t_0( E_1(t) - E_1(t_0)) - 2tE_2(t) - 2t_0E_2(t_0),
\endaligned
$$
which is the desired estimate. 
\end{proof}

\begin{proof}[Proof of Proposition~\ref{prop:1}]
Combining Lemmas~\ref{lem:1} and \ref{prop:2}, we obtain  
$$
\aligned 
  k^2 \sup_{S^1}\int_{t_0}^{t} \frac{\mut \, |\tau|}{1 - v^2}(\tau, \cdot) \, d\tau 
& \leq (1 + k^2)\inf_{S^1} \int_{t_0}^{t}  \frac{\mut \, |\tau|}{1 - v^2}(\tau, \cdot) \, d\tau 
\\
& \quad 
+ |t_0| \, \big( E_1(t) - E_1(t_0) \big) 
 + 2|t| \, E_2(t) + 2|t_0|\, E_2(t_0)
\\
&\leq C_0(t) \, \Bigg(\int_{S^1}a^{-1}(t, \theta)d\theta\Bigg)^{-1} + C_1(t)
\endaligned
$$ 
for some (explicit) functions $C_0(t), C_1(t)>0$. 
\end{proof}

\begin{proof}[Proof of Theorem~\ref{maintheo2}] 
We now complete the proof of our second main result. In view of Proposition~\ref{prop:1} and the earlier results in Section~4, 
it is sufficient to derive the lower bound \eqref{keyassumption} on $\int_{S^1}a^{-1}\,d\theta$. Specifically, we now check that 
if the time interval $[t_0, t_c)$ is sufficiently small or if the initial data set satisfies the condition \eqref{eqn:70}
then (for all $t\in [t_0, t_c)$ or else for all $t \in [t_0, 0)$, respectively) 
there exists a function $C(t)>0$ such that \eqref{keyassumption} holds. 
Namely, by integrating the ``first'' Euler equation in \eqref{fluid} over $[t_0, t]\times S^1$ we obtain 
$$
\aligned
 \int_{S^1} a^{-1}\mut|t|\,d\theta 
& \leq \int_{S^1} a^{-1}\mut|t| \frac{1 + k^2 v^2}{1 - v^2}\,d\theta  
\\
& \leq \int_{S^1} a_0^{-1}\mut_0|t_0| \frac{1 + k^2 v_0^2}{1 - v_0^2}\,d\theta + \int_{t_0}^t  \int_{S^1} a^{-1}\mut\frac{3k^2 + 1}{4} \, d\theta d\tau
\endaligned
$$
and, by Gronwall's inequality,
$$
 \int_{S^1} a^{-1}\mut|t|\,d\theta \leq C(t_0) \, \Bigg(\frac{t_0}{t}\Bigg)^{(3k^2 + 1)/4} 
$$
with 
$$
C(t_0) = \int_{S^1} a_0^{-1}\mut_0|t_0| (1 + k^2 v_0^2)/(1 - v_0^2) \, d\theta. 
$$
Therefore, in view of 
$(a^{-1})_t = a^{-1}\mut t (1 - k^2)$, we arrive at 
$$
\int_{S^1} a^{-1}(t, \theta)\,d\theta 
\geq \int_{S^1} a_0^{-1}(\theta)\,d\theta + \frac {4}{3} \, C(t_0)|t_0| \, \, \Bigg(\big(\frac{t}{t_0}\big)^{3(1 - k^2)/4} - 1\Bigg), 
$$
which implies \eqref{keyassumption} as long as the right-hand side above remains positive, that is, 
either on a sufficiently small interval $[t_0, t_c)$ (since the lower bound is 
always positive if $t$ is close to $t_0$), 
or else on the whole interval 
$[t_0, 0)$ if the small mass condition \eqref{eqn:70} holds.  We conclude by applying Proposition~\ref{prop:1}, which
 provides us with the required uniform bound in order to  
complete the proof of Theorem~\ref{maintheo2}. 
\end{proof}
 

\section*{Acknowledgments}  

The authors were supported by the Agence Nationale de la
Recherche via the grants ANR 2006-2--134423 (Mathematical Methods in General Relativity) and 
ANR SIMI-1-003-01 (Mathematical General Relativity. Analysis and geometry of spacetimes with low regularity).  


\end{document}